\DeclareMathOperator{\supp}{supp}
\newcommand{\frameworkname}{GUARD}
\title{\frameworkname: Constructing Realistic Two-Player Matrix and Security Games for Benchmarking Game-Theoretic Algorithms}
\author{%
  Noah Krever, Jakub \v{C}ern\'{y}, Mo\"{i}se Blanchard,
  Christian Kroer  \\
  Department of Industrial Engineering and Operations Research\\
  Columbia University\\
  New York, NY 10027 \\
  \texttt{\{ndk2115,jakub.cerny,mb5414,christian.kroer\}@columbia.edu} \\
}
\begin{document}

\maketitle

\begin{abstract}
Game-theoretic algorithms are commonly benchmarked on recreational games, classical constructs from economic theory such as congestion and dispersion games, or entirely random game instances. While the past two decades have seen the rise of security games -- grounded in real-world scenarios like patrolling and infrastructure protection -- their practical evaluation has been hindered by limited access to the datasets used to generate them. In particular, although the structural components of these games (e.g., patrol paths derived from maps) can be replicated, the critical data defining target values -- central to utility modeling -- remain inaccessible. In this paper, we introduce a flexible framework that leverages open-access datasets to generate realistic matrix and  security game instances. These include animal movement data for modeling anti-poaching scenarios and demographic and infrastructure data for infrastructure protection. Our framework allows users to customize utility functions and game parameters, while also offering a suite of preconfigured instances. We provide theoretical results highlighting the degeneracy and limitations of benchmarking on random games, and empirically compare our generated games against random baselines across a variety of standard algorithms for computing Nash and Stackelberg equilibria, including linear programming, incremental strategy generation, and self-play with no-regret learners.
\end{abstract}

\newcommand{\trw}{\text{\small TRW}}
\newcommand{\maxcut}{\text{\small MAXCUT}}
\newcommand{\maxcsp}{\text{\small MAXCSP}}
\newcommand{\suol}{\text{SUOL}}
\newcommand{\wuol}{\text{WUOL}}
\newcommand{\crf}{\text{CRF}}
\newcommand{\sual}{\text{SUAL}}
\newcommand{\suil}{\text{SUIL}}
\newcommand{\fs}{\text{FS}}
\newcommand{\fmv}{{\text{FMV}}}
\newcommand{\smv}{{\text{SMV}}}
\newcommand{\wsmv}{{\text{WSMV}}}
\newcommand{\trwp}{\text{\small TRW}^\prime}
\newcommand{\alg}{\text{ALG}}
\newcommand{\rhos}{\rho^\star}
\newcommand{\brhos}{\brho^\star}
\newcommand{\bzero}{{\mathbf 0}}
\newcommand{\bs}{{\mathbf s}}
\newcommand{\bw}{{\mathbf w}}
\newcommand{\bws}{\bw^\star}
\newcommand{\ws}{w^\star}
\newcommand{\Prt}{{\mathsf {Part}}}
\newcommand{\Fs}{F^\star}

\newcommand{\Hs}{{\mathsf H} }

\newcommand{\hL}{\hat{L}}
\newcommand{\hU}{\hat{U}}
\newcommand{\hu}{\hat{u}}

\newcommand{\bu}{{\mathbf u}}
\newcommand{\ubf}{{\mathbf u}}
\newcommand{\hbu}{\hat{\bu}}

\newcommand{\primal}{\textbf{Primal}}
\newcommand{\dual}{\textbf{Dual}}

\newcommand{\Ptree}{{\sf P}^{\text{tree}}}
\newcommand{\bv}{{\mathbf v}}

\newcommand{\bq}{\boldsymbol q}

\newcommand{\rvM}{\text{M}}

\newcommand{\Acal}{\mathcal{A}}
\newcommand{\Bcal}{\mathcal{B}}
\newcommand{\Ccal}{\mathcal{C}}
\newcommand{\Dcal}{\mathcal{D}}
\newcommand{\Ecal}{\mathcal{E}}
\newcommand{\Fcal}{\mathcal{F}}
\newcommand{\Gcal}{\mathcal{G}}
\newcommand{\Hcal}{\mathcal{H}}
\newcommand{\Ical}{\mathcal{I}}
\newcommand{\Jcal}{\mathcal{J}}
\newcommand{\Kcal}{\mathcal{K}}
\newcommand{\Lcal}{\mathcal{L}}
\newcommand{\Mcal}{\mathcal{M}}
\newcommand{\Ncal}{\mathcal{N}}
\newcommand{\Pcal}{\mathcal{P}}
\newcommand{\Scal}{\mathcal{S}}
\newcommand{\Tcal}{\mathcal{T}}
\newcommand{\Ucal}{\mathcal{U}}
\newcommand{\Vcal}{\mathcal{V}}
\newcommand{\Wcal}{\mathcal{W}}
\newcommand{\Xcal}{\mathcal{X}}
\newcommand{\Ycal}{\mathcal{Y}}
\newcommand{\Ocal}{\mathcal{O}}
\newcommand{\Qcal}{\mathcal{Q}}
\newcommand{\Rcal}{\mathcal{R}}

\newcommand{\brho}{\boldsymbol{\rho}}

\newcommand{\Cbb}{\mathbb{C}}
\newcommand{\Ebb}{\mathbb{E}}
\newcommand{\Nbb}{\mathbb{N}}
\newcommand{\Pbb}{\mathbb{P}}
\newcommand{\Qbb}{\mathbb{Q}}
\newcommand{\Rbb}{\mathbb{R}}
\newcommand{\Sbb}{\mathbb{S}}
\newcommand{\Vbb}{\mathbb{V}}
\newcommand{\Wbb}{\mathbb{W}}
\newcommand{\Xbb}{\mathbb{X}}
\newcommand{\Ybb}{\mathbb{Y}}
\newcommand{\Zbb}{\mathbb{Z}}

\newcommand{\Rbbp}{\Rbb_+}

\newcommand{\btheta}{\boldsymbol{\theta}}

\newcommand{\Pb}{\mathbb{P}}

\newcommand{\hPhi}{\widehat{\Phi}}

\newcommand{\Sigmah}{\widehat{\Sigma}}
\newcommand{\thetah}{\widehat{\theta}}

\newcommand{\indep}{\perp \!\!\! \perp}
\newcommand{\notindep}{\not\!\perp\!\!\!\perp}

\newcommand{\one}{\mathbbm{1}}
\newcommand{\1}{\mathds{1}}
\newcommand{\aprx}{\alpha}

\newcommand{\ST}{\Tcal(\Gcal)}
\newcommand{\x}{\mathsf{x}}
\newcommand{\y}{\mathsf{y}}
\newcommand{\Ybf}{\textbf{Y}}
\newcommand{\smiddle}[1]{\;\middle#1\;}

\definecolor{dark_red}{rgb}{0.2,0,0}
\newcommand{\detail}[1]{\textcolor{dark_red}{#1}}

\newcommand{\ds}[1]{{\color{red} #1}}
\newcommand{\rc}[1]{{\color{green} #1}}

\newcommand{\mb}[1]{\ensuremath{\boldsymbol{#1}}}

\newcommand{\metric}{\rho}
\newcommand{\proj}{\text{Proj}}

\newcommand{\paren}[1]{\left( #1 \right)}
\newcommand{\sqb}[1]{\left[ #1 \right]}
\newcommand{\set}[1]{\left\{ #1 \right\}}
\newcommand{\floor}[1]{\left\lfloor #1 \right\rfloor}
\newcommand{\ceil}[1]{\left\lceil #1 \right\rceil}
\newcommand{\abs}[1]{\left|#1\right|}
\newcommand{\norm}[1]{\left\|#1\right\|}

\newcommand{\Pareto}{\Pcal}
\newcommand{\Binom}{\text{Binom}}
\newcommand{\Val}{\textnormal{Val}}

\newcommand{\comment}[1]{}

\section{Introduction}
\label{sec:intro}

Equilibrium finding in games has become a cornerstone of artificial intelligence, powering breakthroughs in recreational play (e.g., poker, Stratego, Diplomacy)~\cite{bowling2015heads,brown2018superhuman,perolat2022mastering,meta2022human} as well as critical applications in security and logistics \citep{pita2008deployed,fang2015security,tambe2011security,jain2010software,cerny2024contested}. At the heart of these successes are efficient algorithms for computing game-theoretic equilibria. To develop, evaluate, and compare such methods, practitioners typically turn to benchmarks drawn from recreational games, classical economic models, or pseudo-random payoff matrices. Yet among these standard baselines, the model that has seen the greatest real-world success, the Stackelberg security game model, is conspicuously absent. 

Security games, where a defender allocates limited resources across targets and an attacker selects among them, have guided deployment of patrols across the eight terminals of Los Angeles International Airport~\citep{pita2008deployed}, helped protect biodiversity across 2,500 km$^2$ of conservation area~\citep{fang2015security,fang2016deploying,fang2016green} and screening more than 800 million US air passengers annually \citep{brown2016one}. Extensions have been developed for traffic monitoring, drug interdiction, and cybersecurity~\citep{sinha2018stackelberg}. Practical evaluation of game-theoretic algorithms on security games, however, has been hindered by limited access to the most valuable component: target-value utilities. Although the structural components--map-based patrol routes and scheduling constraints--can be reconstructed, till this day, the utilities that encode real-world strategic trade-offs remain largely inaccessible to algorithm developers.

To fill this gap in benchmarking ability, we introduce \frameworkname,\footnote{Code open-sourced and publicly-available at \url{https://github.com/CoffeeAndConvexity/GUARD}.} a flexible framework that constructs realistic matrix and security game instances from open-access data. Drawing on readily available sources such as animal movement records for anti-poaching scenarios and demographic or infrastructure data for critical asset protection, our framework supports both custom game specification and access to a library of ready-made instances, enabling meaningful and reproducible evaluation of game-theoretic algorithms on security-inspired scenarios. The framework provides a more realistic real-world resource allocation and security grounded alternative to existing game generators such as Gamut~\cite{nudelman2004run}, which includes more traditional and classical economic games, or OpenSpiel~\cite{LanctotEtAl2019OpenSpiel}, which offers a wide range of recreational and synthetic benchmarks. Our game instances can be exported in formats compatible with tools like OpenSpiel and Gambit~\cite{savani2025gambit}, and can also provide target values for integration into pursuit-evasion frameworks such as GraphChase~\cite{zhuang2025solving}.

In addition, we examine the implications of using randomly generated games as benchmarks, a common approach in the literature. In Section~\ref{sec:random_games}, we show that Stackelberg equilibria in random games exhibit very sparse or degenerate solutions, which are not expected in realistic games. Specifically, we show that for random utilities drawn from the uniform distribution, near-optimal utility for the defender can already be achieved with pure (or very sparse) strategies for random normal-form games, and with very few resources for random security games. Notably, this suggests that random Stackelberg games typically give unreasonable advantage to the defender, effectively ignoring the impact of the adversary. In Section~\ref{sec:exps}, we complement these findings with empirical results, comparing our realistic game instances with random ones. We observe similar degeneracy not only in random general-sum games, but also in zero-sum settings and even in games played on realistic maps, where randomness is limited to the assignment of target values.

\section{Game Representations and Solution Concepts}

\textbf{Normal-form games.} We study two-player games in which Player 1 has $n$ actions and Player 2 has $m$ actions. In the normal-form (or matrix) representation, each player’s payoffs are encoded by matrices $A, B \in \mathbb{R}^{n \times m}$. Entry $A_{ij}$ (respectively $B_{ij}$) gives the payoff to Player 1 (resp. Player 2) when the players choose actions $i$ and $j$. Each player may randomize over their actions; their mixed strategy spaces are the simplexes $\Delta^n$ and $\Delta^m$, where $\Delta^n = \{ x \in \Rplus^n \mid \sum_{i=1}^n x_i = 1 \}$. For any strategy $x \in \mathbb{R}^d$, we define its support as $\supp(x) = \{ i \in [d] \mid x_i \neq 0 \}$. The \textit{best responses} of Player 2 to a mixed strategy $x \in \Delta^n$ is the set $\text{BR}_2(x) = \argmax_{y \in [m]} x^T B y$; ties are resolved arbitrarily. Player 1's best responses to strategies of Player 2 are denoted analogously as $\text{BR}_1(y)$.

\textbf{Security games.} Beyond matrix games, we also consider security games, which model defender-attacker interactions over a set of targets $T$. The \textit{defender} (Player 1) controls a set of resources $R$, each of which can be assigned to a schedule from a set $S \subseteq 2^T$, where a schedule specifies a subset of targets simultaneously protected by a single resource. A target is deemed \enquote{covered} if some assigned schedule includes it. The \textit{attacker} (Player 2) selects targets to attack. In the standard formulation, the players’ utilities depend only on whether the attacked target is covered. The defender receives a utility of $u^c_d(t)$ if a covered target $t$ is attacked, and $u^u_d(t)$ if it is not covered. The attacker similarly obtains $u^c_a(t)$ or $u^u_a(t)$ depending on coverage. If multiple attacks are allowed, utilities are aggregated equally across targets. When the resources in $R$ are not identical, a mapping $A : R \rightarrow S$ may define the valid schedules for each resource. We refer to the tuple $(A, u^u, u^c)$ as the \textit{schedule form} of the game. It can be expanded into a normal-form game by enumerating all (exponentially many) actions.

\textbf{Nash equilibrium (NE) in zero-sum games.}
In a zero-sum game, $A=-B$. Finding a Nash equilibrium $(x^*,y^*)$ in a two-player zero-sum game can be formulated as a saddle-point problem using the first player's payoff matrix. In particular, von Neumann's \textit{minimax theorem} shows that $(x^*, y^*)$ is achieved by a saddle point of $x^T A y$,
\begin{align*}
    \max_{x\in\Delta^n} \min_{y \in \Delta^m} x^T A y = \min_{y \in \Delta^m} \max_{x\in\Delta^n} x^T A y = {x^*}^T A y^*.
\label{eq:minimax-theorem}
\end{align*}
The minimax theorem shows that $(x^*, y^*)$ is a NE if and only if each mixed strategy optimizes the players' payoff assuming that the other player best responds. Two-player zero-sum games can be solved efficiently using a variety of methods, including fictitious play~\cite{brown:fp1951}, linear programming~\cite{shoham2008multiagent}, strategy generation~\cite{mcmahan2003planning}, or self-play with no-regret learners~\cite{hart2000simple}.

\textbf{Strong Stackelberg equilibrium (SSE) in general-sum games.}
In general-sum two-player games, where $A$ and $B$ may be arbitrary $n \times m$ matrices, a Stackelberg equilibrium $(x^*, y^*)$ can be formulated as a bilevel problem~\cite{von2004leadership,conitzer2006computing}:
\begin{align*}
    x^* = \argmax_{x\in\Delta^n} x^TAy^*(x), \text{where} \quad
    y^*(x) \in \text{BR}_2(x) = \argmax_{y\in[m]} x^TBy.
\end{align*}
The equilibrium is considered \textit{strong} if, in addition, Player 2 breaks ties in favor of Player 1. In zero-sum games, NE and SSE coincide for Player 1, i.e., the optimal SSE strategy for Player 1 is also a NE strategy. Beyond zero-sum, SSE can be computed efficiently using a sequence of linear programs in both two-player normal-form games~\cite{conitzer2006computing} and security games with single-target schedules~\cite{korzhyk2010complexity}.



\section{Limitations of Benchmarking on Random Games}
\label{sec:random_games}


Random games often admit very sparse equilibria, making them tractable in ways that real‐world instances are not. For example, while finding a Nash equilibrium in two‐player general‐sum games is PPAD‐hard~\citep{chen2006settling}, random games with suitable payoff distributions can be solved in expected polynomial time~\citep{barany2007nash}. More broadly, multi‐player games with i.i.d. payoffs almost surely have a pure Nash equilibrium (approaching probability $1-1/e$ as action sets grow)~\citep{goldberg1968probability,dresher1970probability}, and two‐player Gaussian or uniform games admit equilibria supported on just two actions with probability $1 - O(1/\log n)$~\citep{barany2007nash}. In fact, the chance that a random two‐player game has no equilibrium of support size $k$ decays exponentially in $k$, so brute‐force support enumeration runs in expected polynomial time. We demonstrate that the same degeneracy arises in SSE too.


\textbf{Random general-sum normal-form games.} 
We assume that all elements in the utility matrices $A$ and $B$ are sampled i.i.d.\ uniformly in $[0,1]$. For any Player 1 strategy $x\in\Delta^n$, let us denote by $V(x)$ any possible value Player 1 can obtain after Player 2 best-responds to $x$, i.e., $V(x)\in\{x^TAy, y\in \text{BR}_2(x)\}$. 
Since all payoffs are uniform in $[0,1]$, we always have $V(x)\in[0,1]$. In particular, the SSE satisfies $V(x^*)\leq 1$. We analyze the performance of sparse strategies for Player 1. To this end, we define by $x^*(k)$ their best $k$-sparse Stackelberg equilibrium strategy, i. e., $\supp(x^*(k))\leq k$, that maximizes the value $V$ under any fixed tie-breaking rule. The proof is given in \cref{subsection:proof_stackelberg}.



\begin{theorem}
\label{thm:full_SSE}
    Let $A,B$ be sampled i.i.d.\ uniformly in $[0,1]$. Then (i)
        $V(x^*(1))\sim \text{Beta}(n,1)$, and for every $C\geq 0$,
    \(
        \mathbb E_{A,B}[V(x^*(1))] = 1-\frac{1}{n+1}\) and \(
        \mathbb P_{A,B}\sqb{V(x^*(1)) < 1-\frac{C}{n}} \leq e^{-C},
    \) and (ii) 
    there exist universal constants $c_0,c_1,c_2>0$ such that
    \(
        \frac{c_1\sqrt{\log n}}{n^{3/2}} \geq 1-\Ebb_{A,B}[V(x^*(c_0 \log n))] \geq 1-\Ebb_{A,B}[V(x^*)] \geq \frac{c_2\sqrt{\log n}}{n^{3/2}}.
    \)
\end{theorem}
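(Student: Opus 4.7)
First I would observe that against any pure strategy $e_i$, the attacker's strong best response is $j_i^\star := \arg\max_j B_{ij}$, which depends only on row $i$ of $B$. Because $A$ and $B$ are independent, each $A_{i, j_i^\star}$ is marginally $\text{Uniform}[0,1]$, and by row independence of $(A, B)$, the family $\{A_{i, j_i^\star}\}_{i=1}^n$ is i.i.d. Hence $V(x^*(1)) = \max_i A_{i, j_i^\star}$ is the maximum of $n$ i.i.d.\ Uniform$[0,1]$ draws, whose CDF is $t^n$ on $[0,1]$, i.e., the $\text{Beta}(n,1)$ distribution. The mean $n/(n+1) = 1 - 1/(n+1)$ and tail $\Pbb[V(x^*(1)) < 1 - C/n] = (1 - C/n)^n \leq e^{-C}$ follow immediately.

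For part (ii), the middle inequality is immediate, since $\{x : |\supp(x)| \leq c_0 \log n\} \subseteq \Delta^n$ implies $V(x^*(c_0 \log n)) \leq V(x^*)$ pointwise.

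\textbf{Part (ii), constructive (upper) bound on $1 - \Ebb[V(x^*(c_0 \log n))]$.} I would exhibit an explicit $c_0 \log n$-sparse strategy close to the unrestricted optimum. The construction: pick a column $j$ with $M_j := \max_i A_{ij}$ very close to $1$; place the bulk of mass on $i_j^\star := \arg\max_i A_{ij}$; and distribute a carefully chosen inducing mass over $c_0 \log n - 1$ additional supporting rows selected greedily to maximize the best-response margin $\sum_{i \in S}(B_{ij} - B_{ij'})$ against every threatening $j' \neq j$. Extreme-value estimates on the maximum $k$-subset sum of $n$ i.i.d.\ uniform differences (which grow like $\sqrt{k \log(n/k)}$) allow one to bound the required inducing mass, and balancing it against the value loss yields the $\sqrt{\log n}/n^{3/2}$ rate after optimizing over $j$.

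\textbf{Part (ii), obstructive (lower) bound on $1 - \Ebb[V(x^*)]$.} I would decompose $V(x^*) = \max_j V_j^\star$ via the Conitzer--Sandholm LP reformulation, where $V_j^\star$ is the optimum over strategies inducing attacker response $j$. Each optimum is attained at a vertex parametrized by (support set, set of tight best-response constraints); I would enumerate these combinatorial vertex types and apply anti-concentration estimates for each vertex's value, viewed as a rational function of the uniform entries of $A$ and $B$. A union bound then shows $\Pbb[V(x^*) \geq 1 - \eta]$ is small at the rate $\eta = \Theta(\sqrt{\log n}/n^{3/2})$. This matching lower bound will be the hardest step: the vertex values share the same $A, B$ and are highly correlated, so a naive union bound over the combinatorial explosion of vertex types loses polynomial factors. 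Achieving the sharp $\sqrt{\log n}/n^{3/2}$ scaling requires a careful tradeoff between the vertex count (the source of the $\sqrt{\log n}$) and sharp anti-concentration of each vertex value near $1$ (the source of the $n^{-3/2}$).
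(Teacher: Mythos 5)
Your part (i) is correct and identical to the paper's argument: the best response to $e_i$ depends only on row $i$ of $B$, so $A_{i,j(i)}$ are i.i.d.\ uniform and $V(x^*(1))$ is their maximum. Your sparse construction for the first inequality of (ii) also follows the paper's route in outline (bulk mass on a row whose $A$-entry is within $\delta_n\approx\sqrt{\log n}/n^{3/2}$ of $1$, a small inducing mass spread over $O(\log n)$ supporting rows, balance the two losses). One technical remark there: the reason $O(\log n)$ supporting rows suffice is not an extreme-value bound on maximum $k$-subset sums of differences; rather, one selects rows with $B_{i,j^*}\ge 3/4$ (independently of the other columns of $B$), so that by Hoeffding plus a union bound over the $n-1$ competing columns the \emph{average} margin per unit of inducing mass is a uniform constant. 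A greedy selection maximizing margins against all $j'$ simultaneously would not give the $\sqrt{k\log(n/k)}$ growth you invoke, and is not needed. The other crucial choice you leave implicit is \emph{which} near-optimal pair to anchor on: the paper takes $(i^*,j^*)=\arg\max B_{i,j}$ over the $\Theta(\sqrt{n\log n})$ pairs with $A_{i,j}\ge 1-\delta_n$, which makes the follower's deficit $1-B_{i^*,j^*}\approx 1/\sqrt{n\log n}$ and hence the inducing mass that small; without this step the deficit is $\Theta(1)$ and the construction fails.

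The genuine gap is in the lower bound on $1-\Ebb[V(x^*)]$. Enumerating LP vertices of the Conitzer--Sandholm reformulation and union-bounding anti-concentration estimates over "vertex types" cannot work as stated: the number of (support set, tight-constraint set) pairs is exponential in $n$, the vertex values are rational functions of the same $A,B$ and heavily correlated, and you yourself note that a naive union bound loses polynomial factors --- but you do not supply the idea that repairs it. The paper's resolution is a localization step that you are missing: defining $S=\{(i,j):A_{i,j}\ge 1-\epsilon_n\}$ with $\epsilon_n=c\sqrt{\log n}/n^{3/2}$ (so $|S|=\Theta(\sqrt{n\log n})$ with high probability), it shows that \emph{any} $x$ with $V(x)>1-\epsilon_n$ must place essentially all of its mass on the single row indexed by some $(i,j)\in S$, and the induced column must be that $j$. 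This collapses the union bound from exponentially many vertices to the $O(\sqrt{n\log n})$ elements of $S$, after which the argument runs on the order statistics of $\{B_{i,j}:(i,j)\in S\}$ (the $s$-th largest has deficit $\approx s/\sqrt{n\log n}$) together with the cost, in $A$-value, of the deviation needed to keep column $j$ a best response. Without this reduction the plan you describe does not yield the $\sqrt{\log n}/n^{3/2}$ rate, so this step of the proposal is not a proof.
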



The theorem’s first part shows Player 1 can attain a value arbitrarily close to 1 using a pure strategy, while the second part extends this to optimality up to constant factors with an $\Ocal(\log n)$-sparse support. Its proof is constructive: concentrate weight on one action that already nearly maximizes Player 1’s payoff, then mix in $\Ocal(\log n)$ additional actions to compensate. Consequently, one can find an optimal SSE in expectation up to constant factors simply by using this explicit sparse strategy; no support-search is required.

\textbf{Random general-sum security games.} Given $T$ targets and $R$ identical defender resources, we consider a case when schedules form a partition of the $T$ targets\footnote{In particular, this includes also all simple schedules where every $S_i$ is a singleton.}. That is, the resources can choose from $k$ non-empty schedules $S_1,\ldots,S_k\subseteq [T]$ which satisfy $\bigcup_{i\in[k]}S_i =[T]$ and $S_i\cap S_j=\emptyset$ for any $i,j\in[k]$ with $i\neq j$. 
We consider the following random model for utilities: $u_d^c(t) =u_a^c(t)=0$ for all targets $t\in[T]$, while $u_d^u(t)$ and $u_a^u(t)$ are sampled i.i.d. uniformly in $[-1,0]$ and $[0,1]$, respectively. 
We expect our results to be relatively robust to the choice of distributions. 

\begin{theorem}\label{thm:security}
    Let $u_d^u$ and $u_a^u$ be sampled i.i.d. uniformly in $[-1,0]$ and $[0,1]$. For $T$ targets and $R$ defender resources, let $\emptyset\subsetneq S_1,\ldots,S_k\subseteq [T]$, $1\leq k < R$, form a partition of the targets, and denote $\alpha:=\frac{\max_{i\in[k]}|S_i| }{\min_{i\in[k]}|S_i|}$. 
    Then there exists a constant $c>0$ such that
    $
         \Ebb_u[V(x^*)] \geq -c\paren{ \sqrt{\frac{\alpha}{RT}} + \frac{1}{k} } .
    $
\end{theorem}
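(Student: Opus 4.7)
The plan is to exhibit an explicit defender mixed strategy $x$ and show its expected SSE value meets the bound. For each schedule $S_i$, let $M_i = \max_{t\in S_i} u_a^u(t)$ and $t^*_i \in \arg\max_{t\in S_i} u_a^u(t)$. I would set coverage probabilities $c_i = 1 - A/M_i$, with $A = (k-R)/\sum_j 1/M_j$ chosen so that $\sum_i c_i = R$. By construction every schedule yields identical attacker value $(1-c_i)M_i = A$, so the attacker is indifferent across schedules and, by SSE tie-breaking, the defender selects the schedule $i^*$ minimizing $|u_d^u(t^*_i)|/M_i$, incurring loss $A \cdot L_{(1)}/M_{i^*}$ where $L_{(1)} = \min_i |u_d^u(t^*_i)|$.

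In the regime where the strategy is feasible (i.e., $c_i\in[0,1]$, equivalent to $A \le \min_i M_i$), the analysis exploits independence of $u_a^u$ and $u_d^u$. Conditional on $u_a^u$, the $L_i = |u_d^u(t^*_i)|$ are i.i.d.\ uniform in $[0,1]$, so the argmin index $i^*$ is uniform in $[k]$, independent of $L_{(1)}$. This gives $E[1/M_{i^*}\mid u_a] = (1/k)\sum_j 1/M_j$ and $E[L_{(1)}] = 1/(k+1)$, leading via a clean cancellation to
\[
E[\text{loss}\mid u_a, \text{feasible}] \;=\; A\cdot \frac{1}{k+1}\cdot \frac{1}{k}\sum_j \tfrac{1}{M_j} \;=\; \frac{k-R}{k(k+1)} \;\le\; \frac{1}{k},
\]
which directly produces the $1/k$ term.

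For the infeasibility event, I would fall back to a trivial strategy (e.g., uniform $c_i = R/k$), bound its loss by a universal constant, and control the overall contribution by the probability of infeasibility. Since $M_i$ is the maximum of $|S_i| \ge T/(k\alpha)$ i.i.d.\ $\text{Unif}[0,1]$ samples, a concentration bound gives $\Pbb(M_i < 1 - \delta) \leq e^{-\delta T/(k\alpha)}$, and a union bound over the $k$ schedules controls the event $\{A > \min_i M_i\}$. A careful optimization of $\delta$, or integration of the loss against the full distribution of $\min_i M_i$, should yield that the infeasibility contribution is $O(\sqrt{\alpha/(RT)})$, matching the first term.

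The hardest part is extracting the exact $\sqrt{\alpha/(RT)}$ rate rather than a weaker bound. A naive union bound gives exponential decay in $RT/(\alpha k^2)$, which is sharper than $\sqrt{\alpha/(RT)}$ only when $RT$ is not too small; the square-root scaling seems to emerge from a more delicate trade-off---essentially integrating defender loss against the tail of $\min_i M_i$, or parameterizing a continuous family of truncated-equalization strategies and optimizing the feasibility margin. Getting both regimes ($RT \gtrsim \alpha k^2$ and $RT \lesssim \alpha k^2$) to glue together into a single clean bound of the stated form, rather than just stating two separate sub-bounds, will be the real technical obstacle.
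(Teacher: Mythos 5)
Your construction and analysis for the feasible regime are essentially sound, modulo a small slip: the schedule chosen by the tie-breaking attacker is $\arg\min_i L_i/M_i$, which is not in general the index attaining $L_{(1)}=\min_i L_i$, so the loss is not literally $A\,L_{(1)}/M_{i^*}$. This is repairable --- since the attacker minimizes the loss, you may upper-bound it by its value at $\hat\jmath=\arg\min_i L_i$, and because $\hat\jmath$ is uniform on $[k]$ and independent of $L_{(1)}$ conditional on $u_a^u$, your cancellation $\Ebb[\text{loss}]\le \frac{k-R}{k(k+1)}$ goes through. The genuine gap is exactly the one you flag at the end and do not resolve. Full equalization is feasible only when $A\le\min_i M_i$; since $1-\min_i M_i$ is typically of order $\alpha k\log k/T$ while the available slack is of order $R/k$, feasibility requires roughly $RT\gtrsim\alpha k^2$. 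In the intermediate regime $\alpha\lesssim RT\lesssim\alpha k^2$ the infeasibility probability is close to $1$, so ``constant-loss fallback times probability of infeasibility'' yields only an $O(1)$ bound, and no optimization of $\delta$ in a union bound can rescue this: the event you need to control is not rare. Since this is precisely the regime in which $\sqrt{\alpha/(RT)}$ dominates $1/k$, the main quantitative content of the theorem is not established by the proposal.

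The paper's proof is the ``truncated equalization'' you gesture at but do not carry out, and it handles both regimes in one stroke. Order the schedules by decreasing attacker value $v_{i(1)}\ge\cdots\ge v_{i(k)}$, where $v_i=\max_{t\in S_i}u_a^u(t)$, and let $L_{\max}$ be the largest $l$ with $\sum_{s\le l}(v_{i(s)}-v_{i(l)})/v_{i(s)}<R$. For any $l^*\le L_{\max}$ the defender can cover the top $l^*-1$ schedules just enough (plus a margin $\delta>0$) to push the attacker strictly onto $t_{i(l^*)}$, which is left uncovered; choosing $l^*$ to maximize $u_d^u(t_{i(l^*)})$ makes the value, conditional on $L_{\max}$, the maximum of $L_{\max}$ i.i.d.\ uniforms on $[-1,0]$, i.e.\ $-1/(L_{\max}+1)$ in expectation. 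A concentration bound on the order statistics $v_{i(l)}$ (using $\Pbb[v_i\ge 1-z]\ge\tfrac12\min(z|S_i|,1)$ with $|S_i|\ge T/(\alpha k)$, plus Bernstein) gives $v_{i(l)}\ge 1-4\alpha l/T$ with probability $1-e^{-l/5}$, hence $L_{\max}\gtrsim\min(k,\sqrt{RT/\alpha})$, which yields both terms of the stated bound simultaneously. Your strategy is the special case $l^*=k$, usable only when $L_{\max}=k$; to complete your proof you would need to supply this truncation argument (or an equivalent), which is where all of the $\sqrt{\alpha/(RT)}$ rate actually lives.
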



The proof is given in \cref{subsec:proof_security}.
When \(R \ge k\), we have \(V(x^*) = 0\), and even if \(R \in (0,1]\), we need only use a single resource with probability at most \(R\). In standard settings, the dominant error term is \(-c\sqrt{\frac{\alpha}{RT}}\), with a smaller correction when schedules are few. Consequently, for random uniform security games, Player 1 achieves near-optimal utility with very few resources: if \(\alpha = \mathcal{O}(1)\), then \(\mathbb{E}[V(x^*)] \gtrsim -\mathcal{O}\bigl(1/\sqrt{RT}\bigr)\), so a single resource already yields expected utility \(-\mathcal{O}(1/\sqrt{T})\). Moreover, one can achieve near-maximum utility by using that resource only with probability \(\mathcal{O}(1/\sqrt{T})\), a vanishingly small rate, a behavior unlikely in realistic settings.

\section{Design of the Framework}
\label{sec:framework}

The \frameworkname\ framework builds realistic security‐game instances directly from publicly available, real‐world data. As shown in Figure 1, the framework is organized into three core components: \textit{Data}, \textit{Games}, and \textit{Solvers}. These components interact to define game instances from raw data inputs, model their structure and rules, and compute equilibrium solutions.
Open‐source datasets feed into a hierarchical game‐class structure (Graph Game $\to$ Security Game $\to$ Domain‐Specific Game), enabling users to instantiate realistic instances. Currently, \frameworkname\ supports three data streams--animal movement data, demographic data, and map data--as well as two domain‐specific game implementations: Green Security Games (GSGs) and Infrastructure Security Games (ISGs). Users may construct games in either normal‐form (NFG) or schedule‐form (SFG), with all generation parameters detailed in~\ref{subsection:dsg_params}. In addition, we provide a library of preset, high‐fidelity game instances to jump‐start experimentation. Every game created within \frameworkname\ can be exported to standard formats (.pkl, .h5, and native \textit{Gambit}/\textit{Gamut} .nfg/.game files) and, conversely, imported via our Loaded Game class. While export functionality ensures compatibility with external libraries such as \textit{Gambit} and \textit{OpenSpiel}, \frameworkname\ also includes multiple built‐in solvers for Nash and Stackelberg equilibria. 


\begin{figure}[t]
       \centering
       \includegraphics[width=\textwidth]{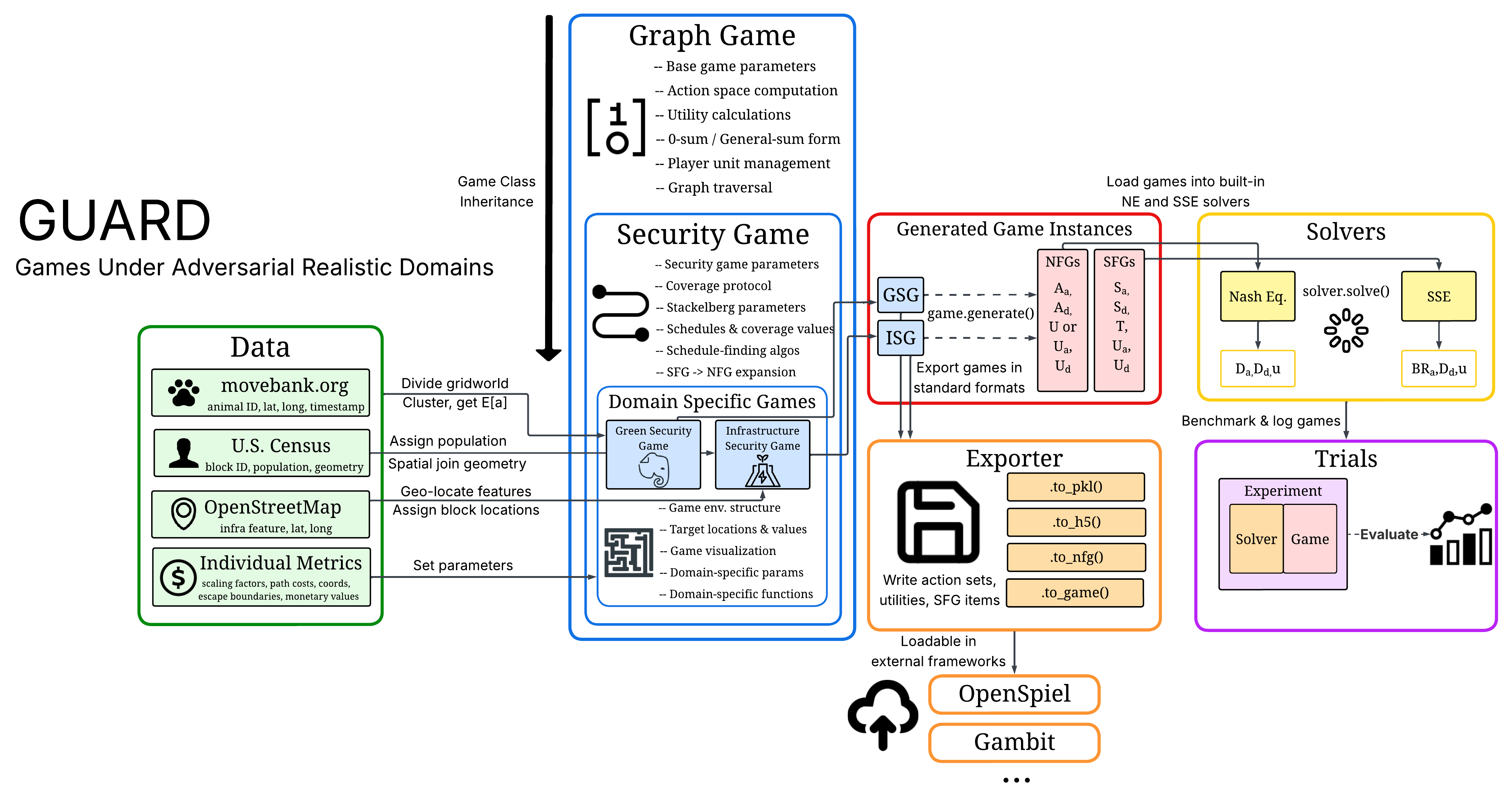}
       \caption{The structure of the \frameworkname \ framework.}
       \label{fig:framework} 
   \end{figure}

\subsection{Data}

Once processed, real-world input data populates the hierarchical game class structure, providing detailed parameter settings, target specifics, and environment constraints while preserving flexibility for user customization. Here we introduce the primary data sources used to instantiate realistic security games in the \frameworkname \ framework, while additional metrics drawn from real-world domain-specific literature are outlined in Section~\ref{sec:exps} and Appendices~\ref{subsection:sparsity_params},~\ref{subsection:iterative_params}, and ~\ref{subsection:sse_params}.

\textbf{Movebank.} To instantiate our GSGs, we utilize data from Movebank, a free, curated online platform for animal movement data maintained by the Max Planck Institute of Animal Behavior~\cite{movebank_platform,kays2022movebank}. Movebank hosts contributions from researchers globally and contains thousands of tracked individuals across diverse species and ecosystems. In our preset game instances, we aggregate nine GPS collar datasets for African Elephants from the \textit{Elephant Research – Lobéké National Park (Cameroon)} study~\cite{lobeke_elephants}, which captures the movement of six unique elephants from 2002 to 2007. This dataset includes 3,183 spatiotemporal observations, and after preprocessing (Appendix~\ref{subsection:preprocessing}), we keep the following fields: animal id, latitude, longitude, and timestamp. These data form the basis for target generation and scoring in our game environment. Notably, any animal location dataset either from Movebank or another source when preprocessed can be used as input data for a GSG.

\textbf{OpenStreetMap.} To represent infrastructure targets in ISG instances, we extract data from OpenStreetMap (OSM) using Overpass Turbo~\cite{overpass_turbo}, a query interface that supports structured data retrieval from OSM’s geospatial database (query in Appendix~\ref{subsection:nyc_infra_query}). In our preset games, we query all power grid, medical, judicial, educational, and police features—including nodes (single coordinate locations, such as power stations or health clinics) and ways (sequences of nodes representing linear structures, such as power lines)—within the New York City metropolitan area. This region encompasses all five NYC boroughs and adjacent areas. Once filtered and preprocessed, the dataset includes 23,607 infrastructure features with attributes including ID, type, latitude, and longitude. These points serve as target locations in our game instances. This OSM data is interchangeable with any geographic region feature data for any desired game instance.

\textbf{Census.}
In ISG instances, an attack on a given target must be assigned some value to both the defender and attacker. To do so, we use census data that aggregates regional population levels.
Specifically, we utilize publicly available block-level population data from the 2020 U.S. Census TIGER/Line Shapefiles~\cite{uscensus_tiger_line2020}. For our preset instances, the data provide both population counts and geospatial boundaries for 288,819 census blocks across the state of New York. Each entry includes a GEOID, population count, and a Polygon geometry specifying the spatial extent of the block.

\subsection{Game Classes}

\frameworkname\ follows a three-tiered game class hierarchy for flexible and extensible modeling. At the base, the \textit{Graph Game} class defines core graph functionality, target definitions, action spaces, and utility matrices. The intermediate \textit{Security Game} class adds domain-agnostic security mechanics, including defender/attacker movement, coverage rules, and NFG/SFG formulations. At the top, \textit{Domain-Specific Game} classes integrate real-world data to define context-specific target locations and values, environment structure, and constraints, fully instantiating a real-world Security Game while leveraging the computational functionality of the lower layers.

\textbf{Graph Game.} The Graph Game class is a general-purpose abstraction for modeling dynamic multi-agent normal-form games on graphs, and provides a flexible, modular interface for generating data-driven security games. Users can configure the game graph, number and type of player resources (moving/stationary), game length, sum type (zero/general), and interdiction rules. Higher-level games are built by passing domain-specific parameters to this base class, which handles resource instantiation, action space generation, and (if specified) construction of NFG utility matrices according to zero or general sum specification. After computation, these action sets and utilities are then made available in the higher-level game classes for further specialization and experimentation. Specific parameterization details for the base Graph Game layer can be found in Appendix~\ref{subsection:graph_game_params}.

The Graph Game initializes with a user-supplied directed graph \(G = (V, E)\), which serves as the structural foundation for strategy computation. Path enumeration is performed via time-constrained depth-first search (Appendix~\ref{subsection:tc_dfs}), generating valid traversal-based action sets and enabling utility evaluation through node visitation and distance-based interdiction dynamics. In this base layer, the graph is time-expanded to support multi-timestep play. Thus, each player’s action is a fixed-length 2D array encoding resource positions over \(\mathcal{T}\) timesteps. For player \(i \in \{a, d\}\) with \(m_i\) resources, a full action is represented as a matrix \(A^i = (v_{j,\tau})_{j \in [m_i], \tau \in [\mathcal{T}]} \in V^{m_i \times \mathcal{T}}\), where \(v_{j,\tau}\) denotes the node occupied by resource \(j\) at timestep \(\tau\). In particular, stationary resources are encoded as constant rows in \(A^i\). We define \(\mathcal{A}_i\) as the set of all such possible actions for player $i$.

\comment{Specifically, each resource's action is a sequence of node locations over the game's \(T\) timesteps, and when a player has multiple resources, these individual resource actions are stacked together into a 2D array where each row corresponds to a timestep and each column corresponds to a distinct resource. Formally, for a game with \(T\) timesteps and \(m\) moving resources, the full action $A_i$ is represented as a matrix
\[
A_i = \begin{bmatrix}
v_{1,1} & v_{2,1} & \cdots & v_{m,1} \\
v_{1,2} & v_{2,2} & \cdots & v_{m,2} \\
\vdots & \vdots & & \vdots \\
v_{1,\tau} & v_{2,\tau} & \cdots & v_{m,\tau}
\end{bmatrix} \in \mathcal{A}_i \subseteq \mathcal{N}^{\tau \times m},
\]
where elements \(v_{j,\tau}\) of the time-expanded vector $v$ denote the node occupied by resource \(j\) at timestep \(\tau\), and \(\mathcal{A}_i\) is the set of all possible 2D actions for player $i$. For player who has both moving and stationary resources, stationary resources are represented as columns in the same 2D action array, but with their node positions repeated across all timesteps. Specifically, the action for a stationary resource \(j\) with fixed position \(v_{s_j}\) is given as a repeated vector \(s_j = \left[ v_{s_j}, v_{s_j}, \ldots, v_{s_j} \right]^T \in \mathcal{S}_j\), where \(\mathcal{S}_j\) is the set of possible stationary positions for resource \(j\). The complete single action for a player with \(m\) moving resources and \(s\) stationary resources is then a matrix with shape \(\tau \times (m + s)\), combining both moving and stationary columns while preserving the 2D structure.
}

\comment{
The utility for the attacker is determined by the total value of targets successfully reached by Player 2's attacking resources despite Player 1's defending resources. Formally:
\begin{equation*}
    U_a(A_a, A_d) = \sum_{t \in T} V_t \cdot \mathbb{I}[a \text{ captures }t] + C(A_a, A_d),
\end{equation*}
where $V_t$ is the value of target $t$, and $C(A_a, A_d)$ is an optional defender path cost term for general sum formulations. The precise definition of the indicator $\mathbb{I}[a \text{ reaches }t]$ depends whether the attacker $a$ is \emph{moving} or \emph{stationary}.

Moving attackers are interdicted by the defender whenever a defender resource is within some fixed capture radius \(r\):
\begin{equation*}
    \mathbb{I}[a \text{ is interdicted at time }\tau ] = \1\sqb{\exists j\in[m], \exists \tau \text{ such that } d(a, d, \tau) \le r},
\end{equation*}
where \(d(a, d, \tau)\) is the distance between the moving attacker \(a\) and defender \(d\) at time step \(\tau\).
}

The utility for each player in a normal form representation is determined by the targets successfully captured by Player 2's attacking resources, subject to interdiction by Player 1's defending resources. The attacker's utility $U_a$ is defined as the sum of the values of all targets successfully captured:
\[
U_a(A_a, A_d) = \sum_{t \in T} V_t \cdot \1\left[t \text{ is captured}\right] + C(A_a, A_d),
\]
where $V_t$ is the value of target $t$, $\1[t \text{ is captured}]$ is the indicator of whether the target is successfully reached by an attacker without being interdicted, and $C(A_a, A_d)$ is an optional defender path cost term in general sum formulations, which can account for traversal costs or environmental factors. The Graph Game class encompasses various formulations for attacker resources being interdicted. Specifically, for an attacker \emph{moving} resource $j_a\in[m_a]$, interdiction occurs if a defender resource $j_d\in[m_d]$ comes within the capture radius \(r\) of the attacker:
\[
\1\left[j_a \text{ is captured}\right] = 
\1[\exists j_d \in [m_d], \exists \tau\in\Tcal:  d(A^a_{j_a,\tau},A^d_{j_d,\tau}) \le r ],
\]
where \(d(v,w)\) is the distance between two nodes $v,w\in V$.
For an attacker \emph{stationary} resource $j_a\in[m_a]$ at target $t$, a defender resource must visit this target a minimum number of timesteps $\delta$ for the defender to prevent it from being captured, where $\delta$ is a defense time threshold:
\begin{equation*}
    \1[j_a\text{ is captured}] = \1[|\{\tau\in \Tcal: \exists j_d\in[m_d]: A^a_{j_a,\tau}=t\}| \geq \delta].
\end{equation*}
\comment{
. In this case, a stationary attacker $a$ that selects target node \(t\) is considered interdicted if a defender visits that node at least \(\delta\) times:
\[
\1\left[j_a \text{ is captured}\right] = 
\1\sqb{ \sum_{\tau\in\Tcal} \mathbb{I}\left[d_\tau = t\right] \ge \delta }
\]

where the indicator function \(\mathbb{I}\left[d_\tau = t\right]\) is 1 if a defender is present at target node \(t\) at timestep \(\tau\), and 0 otherwise, and \(\delta\) is the defense time threshold specifying the minimum number of visits required for interdiction.
}

\textbf{Security Game.} The Security Game layer extends the base Graph Game class by enforcing key constraints of standard security games: no stationary defenders, no moving attackers, and start/end constraints at home bases for defender paths. This creates a classic setting where mobile defender(s) patrol from fixed bases, and attacker(s) selects target(s) without traversing the graph.

This layer defines both zero-sum and general-sum formulations using a target utility matrix with covered and uncovered payoffs \(u^c(t)\) and \(u^u(t)\) for each target \(t \in T\). The Security Game Layer also implements the Stackelberg schedule-form games when specified, passing SFG artifacts to the highest domain-specific layer for the user to access. These artifacts include defender resource schedule mappings, the target utility matrix, an expanded defender action set enumerating all possible defender schedule combinations, and NFG-expanded utility matrices. For general schedule-form games, this layer manages finding mappings from defenders to valid schedules which can be expanded into normal-form action spaces (and thus into the aforementioned NFG utility matrices for compatibility with broader classes of game-theoretic algorithms). See Appendix~\ref{subsection:schedule_algos} for schedule-finding algorithm implementations. Appendix~\ref{subsection:security_game_params} details the specific parameters for the Security Game layer.

\subsubsection{Domain Specific Games}

\textbf{Green Security Games.} GSGs model the interaction between defenders (e.g., park rangers) and poachers in conservation areas as adversarial games over a spatial domain~\cite{fang2015security,fang2016deploying}. The defender seeks to patrol animal location targets to establish coverage over animal targets and prevent poaching. Both simultaneous (pursuit-evasion) and Stackelberg formulations are supported. The game is played on a grid-world abstraction of a national park (nodes represent cells, edges connect adjacent cells).

Given spatiotemporal animal tracking data from Movebank or similar datasets, our framework infers target locations and values based on observation density. Users initialize a realistic GSG in the \frameworkname \ framework by supplying this data along with a bounding box, grid dimensions, a scoring method ("centroid" or "density"), and number of cluster targets if using the centroid method.

In the centroid method, K-means is applied to observation coordinates, and each cluster centroid is assigned a score proportional to its size, scaled by the animal-to-observation ratio: $\texttt{score} = \texttt{cluster\_size} \times \left(\frac{\texttt{num\_animals}}{\texttt{num\_total\_observations}}\right)$.
For general-sum games, 
the attacker value includes an optional escape proximity factor:
$\texttt{attacker\_value} = \texttt{score} \times  \left(1 + \alpha \left(1 - \frac{d_i - d_\text{min}}{d_\text{max} - d_\text{min}}\right)\right),\quad \texttt{defender\_value} = -\texttt{score} $, where 
\(d_i\) is the distance to the escape boundary, and \(\alpha\) is the proximity scaling factor.
In the density method, each observation contributes to the score of its containing cell, with values scaled by the overall animal-to-observation ratio.



\begin{figure*}[t]
    \centering
    \begin{subfigure}[t]{0.49\textwidth}
        \centering
        \includegraphics[width=\textwidth]{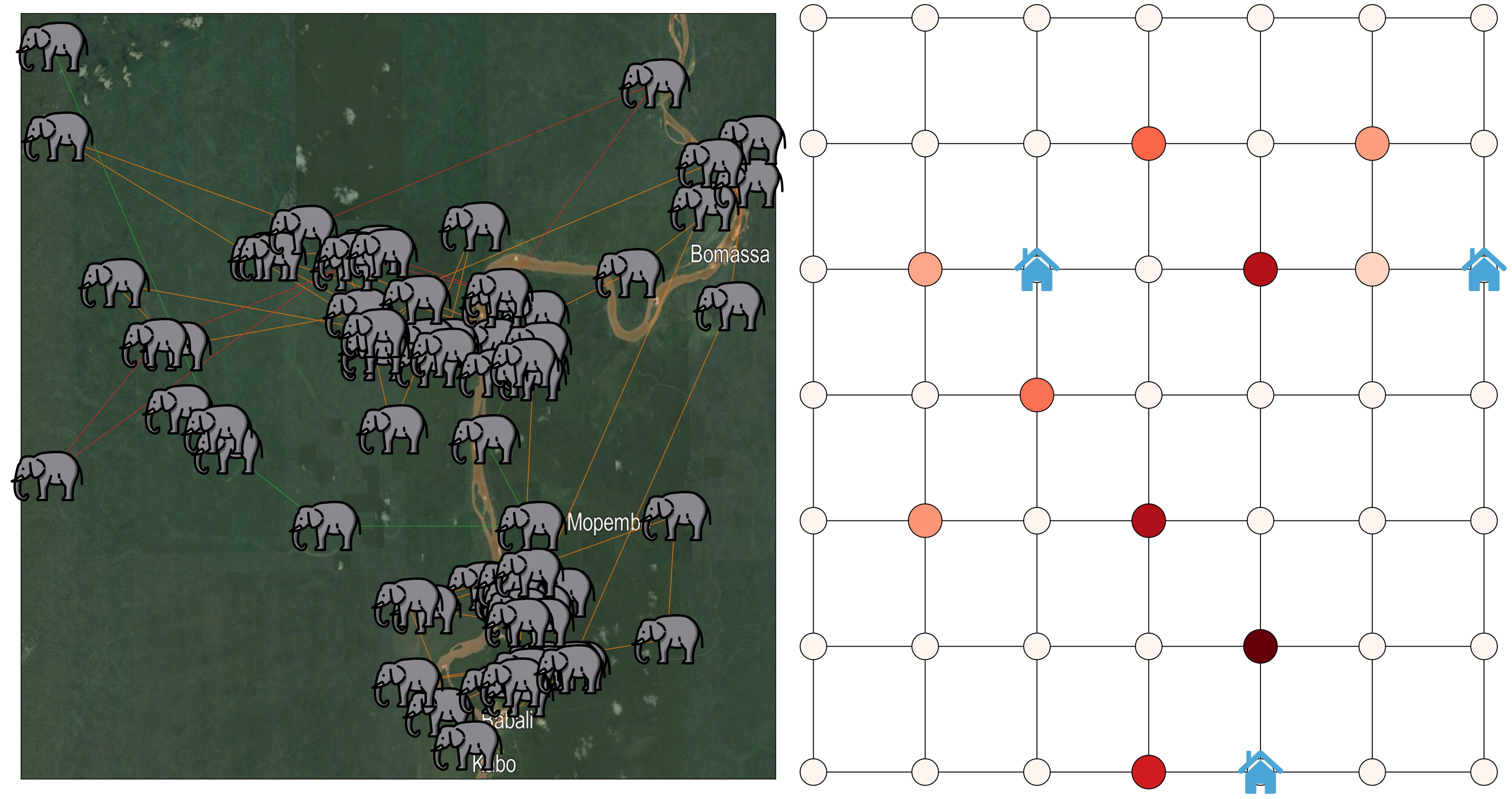}
        \label{fig:elephant_movements}
    \end{subfigure}
    \begin{subfigure}[t]{0.49\textwidth}
        \centering
        \includegraphics[width=\textwidth]{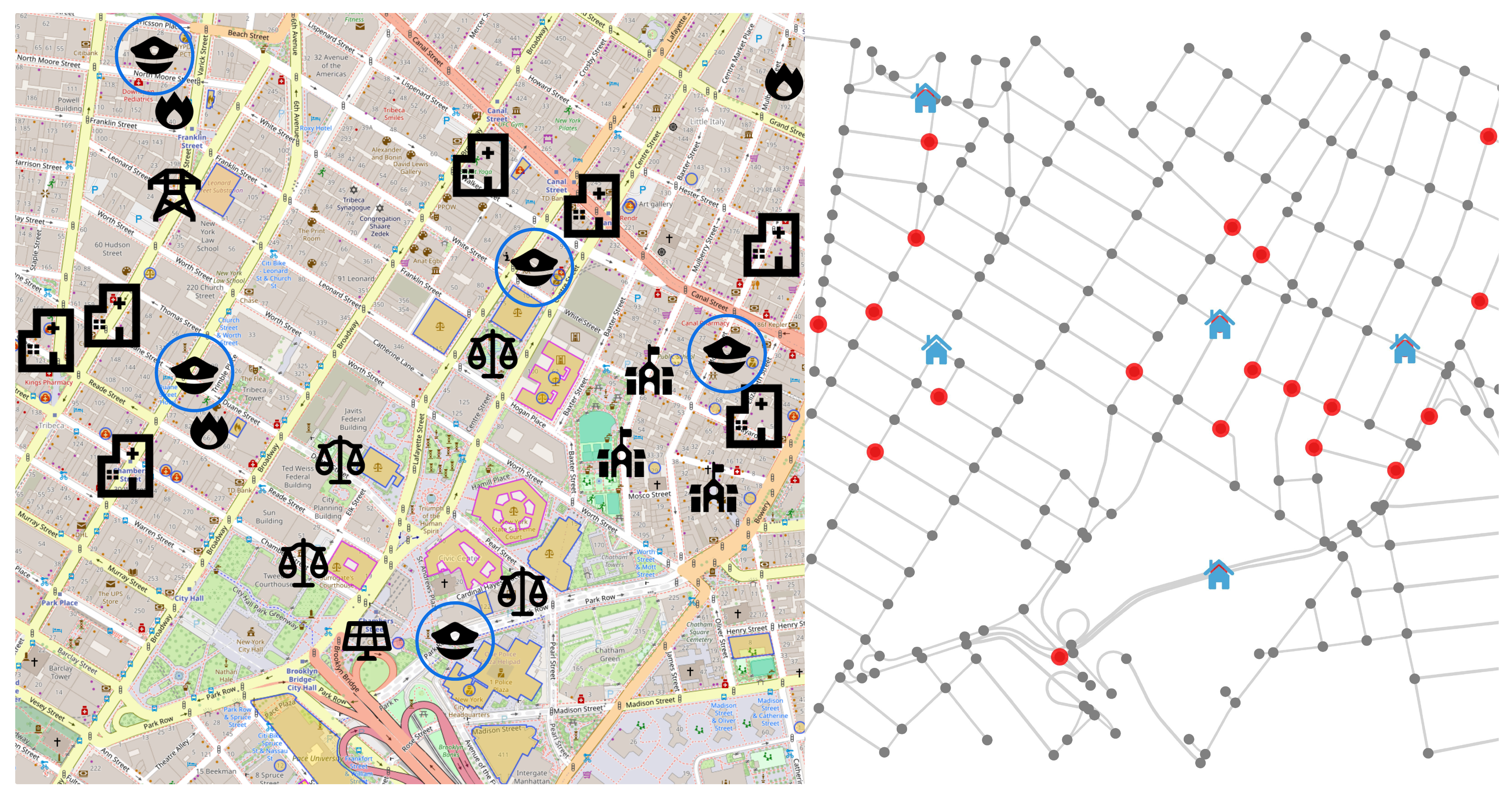}
        \label{fig:isg_instance}
    \end{subfigure}
    \caption{
    (Left) Down-sampled elephant movements in Lobéké National Park and the corresponding GSG model. (Right) Civil infrastructure in Manhattan's Chinatown and the corresponding ISG model. Graph structures represent the traversable game environments. Red nodes indicate target locations, blue house icons are home bases. }
    \label{fig:gsg_isg_instances}
\end{figure*}

\textbf{Infrastructure Security Games.} ISGs model adversarial interactions over civil infrastructure, where attackers aim to damage key assets (e.g., power grids, hospitals) and defenders patrol to protect and establish coverage over these sites. Like in GSGs, both simultaneous (pursuit-evasion) and Stackelberg formulations are supported. Games are typically played on a grid-like network representing an urban environment.

To initialize a realistic ISG in the \frameworkname \ framework, users supply: (i) a GeoPandas DataFrame of infrastructure features with coordinates and types; (ii) a block-level population GeoDataFrame; (iii) a weight dictionary for infrastructure types; (iv) a geographic bounding box; and (v) a population assignment method ("block" or "radius"). The block method assigns the population of the containing census block, while the radius method sums populations of intersecting blocks within a buffer around each feature. The framework then builds a street graph from OSM within the bounding box, maps infrastructure features to the nearest graph node, and computes each node’s base score as $\texttt{raw\_score} = W \cdot \left(\log(P+1)\right)^{\alpha}$, where \(W\) is the infrastructure type weight, \(P\) is the assigned population, and \(\alpha\) is a scaling parameter controlling population importance.

In general-sum games, scores are 
adjusted for proximity to a predefined escape point: $\texttt{attacker\_value} = \texttt{raw\_score} \times \left(1 + \alpha \left(\frac{d_\text{max} - d_i}{d_\text{max} - d_\text{min}}\right)\right), \quad 
\texttt{defender\_value} = -\texttt{raw\_score} $, where
\(d_i\) is the distance to the escape point, and \(\alpha\) controls proximity scaling. Final values are assigned to graph nodes to define the game’s target set.

\subsection{Solvers}
The \frameworkname \ platform includes a suite of built-in solvers for computing Nash and Stackelberg equilibria across different security game instances. For zero-sum NFGs, we implement the standard Nash LP~\cite{shoham2008multiagent} and a support-bounded MILP~\cite{afiouni2025commitment} that fixes the size of one player's support. For iterative equilibrium computation, we include a range of no-regret and oracle-based methods. The no-regret methods include Regret Matching (RM)~\cite{hart2000simple}, Regret Matching+ (RM+)~\cite{tammelin2015cfrplus}, and Predictive Regret Matching+ (PRM+)~\cite{farina2021predictive} (with each variant's parameter specifications discussed in Appendix~\ref{subsection:iterative_params}). For the oracle-based approaches, we provide both a standard Double Oracle algorithm~\cite{mcmahan2003planning} and its extension for schedule form; both are detailed in Appendix~\ref{app:brs}. For general-sum games, we provide two algorithms. The simple SSE multiple LP~\cite{kiekintveld2009computing} is used for security games with singleton schedules, while the general SSE multiple LP~\cite{conitzer2006computing} is used for arbitrary NFGs.

\subsection{Pre-defined Game Instances}
\frameworkname \ comes loaded with several pre-defined realistic security game instances. These games are accessible as out-of-the-box .pkl files housing the entire pre-generated game objects for NFGs, and schedule form object dictionaries for SFGs. The suite of games includes GSGs for the aforementioend Lobéké National Park elephants in Cameroon and Etosha National Park elephants in Namibia~\cite{getz2025etosha}. It also includes ISGs for multiple locations with dense civil infrastructure in New York City. These settings are generated for 7-9 timestep games and available in both NFG and SFG format for zero-sum and general-sum formulations; full pre-defined game details are in Appendix~\ref{subsection:predefined_games}.

\section{Empirical Evaluation}\label{sec:exps}
We compare equilibrium properties in selected typical realistic versus randomized games by evaluating (i-a) sparsity and runtime across support bounds and (i-b) convergence of iterative algorithms in zero-sum games, and (ii) defender utility, support size, and runtime in general-sum games. For GSGs, we use preset instances based on a 7 $\times$ 7 grid of Lobéké National Park, with 10 clustered elephant targets and 3 ranger bases, with a SFG target-coverage scaling ratio $u^u_d(t)/u^c_d(t)$ of 5 to reflect observed 80\% poacher apprehension rates~\cite{bruch2021}. 
ISGs use Manhattan’s Chinatown, with 23 infrastructure targets and 5 police stations as bases, with a SFG coverage scaling of 3 to match observed upper-end 67\% urban crime mitigation rates~\cite{Nix2017,JHU2024,LACGrandJury2017}.

\textbf{Sparsity Experiments.}  We evaluate sparsity in zero-sum GSGs using 1 (NFG) or 2 (SFG) defenders and a defense time threshold of 1 (NFG) or 2 (SFG) timesteps, with randomized instances generated by uniformly sampling matrix entries within the observed range of real payoff values. Utility is normalized as $U_{\text{norm}} = \frac{U - U_{k=1}}{U_{\text{Nash}} - U_{k=1}}$, runtime as $R_{\text{norm}} = \frac{R - R_{\min}}{R_{\max} - R_{\min}}$, and support size as $k_{\text{norm}} = \frac{k}{k_{\text{max Nash}}}$ where \(k_{\text{max Nash}}\) is the maximum Nash equilibrium support size across both real and randomized instances. Additional parameter specifications for these sparsity experiments are detailed in Appendix~\ref{subsection:sparsity_params}. Results in Figure~\ref{fig:four_panel_sparsity} show that real instances yield smaller supports at optimal utility, reaching approximately 40-60\% (NFG) and 50-70\% (SFG) of randomized support sizes across 7-, 8-, and 9-timestep games. Real instances also exhibit higher computational runtimes at larger supports, while randomized instances peak early and decline steadily.

\begin{figure*}[t]
    \centering
    \includegraphics[width=\textwidth]{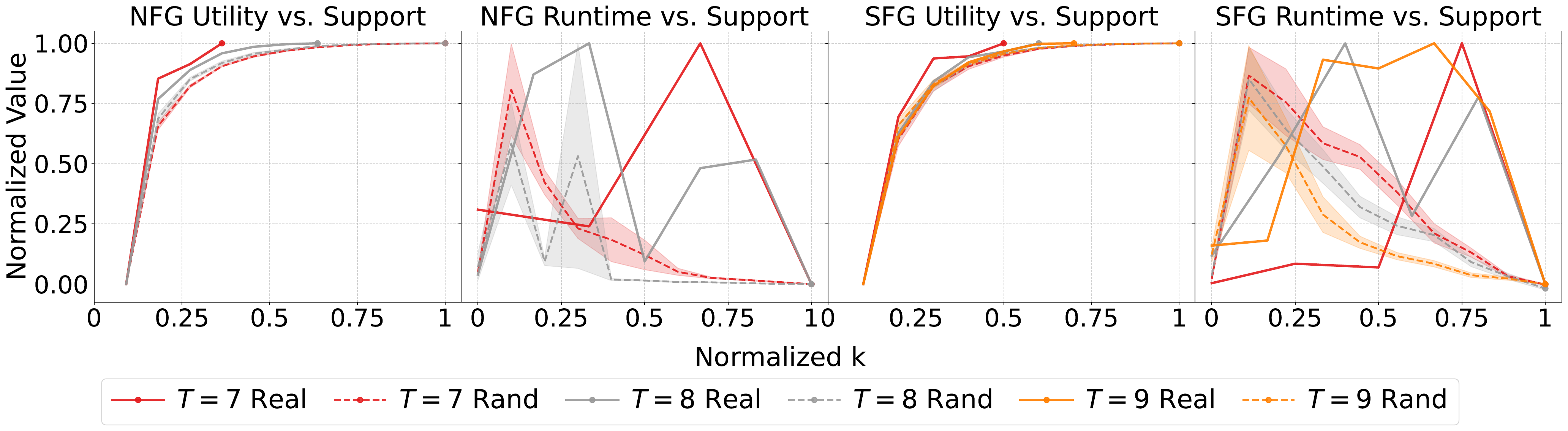}
    \caption{Comparison of GSG utility and runtime sparsity results for NFG and SFG settings with varying timesteps. Each subplot shows normalized utility and runtime as a function of normalized support, with real (solid) and random (dashed) lines for each timestep setting. Errorbars on random runs reflect standard error bounds ($\frac{\sigma}{\sqrt{n}}$) over 10 random seeds.} 
    \label{fig:four_panel_sparsity}
\end{figure*}

\textbf{Iterative Algorithm Experiments.} In Figure~\ref{fig:four_panel_convergence}, we evaluate the convergence behavior of four iterative algorithms—DO, RM, RM+, and PRM+—on both real and randomized instances in the GSG and ISG domains (with parameterization details in Appendix~\ref{subsection:iterative_params}), using NFG and SFG formulations. Across all algorithms, real instances consistently exhibit faster convergence than their randomized counterparts, particularly for the enhanced RM+ and PRM+ variants. In GSGs, real DO instances typically reach equilibrium faster, with smoother convergence trajectories. In ISGs, randomized DO often struggles to converge cleanly, exhibiting significant oscillations in utility gaps, especially in the NFG setting. These results indicate that real instances produce more stable convergence dynamics and reduced computational complexity, while randomized games introduce irregular behavior and inflated support sizes.
\begin{figure*}[t]
    \centering
    \includegraphics[width=\textwidth]{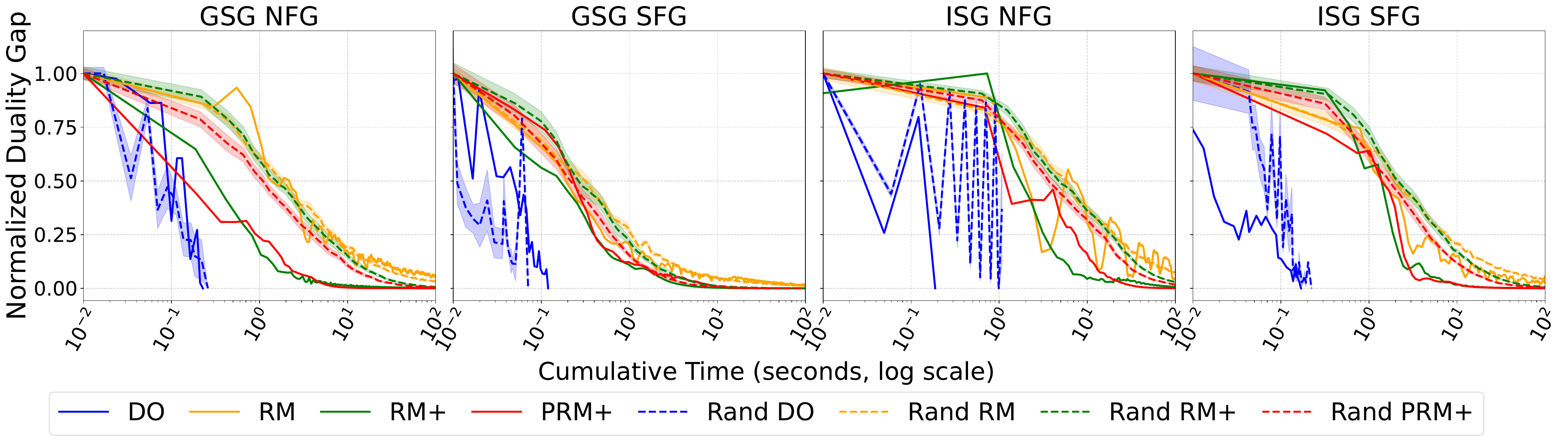}
    \caption{Convergence of iterative algorithms for different game types, formulations, and real vs. randomized (dashed lines with standard error bars) runs.}
    \label{fig:four_panel_convergence}
\end{figure*}

\textbf{Stackelberg Experiments.} We compute SSE for both simple (singleton) and general (multiple target) schedules in a general-sum formulation, comparing defender utility and support size across real and randomized GSG and ISG instances with parameterization details (including general sum path costs and proximity weights) in Appendix~\ref{subsection:sse_params}. Randomized instances are generated by sampling target values uniformly within the range of real covered and uncovered target values, while ensuring covered payoffs are strictly lower: \( u_a^c(t) \leq u_d^u(t), \ \forall t \). Table~\ref{tab:sse-exp-results} details our experimental results. With simple schedules, real instances tend to exhibit higher complexity, resulting in slightly larger support sizes and lower defender utilities compared to their randomized counterparts. The difference is more apparent with general schedules. We expand the schedule-form coverage strategy space into matrix-NFG form to run the general SSE multiple LP, and evaluate \frameworkname-generated realistic game instances against three randomized baselines: randomizing the expanded NFG matrices (RM), randomizing target values only (RT), and randomizing both target values and schedule assignments (RTS). Schedule randomization is detailed in Appendix~\ref{subsection:schedule_randomization}. In GSGs, the real instance yielded significantly larger support size than all randomized settings, which generally produced minimal supports (Real: $S = 9$, Rand: $S_{RM}=2, S_{RT}=1.7, S_{RTS}=1.4$). Similarly, in ISGs, the real instance achieved substantially larger support and lower defender utilities compared to the randomized baselines which tend to inflate defender payoffs (Real: $u_d=-0.4076$, Rand: $u_d = -0.026, -0.013, -0.032$), and collapse to degenerate supports (Real: $S = 14$, Rand: $S_{RM}=1.7, S_{RT}=1, S_{RTS}=1$). These observations are in close accord with theoretical results in \cref{sec:random_games}: random general-sum instances often have degenerate support and give significant advantage to the defender, while realistic instances yield richer strategy profiles.

\begin{table}
\renewcommand{\arraystretch}{1.5}
\setlength{\tabcolsep}{4pt}
\caption{SSE Results for GSG and ISG settings. RT = randomized target values, RM = randomized matrix, RTS = randomized target values and schedules. $u_d$ is max defender utility, \textbf{R} is runtime in seconds, and \textbf{S} is support of defender coverage strategy.}
\label{tab:sse-exp-results}
\small
\begin{tabular}{|p{1.05cm}|p{0.6cm}|p{1.8cm}|p{1.7cm}|p{1.4cm}|p{1.8cm}|p{1.8cm}|p{1.4cm}|}
    \hline
    \textbf{Form} & \textbf{Trial} 
    & \multicolumn{3}{c|}{\textbf{GSG}} 
    & \multicolumn{3}{c|}{\textbf{ISG}} \\
    \cline{3-8}
     & & \textbf{$u_d$} & \textbf{R} & \textbf{S} 
       & \textbf{$u_d$} & \textbf{R} & \textbf{S} \\
    \hline

    \multirow{2}{*}{\textbf{Simple}} 
    & Real 
        & -0.390 & 0.015s & 6 
        & -0.429 & 0.030s & 11 \\
    \cline{2-8}
    & RT
        & -0.298 $\pm$ 0.03 & 0.01s $\pm$ 0 & 5.0 $\pm$ 0.52 
        & -0.214 $\pm$ 0.05 & 0.04s $\pm$ 0 & 7.5 $\pm$ 1.20 \\
    \hline

    \multirow{4}{*}{\textbf{General}} 
    & Real 
        & -0.207 & 0.46s & 9 
        & -0.408 & 838s & 14 \\
    \cline{2-8}
    & RM
        & -0.035 $\pm$ 0 & 0.52s $\pm$ 0.02 & 2 $\pm$ 0.15
        & -0.026 $\pm$ 0 & 541.5s $\pm$ 1.98 & 1.7 $\pm$ 0.15 \\
    \cline{2-8}
    & RT
        & -0.253 $\pm$ 0 & 0.35s $\pm$ 0.01 & 1.7 $\pm$ 0.26 
        & -0.013 $\pm$ 0 & 116.2s $\pm$ 1.87 & 1 $\pm$ 0 \\
    \cline{2-8}
    & RTS
        & -0.270 $\pm$ 0 & 0.45s $\pm$ 0.03 & 1.4 $\pm$ 0.16 
        & -0.032 $\pm$ 0 & 1.57$\pm$ 0.04 & 1 $\pm$ 0 \\
    \hline
\end{tabular}
\end{table}
\section{Conclusion}
\label{sec:conclusion}
We present a flexible framework for generating structured games grounded in publicly available data, offering a range of customizable and preconfigured instances inspired by real-world security applications. Alongside this, we provide a dataset of ready-to-use games and support for exporting to standard formats. Our theoretical analysis highlights limitations of random games for evaluating equilibrium-finding algorithms, and our experiments confirm that such instances can exhibit degenerate behavior absent in more structured settings. Together, these contributions aim to support more robust and practical benchmarking in algorithmic game theory.

\textbf{Limitations.} In regards to limitations, \frameworkname\ faces scalability challenges in dense environments, where complex structure and rich data inflate game (especially action sets) sizes and hinder algorithmic performance. Specifically, at high timestep settings, defender path action sets explode in size making utility matrices intractable to compute. More broadly, the framework depends on raw datasets that may imperfectly capture real-world dynamics, limiting fidelity to the underlying game environment. Also, our realistic security game instance generation was only tested in GSG and ISG domains. Additionally, some realistic instances exhibit degenerate behaviors as a consequence of certain environments having trivial best-response profiles under current target and action definitions. For example, data-driven targets may be unreachable for defender resources under certain game length and home-base constraints, causing trivial attacker best-responses and game degeneration. Finally, incorporating more real-world features like resource-specific subgraph constraints (modeling differing security patrol roles) could further enhance modeling realism. Limitations are discussed in greater detail in Appendix~\ref{subsection:limitations}.

\bibliographystyle{plain}
\bibliography{refs}

\newpage

\appendix

\section{Proofs}
\subsection{Proof of \cref{thm:full_SSE}}
\label{subsection:proof_stackelberg}

First, we prove the first part of the theorem. Because all coordinates $B$ are uniform, with probability one, there are no two equal entries. We consider that this event holds in the rest of the proof. Then, for any $i\in[n]$, the best-response for the leader strategy $e_i$ is $y^*(e_i)=e_{j(i)}$ where $j(i) =\argmax_{j\in[n]} B_{i,j}$. Hence, since all entries of $B$ are i.i.d.\ the indices $j(i)$ for $i\in[n]$ are all i.i.d.\ uniform on $[n]$. In turn, since $A$ and $B$ are independent, the variables $e_i^\top A y^*(e_i) = A_{i,j(i)}$ for $i\in[n]$ are i.i.d.\ uniform on $[0,1]$. In summary, $V(x^*(1))$ is distributed as the  maximum of $n$ i.i.d.\ uniform variables, that is, $V(x^*(1))\sim \text{Beta}(n,1)$. The given expectation and deviation bounds are classical for this beta distribution.

Now we can move to the second part of the theorem. We prove each inequality separately, starting with the lower bound on the value of sparse solutions:

\begin{lemma}\label{lemma:lower_bound}
    There exist universal constants $c_0,c_1>0$ such that for random uniform Stackelberg games,
    \begin{equation*}
        1-\Ebb[V(x^*(c_0 \log n))] \leq \frac{c_1\sqrt{\log n}}{n\sqrt n}.
    \end{equation*}
\end{lemma}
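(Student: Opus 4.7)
The plan is constructive, following the hint in the paper: I would exhibit an explicit $(c_0 \log n)$-sparse strategy $\tilde x$ built around the best pure strategy, and bound $\mathbb{E}[V(\tilde x)] \geq 1 - c_1\sqrt{\log n}/n^{3/2}$.

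Setting $j(i) := \argmax_{j'} B_{i,j'}$, $i_1 := \argmax_i A_{i,j(i)}$, and $j^* := j(i_1)$, part (i) of the theorem gives $V(e_{i_1}) = A_{i_1,j^*} \sim \text{Beta}(n,1)$. Writing $k = c_0 \log n$, the augmentation selects the rows $i_2,\ldots,i_k$ (other than $i_1$) with the largest values of $A_{\cdot,j^*}$. Conditional on $(i_1, j^*)$, the entries $A_{i,j^*}$ for $i \neq i_1$ with $j(i)\neq j^*$ are independent uniforms on $[0,1]$, so $i_2,\ldots,i_k$ are essentially the top order statistics of $n - O(1)$ uniforms. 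The strategy is $\tilde x = (1-\epsilon) e_{i_1} + \epsilon z$ with $z \in \Delta^{k-1}$ supported on $\{i_2,\ldots,i_k\}$, and $(\epsilon, z)$ chosen to solve the linear program
\begin{align*}
\max_{\epsilon\in[0,1],\ z\in\Delta^{k-1}}\ & (1-\epsilon)\, A_{i_1,j^*} + \epsilon \sum_{l=2}^{k} z_l\, A_{i_l,j^*} \\
\text{s.t.}\ & (1-\epsilon)\, s_{j'} + \epsilon \sum_{l=2}^{k} z_l\,(B_{i_l,j^*} - B_{i_l,j'}) \geq 0, \quad \forall\, j' \neq j^*,
\end{align*}
where $s_{j'} := B_{i_1,j^*} - B_{i_1,j'} \geq 0$ is the slack of Player 2's preference for $j^*$ over $j'$ at $e_{i_1}$. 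The constraints ensure that $j^*$ remains Player 2's best response, so the LP's optimum lower-bounds $V(\tilde x)$.

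The main obstacle is the probabilistic analysis of this LP. Because $B$ is independent of $A$, conditional on $(i_1, j^*, i_2, \ldots, i_k)$ the $(k-1) \times (n-1)$ matrix with entries $M_{l,j'} := B_{i_l,j^*} - B_{i_l,j'}$ has i.i.d.\ zero-mean entries of constant variance. I would combine three probabilistic estimates: (a) order-statistic tail bounds on the slacks $s_{j'}$, which behave like spacings at the top of $n$ i.i.d.\ uniforms and hence are typically of order $1/n$; (b) sub-Gaussian concentration together with a union bound over the $n-1$ constraints, controlling $\max_z \min_{j'} z^T M_{\cdot,j'}$ at scale $1/\sqrt{k} = 1/\sqrt{\log n}$; and (c) extreme-value estimates on the top-$k$ values of the independent uniform column $A_{\cdot,j^*}$, quantifying how much Player 1 can gain by shifting mass from $i_1$ toward $i_2,\ldots,i_k$. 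The delicate step is balancing the perturbation budget $\epsilon$ against these three scales to quantify both the largest feasible $\epsilon$ and the resulting improvement over $A_{i_1,j^*}$; this is where the rate $\sqrt{\log n}/n^{3/2}$ emerges. The low-probability event on which no suitable $(\epsilon, z)$ exists is handled by falling back to $e_{i_1}$, whose contribution to the expectation is of lower order.
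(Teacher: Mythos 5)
Your overall template (a near-optimal anchor entry plus a small mixture that forces the follower onto the anchor's column) is the same as the paper's, but your choice of anchor is too weak and the construction cannot reach the claimed rate. You take $i_1=\arg\max_i A_{i,j(i)}$ and $j^*=j(i_1)$, so $A_{i_1,j^*}\sim\mathrm{Beta}(n,1)$ and $1-A_{i_1,j^*}=\Theta(1/n)$ in expectation. Your LP pins the follower's response to the fixed column $j^*$, so its optimum is at most $\max_i A_{i,j^*}$; but $\max_i A_{i,j^*}=\max\bigl(A_{i_1,j^*},\max_{i:j(i)\neq j^*}A_{i,j^*}\bigr)$ is the maximum of at most $2n$ (conditionally) uniform entries, hence $1-\Theta(1/n)$ in expectation. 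Concretely, either $\epsilon$ is small and the $\Theta(1/n)$ deficit of the anchor survives in $(1-\epsilon)A_{i_1,j^*}+\epsilon\sum_l z_l A_{i_l,j^*}$, or $\epsilon$ is close to $1$ and the value is dragged down to the average of the top-$k$ order statistics of a uniform column, which is $1-\Theta(\log n/n)$. Either way you are stuck at error $\Omega(1/n)$, a factor $\sqrt{n/\log n}$ short of the target $\sqrt{\log n}/n^{3/2}$, and no refinement of the probabilistic steps (a)--(c) can repair this because the obstruction sits in the objective, not in the feasibility analysis.

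The missing idea is that the anchor must be selected using $B$ as well as $A$, from a candidate pool much larger than the $n$ entries $\{A_{i,j(i)}\}$. The paper sets $\delta_n=\Theta(\sqrt{\log n}/n^{3/2})$, collects the set $S$ of all $\approx n^2\delta_n=\Theta(\sqrt{n\log n})$ entries with $A_{i,j}\ge 1-\delta_n$, and among these picks $(i^*,j^*)$ maximizing $B_{i,j}$. This buys two things simultaneously: the leader's payoff at the anchor is within $\delta_n$ of $1$, and $1-B_{i^*,j^*}=O(1/\sqrt{n\log n})$ (a maximum of $\Theta(\sqrt{n\log n})$ uniforms), so a perturbation of total weight only $\eta_n=8(1-B_{i^*,j^*})$ suffices to make $j^*$ the follower's strict best response. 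The perturbation rows are moreover pre-filtered to satisfy $B_{i,j^*}\ge 3/4$ — so each unit of perturbation weight buys $\Theta(1)$ of slack in every follower constraint at once, replacing your column-by-column sub-Gaussian/union-bound step with a single Hoeffding bound — and to contain no competing entries of $S$; taking among them the $c_0\log n$ largest values of $A_{\cdot,j^*}$ costs only $\eta_n\cdot O(\log n/n)=O(\sqrt{\log n}/n^{3/2})$ in leader value. If you replace your $(i_1,j^*)$ with this two-criterion anchor, the remainder of your outline roughly tracks the paper's proof.
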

\begin{proof}
    We give a constructive proof for this lower bound on $V(x^*(c_0\log n))$, that is, we explicitly construct a $\Ocal(\log n)$-sparse strategy for the leader then analyze its expected value. We did not optimize for constants in the following.

    \paragraph{Construction of the sparse strategy.} We first compute all pairs of indices for which the leader has the desired large payoff:
    \begin{equation*}
        S:= \set{(i,j): A_{i,j} \geq 1-\delta_n}\quad \text{where }\quad \delta_n:=2^{11}\frac{\sqrt{\log n}}{n\sqrt n}.
    \end{equation*}
    If $S=\emptyset$, we return any arbitrary strategy $x_{sparse}=e_1$. We suppose that $S\neq \emptyset$ from now.
    Next, we identify the pair of indices which yields the largest payoff for the follower, where ties can be broken arbitrarily:
    \begin{equation*}
        (i^*,j^*):=\argmax_{(i,j)\in S} B_{i,j}.
    \end{equation*}
    Next, we compute a set of other row indices. To do so, we start by identifying the rows $i$ for which $B_{i,j^*}$ is significant entries and the row $B_{i,\cdot}$ does not contain entries from $S$ other than possibly $(i,j^*)$,
    \begin{equation*}
        R(j^*):=\set{i\in[n]\setminus \{i^*\}: B_{i,j^*} \geq \frac{3}{4}, \forall j\in[n]\setminus\{j^*\}, (i,j)\notin S}.
    \end{equation*}
    Among these rows, we take the $c_0\log n-1$ rows maximizing the entries of $A_{\cdot,j^*}$. Precisely, we enumerate $R(j^*)=\set{i_1,\ldots,i_{|R(j^*)|}}$ where $A_{i_1,j^*} \geq A_{i_2,j^*} \geq \ldots \geq A_{i_{|R(j^*)|},j^*}$.
    We then pose
    \begin{equation*}
        R^*:= \set{i_s,s\leq c_0\log n-1}\quad \text{where}\quad c_0=200.
    \end{equation*}
    The final leader strategy that we use is
    \begin{equation*}
        x_{sparse}:= (1-\eta_n) e_{i^*} + \frac{\eta_n}{|R^*|}\sum_{i\in R^*} e_i,\quad \text{where}\quad \eta_n := 8(1-B_{i^*,j^*}).
    \end{equation*}

    \paragraph{Analysis of the sparse strategy value.} The main steps of the analysis are to show that the best response for the follower is $y^*(x_{sparse}) = e_{j^*}$, and that the leader has large value for $x_{sparse}^\top A e_{j^*}$. 
    
    First, because all entries of $A$ are i.i.d.\ uniform on $[0,1]$, the matrix $(\1[A_{i,j}\geq 1-\delta_n)_{i,j}$ has i.i.d.\ Bernoulli entries $\text{Ber}(\delta_n)$. In particular, $|S|\sim \Binom(n^2,\delta_n)$. Using Chernoff's bound, we have
    \begin{equation}\label{eq:proba_E}
        \Pbb\sqb{\abs{ |S| - n^2\delta_n}> \frac{1}{2}n^2\delta_n} \leq 2e^{-\frac{1}{12}n^2\delta_n}.
    \end{equation}
    For convenience, we denote by $\Ecal=\set{\frac{1}{2}n^2\delta_n\leq |S|\leq \frac{3}{2} n^2\delta_n}$ the complementary event.
    Conditionally on $S$, $B$ still has all entries i.i.d.\ uniform on $[0,1]$. Then, conditional on $S$ and $S\neq\emptyset$, $B_{i^*,j^*}\sim \text{Beta}(S,1)$ is distributed as the maximum of $S$ i.i.d.\ uniforms on $[0,1]$. In particular, we have
    \begin{equation}\label{eq:proba_F}
        \Ebb\sqb{B_{i^*,j^*} \mid S,\Ecal} = \frac{S}{S+1} \geq 1-\frac{2}{n^2\delta_n},
    \end{equation}
    where in the last inequality we used that fact that on $\Ecal$, $|S|\geq \frac{1}{2}n^2\delta_n$.

    In the rest of the proof, we reason conditionally on $S$ and $\set{B_{i,j},(i,j)\in S}$. Conditionally to these, the other coordinates $A_{i,j}$ and $B_{i,j}$ for $(i,j)\notin S$ are still all independent and distributed as uniform variables on $[0,1-\delta_n]$ for $A_{i,j}$ and uniforms on $[0,1]$ for $B_{i,j}$. To analyze $R(j^*)$ we first introduce the following sets of rows: 
    \begin{align*}
        R_0 &:= \set{i\in[n]: \exists j\in[n], (i,j)\in S}\\
        R_1(j^*) &:= R(j^*)\cap R_0=\set{i\in [n]\setminus\{i^*\}: (i,j^*)\in S, \forall j\in [n]\setminus\{j^*\}, (i,j)\notin S}.
    \end{align*}
    Clearly, we have $|R_0|\leq |S|$ and $i^*\in R_0$ since $(i^*,j^*)\in S$.
    Conditionally on $S$ and $\set{B_{i,j},(i,j)\in S}$ the entries $B_{i,j^*}$ for all $i\notin R_0$ are distributed as i.i.d.\ uniforms on $[0,1]$. Then, $R(j^*)\setminus R_0$ is distributed as independently including each element from $[n]\setminus R_0$ with probability $1/4$. In particular, conditional on $S$ and $\set{B_{i,j},(i,j)\in S}$,
    \begin{equation*}
        |R(j^*) \setminus R_0| \sim \Binom\paren{n-|R_0|,\frac{1}{4}}.
    \end{equation*}
    Note that under $\Ecal$, we have $n-|R_0| \geq n-|S|\geq n-\frac{3}{2}n^2\delta_n \geq \frac{n}{2}$ for $n$ sufficiently large. Under that event, $|R(j^*) \setminus R_0|$ therefore stochastically dominates a binomial $\Binom(n/2,1/4)$.
    Hence, Chernoff's bound implies
    \begin{equation}\label{eq:proba_G}
        \Pbb\sqb{|R(j^*) \setminus R_0| \leq \frac{n}{16} \mid S,j^*,\Ecal} \leq e^{-n/64}.
    \end{equation}
    For convenience, we denote $\Gcal:=\set{|R(j^*) \setminus R_0| \geq n/16 }$. In particular, for $n$ sufficiently large, under $\Gcal$, $R(j^*)$ contains at least $l_0:=\floor{c_0\log n}-1$ row indices. Hence under $\Gcal$ we have $|R^*|=l_0$.

    We next compute
    \begin{align}
        \sum_{i\in R^*} A_{i,j^*} &= \sum_{i\in R^*\cap R_1(j^*)} A_{i,j^*} + \sum_{i\in R^*\setminus R_0} A_{i,j^*} \notag\\
        &\overset{(i)}{\geq} |R^*\cap R_1(j^*)|(1-\delta_n) + \sum_{i\in R^*\setminus R_0} A_{i,j^*}, \label{eq:bounding_deviation_due_correction}
    \end{align}
    where in $(i)$ we used the definition of $R_1(j^*)$ implying that any $i\in R_1(j^*)$ satisfies $(i,j^*)\in S$. 
    Recall that within $R^*$ we include the first $l_0:=\floor{c_0\log n}-1$ rows by decreasing order of $A_{i,j^*}$ for $j^*\in R(j^*)$. Therefore, $R^*\setminus R_0$ corresponds to the first $|R^*\setminus R_0|$ rows by decreasing order of $A_{i,j^*}$ among $R(j^*)\setminus R_0$. Recall that conditionally on $S$ and $\set{B_{i,j},(i,j)\in S}$, all entries $A_{i,j^*}$ and $B_{i,j^*}$ for $i\notin R_0$ are independent. As a result, conditionally on $S$, $\set{B_{i,j},(i,j)\in S}$, and $R(j^*)$, all entries $A_{i,j^*}$ for $i\in R(j^*)\setminus R_0$ are still distributed as i.i.d.\ uniforms on $[0,1-\delta_n]$. For convenience, let us define $l_1:=l_0-|R^*\cap R_1(j^*)|$ and $l_2:= |R(j^*)\setminus R_0|$.
    In summary, conditionally on $S$, $R(j^*)$, $R_0$, $R_1(j^*)$, and $\Gcal$,
    \begin{equation*}
        \sum_{i\in R^*\setminus R_0} A_{i,j^*} \sim (1-\delta_n)\sum_{i=0}^{l_1-1} X_{(l_2-i:l_2)},
    \end{equation*}
    where $(Y_i)_{i\in[l_2]}$ is an i.i.d.\ sequence of uniforms on $[0,1]$, and $Y_{(a:n)}$ denotes the $a$-th order statistic of $(Y_i)_{i\in[l_2]}$. Using standard results on the expectation of order statistics of uniforms, we can further the bounds from \cref{eq:bounding_deviation_due_correction}
    to obtain
    \begin{align*}
        \Ebb\sqb{\sum_{i\in R^*}A_{i,j^*} \mid S,R(j^*),R_0,R_1(j^*),\Gcal} 
        &\overset{(i)}{\geq} (1-\delta_n) (l_0-l_1) + (1-\delta_n)\sum_{i=0}^{l_1-1} \frac{l_2-i}{l_2+1}\\
        &= (1-\delta_n)l_0 - (1-\delta_n) \frac{ l_1(l_1+1)}{l_2+1}\\
        &\overset{(ii)}{\geq}  (1-\delta_n) \paren{1-\frac{l_0+1}{l_2+1} } l_0\\
        &\overset{(iii)}{\geq} \paren{1-\delta_n-16c_0\frac{\log n}{n}} |R^*|.
    \end{align*}
    In $(i)$ we used $\Ebb[X_{(a:b)}]=\frac{a}{b+1}$ for the expectation of the $a$-th order statistics for $b$ i.i.d.\ uniforms on $[0,1]$. In $(ii)$ we used $l_1\leq l_0$. In $(iii)$ we used $l_2\geq n/16$ and $|R^*|=l_0$ since $\Gcal$ is satisfied. As a summary, for $n$ sufficiently large, we obtained
    \begin{equation}\label{eq:bound_deviation_correction_final}
        \Ebb\sqb{\frac{1}{R^*} \sum_{i\in R^*}A_{i,j^*} \mid S,R(j^*),R_0,R_1(j^*),B_{i^*,j^*},\Gcal} \geq 1-17c_0 \frac{\log n}{n}.
    \end{equation}
    
    By construction of $(i^*,j^*)$, we also have $A_{i^*,j^*}\geq 1-\delta_n$. Together with the previous equation this intuitively shows that $x_{sparse}^\top A e_{j^*}$ has large value for the leader. Precisely, using \cref{eq:bound_deviation_correction_final}, we have
    \begin{align}
        \Ebb\sqb{x_{sparse}^\top A e_{j^*}\mid S,R(j^*),R_0,R_1(j^*),B_{i^*,j^*},\Gcal} &= (1-\eta_n)A_{i^*,j^*} + \eta_n\paren{1-17c_0 \frac{\log n}{n}} \notag \\
        &\geq 1-\delta_n - 17c_0 \frac{\log n}{n}\eta_n.\label{eq:value_x_sparse}
    \end{align}

    We now show that $e_{j^*}$ is indeed the best response of the follower for $x_{sparse}$. Equivalently, we need to show that the vector $x_{sparse}^\top B = (1-\eta_n)B_{i^*,\cdot} + \frac{\eta_n}{|R^*|}\sum_{i\in R^*} B_{i,\cdot}$ attains its maximum for column $j^*$. To do so, we introduce the following event
    \begin{equation*}
        \Hcal:=\set{ \max_{j\in [n]\setminus \{j^*\}} \frac{1}{|R^*|}\sum_{i\in R^*} B_{i,j} -\frac{1}{2}< \frac{1}{8}}.
    \end{equation*}
    
    By construction of $R(j^*)$, we know that for all $i\neq i^*$ and $j\neq j^*$, one has $(i,j)\notin S$. In particular, conditionally on $S$, $\set{B_{i,j},(i,j)\in S}$, $\{B_{i,j^*},i\in[n]\}$, and $R(j^*)$, all entries $B_{i,j}$ for $i\in R(j^*)$ and $j\neq j^*$ are still i.i.d.\ uniform on $[0,1]$. Next, constructing $R^*$ from $R(j^*)$ only involves the quantities $A_{i,j^*}$ for $i\in R(j^*)$, which are independent from the entries in $B$. Hence, conditionally on $\Ical:=\set{S,\set{B_{i,j},(i,j)\in S},\{B_{i,j^*},i\in[n]\},R(j^*),R_0,R_1(j^*),R^*,\Gcal}$, the matrix $(B_{i,j})_{i\in R^*,j\in [n]\setminus\{j^*\}}$ is exactly distributed as a $l_0\times(n-1)$ random matrix with all entries i.i.d.\ uniform on $[0,1]$. Therefore, applying Hoeffding's bound on each column $j\in[n]\setminus \{j^*\}$ then the union bound over all these columns, we have
    \begin{equation}\label{eq:proba_H}
        \Pbb[\Hcal^c\mid \Ical] \leq (n-1) e^{-|R^*|/32} = (n-1)e^{-l_0/32}.
    \end{equation}
    In the last equality we used $|R^*|=l_0$ under $\Gcal$.

    Suppose that $\Hcal$ holds. Then, one one hand,
    \begin{align*}
        1-(x_{sparse}B)_{j^*} = (1-\eta_n) (1-B_{i^*,j^*}) +\frac{\eta_n}{|R^*|}\sum_{i\in R^*} (1-B_{i,j^*}) \overset{(i)}{\leq} (1-B_{i^*,j^*}) + \frac{\eta_n}{4} \overset{(ii)}{=} \frac{3\eta_n}{8},
    \end{align*}
    where in $(i)$ we used $R^*\subset R(j^*)$ and the fact that $\Fcal$ holds. In $(ii)$ we used the definition of $\eta_n$. On the other hand,
    \begin{align*}
        \min_{j\in [n]\setminus\{j^*\}} 1-(x_{sparse}B)_{j^*} \geq \eta_n\cdot \min_{j\in [n]\setminus\{j^*\}} \paren{1-\frac{1}{|R^*|}\sum_{i\in R^*} B_{i,j} } \overset{(i)}{>} \frac{3\eta_n}{8},
    \end{align*}
    where in $(i)$ we used the fact that $\Hcal$ holds. In summary, we obtained
    \begin{equation}\label{eq:correct_best_response}
        \Hcal\subseteq \{y^*(x_{sparse}) = e_{j^*}\}.
    \end{equation}
    where in the last inequality we used \cref{eq:proba_H}.
    Putting everything together, we obtained
    \begin{align*}
        1-\Ebb&[V(x_{sparse})]\\
        &\leq \Pbb[(\Ecal\cap\Gcal)^c] + \Ebb[(1-V(x_{sparse}))\1[\Ecal]\1[\Gcal]]\\
        &\leq \Pbb[\Ecal^c] + \Pbb[\Gcal^c\mid\Ecal] + \Pbb[y^*(x_{sparse}) \neq e_{j^*}\mid\Ecal,\Gcal] + \Ebb[(1-x_{sparse}^\top Ae_{j^*})\1[\Ecal]\1[\Gcal]]\\
        &\overset{(i)}{\leq} \Pbb[\Ecal^c] + \Pbb[\Gcal^c\mid\Ecal] +\Pbb[\Hcal^c\mid\Ecal,\Gcal]
        + \delta_n + 2^{12} c_0\frac{\log n}{n}\cdot  \Ebb[(1-B_{i^*,j^*})\1[\Ecal]\1[\Gcal]]\\
        &\leq \Pbb[\Ecal^c] + \Pbb[\Gcal^c\mid\Ecal] +\Pbb[\Hcal^c\mid\Ecal,\Gcal]
        + \delta_n + 2^{12} c_0\frac{\log n}{n}\cdot  \Ebb[1-B_{i^*,j^*}\mid\Ecal]\\
        &\overset{(ii)}{\leq} n^{1-\frac{c_0}{64}} + \delta_n + 2^{13} c_0\frac{\log n}{n^3\delta_n}\leq \frac{1}{n^2} +2\delta_n.
    \end{align*}
    In $(i)$ we used \cref{eq:correct_best_response} and the bound from \cref{eq:value_x_sparse} together with the definition of $\eta_n$. In $(ii)$ we used \cref{eq:proba_E,eq:proba_F,eq:proba_G,eq:proba_H}, and took $n$ large enough so that the term $\Pbb[\Hcal^c\mid\Ecal,\Gcal]$ from \cref{eq:proba_H} dominates and $l_0\geq \frac{c_0}{2}\log n$. Because $x_{sparse}$ only has $l_0+1\leq c_0\log n$ non-zero entries, this ends the proof that
    \begin{equation*}
        1-\Ebb[V(x^*(c_0\log n)]\leq 1-\Ebb[V(x_{sparse})]\leq c_1\frac{\sqrt{\log n}}{n\sqrt n},
    \end{equation*}
    for some universal constant $c_1\geq 2$.
\end{proof}

\comment{
We next turn to the upper bound on the value of the Stackelberg game $\Ebb[V(x^*)]$, for which we will need this random matrix theory result.

\begin{theorem}[Corollary 2.3.5 in \cite{tao2012topics}]\label{thm:concentration_op_norm}
    Let $M$ be a $n\times m$ matrix with i.i.d.\ mean zero, entries $M_{i,j}$ with support on $[-1,1]$. Then, there exist absolute constants $\alpha,\beta\geq 1$ such that for any $C\geq \alpha$,
    \begin{equation*}
        \Pbb[\norm{M}_{op}\geq C\sqrt{\max(m,n)} ]\leq \alpha \exp(-\beta C\max(m,n)).
    \end{equation*}
\end{theorem}

The upper bound is as follows.

\begin{lemma}\label{lemma:upper_bound}
    There exists a universal constants $c_2>0$ such that for random uniform Stackelberg games,
    \begin{equation*}
        1-\Ebb[V(x^*)] \geq \frac{c_2}{n\sqrt n \log n}.
    \end{equation*}
\end{lemma}

\begin{proof}
    We will use similar notations as in the proof of the lower bound for $\Ebb[V(x_{sparse})]$ in \cref{lemma:lower_bound}. We define
    \begin{equation*}
        S:=\set{(i,j): A_{i,j} \geq 1-\epsilon_n}\quad \text{where}\quad \epsilon_n:=\frac{c}{n\sqrt n}.
    \end{equation*}
    As in the proof of \cref{lemma:lower_bound}, the distribution of $S$ corresponds to adding each entry $(i,j)$ independently with probability $\epsilon_n$. Hence, Chernoff's bound implies that
    \begin{equation}\label{eq:bound_proba_E}
        \Pbb\sqb{\abs{|S| - n^2\epsilon_n}>\frac{1}{2}n^2\epsilon_n} \leq 2e^{-\frac{1}{12}n^2\epsilon_n}=2e^{-\frac{c}{12}\sqrt n}.
    \end{equation}
    We define the complementary event by $\Ecal:=\{\frac{c}{2}\sqrt n \leq |S|\leq \frac{3c}{2}\sqrt n\}$.
    We next define the event in which $S$ does not contain two entries in the same row or column:
    \begin{equation*}
        \Fcal:=\set{\forall (i_1,j_1),(i_2,j_2)\in S, i_1\neq i_2 \text{ or }j_1\neq j_2 \text{ or }(i_1,j_1)=(i_2,j_2)}.
    \end{equation*}
    We give a lower bound on the probability of this event as follows,
    \begin{align}
        \Pbb[\Fcal^c] &\leq \Ebb\sqb{\sum_{(i,j)\in [n]\times[n]} \1[(i,j)\in S]\paren{\sum_{i'\in[n]\setminus\{i\}}\1[(i',j)\in S] + \sum_{j'\in[n]\setminus\{j\}}\1[(i,j')\in S]}}\notag \\
        &= n^2\cdot 2(n-1) \cdot \epsilon_n^2 \leq 2c^2.\label{eq:bound_proba_F}
    \end{align}
    As a result, $\Fcal$ occurs with reasonable probability of at least $1-2c^2$. 
    
    Now suppose that $\Fcal$ holds and fix $x\in\Delta^n$ such that $V(x)> 1-\frac{1}{2}\epsilon_n$. We aim to derive some properties on the strategy $x$. For convenience, we introduce $a\in[0,1/2)$ such that $V(x)= 1-a\epsilon$. we denote $y=y^*(x)$. By definition of $S$, we have
    \begin{equation*}
        a\epsilon_n\geq 1-x^\top A y \geq \sum_{(i,j)\in S} x_i y_j (1-A_{i,j}) + \epsilon_n \sum_{(i,j)\notin S} x_i y_j \geq \epsilon_n \sum_{(i,j)\notin S} x_i y_j.
    \end{equation*}
    Therefore, 
    \begin{equation*}
        1-a\leq 1-\sum_{(i,j)\notin S} x_iy_j = \Pbb_{i\sim x,j\sim y}[(i,j)\in S]= \Ebb_{j\sim y}[\Pbb_{i\sim x}[(i,j)\in S]] \leq \max_{i\in[n]} x_j.
    \end{equation*}
    In the last inequality, we used the fact that under $\Fcal$, for any $j\in[n]$, there can be at most one row $i\in[n]$ such that $(i,j)\in S$. Symmetrically, we also obtain $1-a\leq \max_{i\in[n]}y_i$. Therefore,  there exists a unique $(i,j)\in S$ such that $x_i,y_j\geq 1-a$.
    Next, since $y$ is the best response of the follower for $x$ and $y_j\geq 1-a>0$, we also have $j\in\argmax_{j'\in[n]} (x^\top B)_{j'}$. Further, note that for any $j'\in[n]\setminus \{j\}$, we have
    \begin{equation*}
        1-(x^\top A)_{j'} = x_i (1-A_{i,j'}) + \sum_{i'\neq i} x_{i'} (1-A_{i',j'}) \overset{(i)}{\geq} (1-a)\epsilon_n > a\epsilon_n= 1-x^\top A y,
    \end{equation*}
    where in $(i)$ we used the fact that since $(i,j)\in S$, under $\Fcal$ we have $A_{i,j'}\notin S$. Hence, for all $j'\neq j$, we have $(x^\top A)_{j'}<x^\top A y$. This shows that $j=\argmax_{j'\in[n]} (x^\top A)_j$ and
    \begin{equation*}
        x^\top A e_j = \sum_{i'\in[n]} x_{i'}A_{i',j} \geq x^\top Ay.
    \end{equation*}
    In summary, we showed the following decomposition:
    \begin{multline}\label{eq:decomposition}
        \set{ V(x^*) > 1-\frac{\epsilon_n}{2}}\cap\Fcal \subseteq \\
        \bigsqcup_{(i,j)\in S} \paren{ \set{\exists x\in \Delta^n, \|x-e_i\|_\infty\leq a,j\in\argmax_{j'\in[n]} (x^\top B)_{j'}, x^\top Ae_j\geq V(x^*)> 1-\frac{\epsilon_n}{2}}\cap\Fcal},
    \end{multline}
    where $\sqcup$ denotes a disjoint union. 
    
    From now, we reason conditionally on $\Ical:=\set{S,\Ecal,\Fcal,\{A_{i,j},(i,j)\in S\}}$. In particular, unless mentioned otherwise, we always assume that $\Ecal$ and $\Fcal$ are satisfied. Under this conditioning, the other entries of $A$ and $B$ are still all independent: the entries $A_{i,j}$ for $(i,j)\notin S$ are i.i.d.\ uniform in $[0,1-\epsilon_n]$ while $B$ has all entries i.i.d.\ uniform in $[0,1]$. We next order $S=\set{(i_s,j_s),s\in[|S|]}$ by decreasing order of $B_{i_s,j_s}$ for $s\in[|S|]$. From the previous discussion, conditionally on $\Ical$, the variables $(B_{i_s,j_s})_{s\in[|S|]}$ are distributed as the order statistics of $|S|$ i.i.d.\ uniforms on $[0,1]$ (in reverse order). We next give bounds on the quantities
    \begin{equation*}
        \Delta_s:=1 - B_{i_s,j_s},\quad s\in[|S|]
    \end{equation*}
    From the previous discussion, for any $s\leq |S|/4$ and $C_0\geq 10$, we have
    \begin{align*}
        \Pbb\sqb{\Delta_s \leq \frac{s}{C_0|S|}\mid\Ical} &= \Pbb_{Y\sim \Binom(|S|,\frac{s}{C_0|S|})}[Y\geq s] \\
        &\overset{(i)}{\leq} e^{-|S|D(\frac{s}{|S|}\parallel \frac{s}{C_0|S|})} \overset{(ii)}{\leq} \frac{1}{(C_0-1)^{s(1-1/C_0)}} \leq (2/C_0)^{s/2}.
    \end{align*}
    where in $(i)$ the last inequality, we used Chernoff's bound and in $(ii)$ we used the identity $D(p+\epsilon\parallel p) \geq \frac{\epsilon}{2}\ln\frac{\epsilon}{p}$ whenever $\epsilon\geq 8p$ and $p+\epsilon\leq 1/4$ (e.g. see Lemma 16 from \cite{blanchard2024tight}). When $s\geq |S|/4$, we can directly use
    \begin{equation*}
        \Pbb\sqb{\Delta_s \leq \frac{s}{2|S|}\mid\Ical}\leq e^{-|S|D(\frac{s}{|S|}\parallel \frac{s}{2|S|})} \leq e^{-c_1|S|}\leq e^{-\frac{cc_1}{2}\sqrt n},
    \end{equation*}
    for some universal constant $c_1>0$. In the last inequality, we used the event $\Ecal$ under which we have $|S|\geq \frac{c}{2}\sqrt n$. We then define the event $\Gcal_s(C_0) = \set{\Delta_s\geq \frac{s}{2C_0 c\sqrt n}}$. Since we have $|S|\leq 2c\sqrt n$ on $\Ecal$, the previous tail bounds imply for any $s\in[|S|]$,
    \begin{equation}\label{eq:bound_proba_G_s}
        \Pbb[\Gcal_s(C_0)\mid \Ical] \geq 1-(2/C_0)^{s/2} - e^{-\frac{cc_1}{2}\sqrt n}.
    \end{equation}

    We next add the variables $B_{i,j}$ for $(i,j)\in S$ to the conditioning $\Jcal:=\Ical\cup \set{B_{i,j},(i,j)\in S}$, which does not affect the distribution of $A_{i,j}$ and $B_{i,j}$ for $(i,j)\notin S$. We treat each term from \cref{eq:decomposition} separately and fix a pair of indices $(i_s,j_s)\in S$ for $s\in[|S|]$. We first introduce the set
    \begin{equation*}
        C_s:= \set{j\in[n]\setminus\{j_s\}: B_{i_s,j} \geq 1 -  \frac{\Delta_s}{3}},
    \end{equation*}
    and the notation $\tilde B:=(B_{i,j}-1/2)_{i,j}$ for the centered version of $B$. We recall that conditionally on $\Jcal$, the entries $B_{i_s,j}$ for $j\neq j_s$ are i.i.d.\ uniform on $[0,1]$ since by the event $\Ecal$, all such columns have $(i_s,j)\notin S$. Hence, for $n$ sufficiently large,
    \begin{align*}
        \Pbb\sqb{ |C_s| \leq \frac{\Delta_s (n-1)}{6}  \mid \Jcal,\Gcal_s(C_0) } &= \Pbb_{Y\sim \Binom(n-1,\Delta_s/3)}\sqb{Y\leq \frac{\Delta_s (n-1)}{6}} \\
        &\overset{(i)}{\leq} \exp\paren{-\frac{\Delta_s(n-1)}{24}} \overset{(ii)}{\leq} e^{-\frac{s\sqrt n}{50C_0c} }.
    \end{align*}
    In $(i)$ we used Chernoff's bound and in $(ii)$ we used the event $\Gcal_s(C_0)$. Hence, for $n$ sufficiently large, the event $\Hcal_s(C_0):=\set{|C_s| \geq \frac{s\sqrt n}{15C_0c}}$ has
    \begin{equation}\label{eq:bound_proba_H_s}
        \Pbb[\Hcal_s(C_0) \mid\Jcal,\Gcal_s(C_0)] \geq 1-e^{-\frac{s\sqrt n}{50C_0c} }.
    \end{equation}

    With the quantities $\Delta_s$ and $C_s$ at hand, we now derive some properties of new-optimal strategies for the leader under the conditioning $\Jcal$ and also assuming that $C_s\neq\emptyset$.
    For any strategy $x\in\Delta_n$, letting $z=e_{i_s}-x$, one has
    \begin{align}
        \{C_s\neq \emptyset\}&\cap \set{j_s\in \argmax_{j\in[n]} (x^\top B)_j} \\
        &\subseteq \set{(x^\top B)_{j_s}-B_{i_s,j_s} \geq \frac{\Delta_s}{3}}\cup \set{\forall j\in C_s, B_{i_s,j} - (x^\top B)_{j} \geq \frac{\Delta_s}{3}} \notag \\
        &\overset{(i)}{=}\set{(x^\top \tilde B)_{j_s}-\tilde B_{i_s,j_s} \geq \frac{\Delta_s}{3}}\cup \set{\forall j\in C_s, \tilde B_{i_s,j} - (x^\top \tilde B)_{j} \geq \frac{\Delta_s}{3}} \notag\\
        &\overset{(ii)}{\subseteq} \set{\|z\|_1\geq \frac{\Delta_s}{3}}\cup \set{\forall j\in C_s, \tilde B_{i_s,j} - (x^\top \tilde B)_{j} \geq \frac{\Delta_s}{3}}. \label{eq:separate_two_cases_1}
    \end{align}
    In $(i)$ we used $x\in\Delta^n$ and in $(ii)$ we used the fact that the entries in $\tilde B$ are bounded by one in absolute value.
    For now, we focus on the second scenario in \cref{eq:separate_two_cases_1}. To do so, let $R_s$ be the set of $l_s:=\ceil{s\sqrt n}$ rows containing the largest values of $(A_{i,j_s})_{i\neq i_s}$. Formally, we order the rows $A_{i(1),j_s}\geq \ldots \geq A_{i(n-1),j_s}$ where $[n]\setminus\{i_s\} = \{i(1),\ldots,i(n-1)\}$, and pose
    \begin{equation*}
        R_s:= \set{i(r),r\in[l_s]}\quad \text{and} \quad \bar R_s:=[n]\setminus(\{i_s\}\cup R_s).
    \end{equation*}
    We also define the quantities $\delta_s(r):=1-A_{i(r),j_s}$ for $r\in[n-1]$ which will be useful later on.
    Note that for any $j\in C_s$, we have $\tilde B_{i_s,j} - (x^\top \tilde B)_j = (z_{R_s}^\top \tilde B)_j + (z_{\bar R_s}^\top \tilde B)_j$, where we used the notation $z_A:=(z_i\1[i\in A])_{i\in[n]}$. Hence, if the second event of \cref{eq:separate_two_cases_1} holds, either $A=R_s$ or $A=\bar R_s$ satisfies that for at least half of the indices $j\in C_s$ one has $(z_A^\top \tilde B)_j \geq \Delta_s/6$. In turn, this implies the following, where we use the notation $M_{A,A'}:=(M_{i,j})_{i\in A,j\in A'}$ for any matrix $M$:
    \begin{align}
        \set{\forall j\in C_s, \tilde B_{i_s,j} - (x^\top \tilde B)_{j} \geq \frac{\Delta_s}{3}} &\subseteq \set{\exists A\in\{R_s,\bar R_s\} : \norm{z_A^\top \tilde B_{\cdot,C_s}}_2 \geq \frac{\Delta_s}{6} \sqrt{\frac{|C_s|}{2}}} \notag \\
        &\subseteq \set{\|z_{R_s}\|_2 \geq \frac{\Delta_s \sqrt{|C_s|}}{10\|\tilde B_{R_s,C_s}^\top\|_{op}}} \cup \set{\norm{z_{\bar R_s}^\top \tilde B}_2 \geq \frac{\Delta_s\sqrt{|C_s|}}{10}  } \notag \\
        &\overset{(i)}{\subseteq} \set{\|z\|_1 \geq \frac{\Delta_s \sqrt{|C_s|}}{5\|\tilde B_{R_s,C_s}^\top\|_{op}}} \cup \set{\|z_{\bar R_s}\|_1 \geq \frac{\Delta_s \sqrt{|C_s|}}{10\|\tilde B^\top\|_{op}}} \label{eq:separate_3_cases}
    \end{align}
    Here, $\norm{M}_{op}$ refers to the operator norm of a matrix $M$. In $(i)$ we used the fact that $\|z\|_1\geq 2\|z_{R_s}\|_1\geq 2\|z_{R_s}\|_2$ since $i_s\notin R_s$.
    To further the bounds, we introduce the following events,
    \begin{align*}
        \Acal&:=\set{\norm{\tilde B^\top}_{op} \leq \alpha  \sqrt n}\\
        \Acal_S&:= \set{\forall s\in[|S|],\norm{\tilde B_{R_s,C_s}^\top}_{op} \leq  \alpha ( \sqrt{|C_s|}+ \sqrt{l_s})},
    \end{align*}
    where $\alpha>0$ is the universal constant appearing in \cref{thm:concentration_op_norm}. From \cref{thm:concentration_op_norm}, we directly have
    \begin{equation}\label{eq:bound_proba_A_}
        \Pbb[\Acal]\leq \alpha e^{-\alpha\beta n}
    \end{equation}
    Next, conditionally on $S$, fix $(i,j)\in S$. Next, note that $S$ and the set of rows $R_s$ is only dependent on the matrix $A$ and in particular $A_{i',j}$ for $i'\in[n]\setminus \{i\}$ for $R_s$, while the corresponding set of columns $C_s$ is only dependent on $B_{i,j'}$ for $j'\in[n]$. To emphasize the dependency in $(i,j)$ only, let us write this set of rows and columns as $R_{(i,j)},C_{(i,j)}$. From the previous discussion, and because $i\notin R_{(i,j)}$, the matrix $B_{R_{(i,j)},C_{(i,j)}}$ is i.i.d. uniform in $[0,1]$, conditionally on $S,(i,j),R_{(i,j)},C_{(i,j)}$. Therefore, \cref{thm:concentration_op_norm} and the union bound also imply
    \begin{equation}\label{eq:bound_proba_A_S}
        \Pbb[\Acal_S\mid S] \leq |S|\alpha e^{-\alpha\beta l_s}\leq n\alpha e^{-\alpha\beta\cdot s\sqrt n}.
    \end{equation}
    To bound the probability of $\Acal_s$, we used the fact that defining $R_s$ and $C_s$ does not involve any of the entries in $B_{R_s,C_s}$ since $i_s\notin R_s$ and $j_s\notin C_s$. Combining \cref{eq:separate_two_cases_1,eq:separate_3_cases}, we obtained
    \begin{multline}\label{eq:main_decomposition_s}
        \set{C_s\neq \emptyset}\cap \Acal\cap\Acal_s \cap \set{j_s\in \argmax_{j\in[n]} (x^\top B)_j} \\
        \subseteq \set{\|z\|_1\geq \frac{\Delta_s}{5\alpha (1+\sqrt{l_s/|C_s|})} } \cup \set{\|z_{\bar R_s}\|_1 \geq \frac{\Delta_s \sqrt{|C_s|}}{10\alpha \sqrt n}} 
    \end{multline}

    Note that
    \begin{equation}\label{eq:lower_bound_perf_1}
        1-x^\top Ae_{j_s} \geq \sum_{i\neq i_s} z_s (1-A_{i,j_s}) \geq \sum_{i\neq i_s} z_s \cdot \min_{i\in[n]\setminus\{i_s\}}(1-A_{i,j_s}) = \frac{1}{2} \|z\|_1 \cdot \delta_s(1).
    \end{equation}
    Similarly, we have
    \begin{equation}\label{eq:lower_bound_perf_2}
        1-x^\top Ae_{j_s} \geq \sum_{i\in \bar R_s} z_s(1-A_{i,j_s} )\geq \|z_{\bar R_s}\|_1 \cdot \delta_s(l_0).
    \end{equation}
    We continue lower bounding the last two equations by introducing the following events
    \begin{align*}
        \Bcal_s(C_0) &:= \set{\delta_s(1) \geq \frac{1}{C_0\cdot sn\log n}},\\
        \Ccal_s(C_0) &:= \set{\delta_s(l_0) \geq \frac{s}{C_0\sqrt n}}.
    \end{align*}
    We recall that conditionally on $\Jcal$, the entries $A_{i,j_s}$ for $i\neq i_s$ are i.i.d.\ uniform on $[0,1-\epsilon_n]$. Therefore, by Markov's inequality,
    \begin{align}
        \Pbb[\Bcal_s(C_0)^c \mid \Jcal] &\leq \Pbb_{Y\sim\Binom(n,\frac{1}{C_0\cdot s n\log n})}[Y\geq 1] \leq \Ebb_{Y\sim\Binom(n,\frac{1}{C_0\cdot s n\log n})}[Y] = \frac{1}{C_0s\log n}. \label{eq:bound_proba_B_s}
    \end{align}
    In the last inequality, we again used the identity $D(p+\epsilon\parallel p)\geq \frac{\epsilon}{2}\ln\frac{\epsilon}{p}$ under appropriate range of the parameters $p,\epsilon$ (see Lemma 16 from \cite{blanchard2024tight}).
    We next turn to the event $\Ccal_s(C_0)$ for which
    \begin{equation}\label{eq:bound_proba_C_s}
        \Pbb[\Ccal_s(C_0)^c\mid\Jcal]  \leq \Pbb_{Y\sim\Binom(n,\frac{s}{C_0\sqrt n})}[Y\geq s\sqrt n]\leq e^{-\frac{1}{2}s\sqrt n}.
    \end{equation}

    Plugging \cref{eq:lower_bound_perf_1,eq:lower_bound_perf_2} within \cref{eq:main_decomposition_s} implies that under $\Acal$, $\Acal_s$, $\Gcal_s(C_0)$, $\Hcal_s(C_0)$ (which implies $C_s\neq\emptyset$), $\Bcal_s(C_0)$ and $\Ccal_s(C_0)$, the following holds. For any strategy $x\in\Delta^n$ with $j_s\in \argmax_{j\in[n]} (x^\top B)_j$,
    \begin{align}\label{eq:upper_bound_value_for_s}
        1-x^\top A e_j &\geq \frac{\Delta_s}{10\alpha} \cdot \min\paren{\frac{ \delta_s(1)}{1+\sqrt{l_s/|C_s|}}, \frac{\sqrt{|C_s|} \cdot \delta_s(l_0)}{\sqrt n}} \notag\\
        &\overset{(i)}{\geq}\frac{\gamma}{C_0^{3/2}}\cdot \frac{s}{\sqrt n}\min\paren{\delta_s(1),\frac{\sqrt s}{n^{1/4}}\delta_s(l_0)} \notag\\
        &\overset{(ii)}{\geq } \frac{\gamma'}{C_0^{5/2}} \min\paren{\frac{1}{n\sqrt n \log n},\frac{s^{5/4}}{n^{1+1/4}}} = \frac{\gamma'}{C_0^{5/2}} \frac{1}{n\sqrt n \log n},
    \end{align}
    for some universal constants $\gamma,\gamma'\geq 1$.
    In $(i)$ we used the events $\Gcal_s(C_0),\Hcal_s(C_0)$ and in $(ii)$ we used the events $\Bcal_s(C_0),\Ccal_s(C_0)$.

    Together with the decomposition from \cref{eq:decomposition}, \cref{eq:upper_bound_value_for_s} shows that
    \begin{equation*}
        \Ecal\cap\Fcal\cap\Acal\cap \bigcap_{s\in[|S|]}(\Gcal_s(C_0)\cap\Hcal_s(C_0)\cap\Bcal_s(C_0)\cap\Ccal_s(C_0)) \subseteq \set{1-V(x^*) \geq \frac{\gamma'}{C_0^{5/2}} \frac{1}{n\sqrt n \log n} }.
    \end{equation*}
    Therefore, we can bound the value of the game $V(x^*)$ in probability as follows. For $n$ sufficiently large, for any $C_0\geq 10$,
    \begin{align*}
        &\Pbb\sqb{1-V(x^*) \leq  \frac{\gamma'}{C_0^{5/2}} \frac{1}{n\sqrt n \log n} } \leq \Pbb[\Ecal^c] + \Pbb[\Fcal^c] + \Pbb[\Acal^c] + \Pbb[\Acal_S^c] + \Ebb_S  \left[\sum_{s\in[|S|]} \right.\\
        &  (\Pbb[\Gcal_s(C_0)^c\mid \Ecal,\Fcal] + \Pbb[\Hcal_s(C_0)^c\mid \Ecal,\Fcal,\Gcal_s(C_0)] + \Pbb[\Bcal_s(C_0)^c\mid \Ecal,\Fcal] + \Pbb[\Ccal_s(C_0)^c\mid \Ecal,\Fcal]  )]\\
        &\overset{(i)}{\leq} 2c^2 +\frac{\alpha_0}{\sqrt C_0} + n\cdot e^{-\alpha_1\sqrt n},
    \end{align*}
    for some universal constants $\alpha_0,\alpha_1>0$.
    In $(i)$ we used \cref{eq:bound_proba_E,eq:bound_proba_F,eq:bound_proba_A_,eq:bound_proba_A_S,eq:bound_proba_G_s,eq:bound_proba_H_s}, as well as $\sum_{s\leq l_0}1/s = \Ocal(\log n)$. We take $c=\frac{1}{2}$ and fix $C_0$ sufficiently large such that the right-hand side of the previous equation is at most $3/4$. In summary, we showed that there exists a universal constant $c_2>0$ such that
    \begin{equation*}
        \Pbb\sqb{1-V(x^*) \geq \frac{4c_2}{n\sqrt n\log n} } \geq \frac{1}{4}.
    \end{equation*}
    As a result, this shows that
    \begin{equation*}
        1-\Ebb[V(x^*)] \geq \frac{c_2}{n\sqrt n\log n},
    \end{equation*}
    ending the proof.
\end{proof}

}

We next turn to the upper bound on the value of the Stackelberg game $\Ebb[V(x^*)]$.

\begin{lemma}\label{lemma:upper_bound}
    There exists a universal constants $c_2>0$ such that for random uniform Stackelberg games,
    \begin{equation*}
        1-\Ebb[V(x^*)] \geq \frac{c_2\sqrt{\log n}}{n\sqrt n}.
    \end{equation*}
\end{lemma}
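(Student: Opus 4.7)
The plan is to set a threshold $\epsilon_n$ chosen so that the matching lower bound $1-V(x^*) \gtrsim \epsilon_n$ emerges, then use the structure of near-optimal strategies together with the randomness in $B$ to force a quantitative gap. Concretely, set $\epsilon_n := C\sqrt{\log n}/n^{3/2}$ for a small universal constant $C$, and let $S := \{(i,j): A_{i,j}\geq 1-\epsilon_n\}$. Since all entries of $A$ are i.i.d.\ uniform on $[0,1]$, $|S|\sim \Binom(n^2,\epsilon_n)$ has mean $Cn^{1/2}\sqrt{\log n}$, so Chernoff pins $|S|$ to this order with high probability. A first-moment bound on pairs shows that with constant probability $S$ contains no two entries in the same row or column; call this clean event $\Fcal$.

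On $\Fcal$, any strategy $x\in\Delta^n$ with $V(x)>1-\epsilon_n/2$, together with its best response $y=y^*(x)$, must place mass at least $1/2$ on $S$, and hence there exists a unique $(i_s,j_s)\in S$ with $x_{i_s},y_{j_s}\geq 1/2$. Moreover, because rows of $A$ cannot contain two entries of $S$ under $\Fcal$, $j_s$ is also the unique maximizer of $(x^\top A)_{j'}$ among $j'\in[n]$, so $(x^\top A)_{j_s}\geq V(x)$. This gives the decomposition
\[
\{V(x^*)>1-\epsilon_n/2\}\cap\Fcal \subseteq \bigcup_{(i_s,j_s)\in S} \Ncal_{(i_s,j_s)},
\]
where $\Ncal_{(i_s,j_s)}$ is the event that some strategy concentrated near $e_{i_s}$ makes $j_s$ a best response against $B$ and yet attains $(x^\top A)_{j_s}> 1-\epsilon_n/2$.

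We then condition on $S$ and on $\{A_{i,j}:(i,j)\in S\}$, so that all remaining entries of $A$ and all of $B$ are still independent. Ordering the pairs of $S$ by decreasing $B_{i_s,j_s}$, order statistics for $|S|$ uniforms give $\Delta_s := 1-B_{i_s,j_s} \asymp s/|S|$ with high probability. For a fixed candidate $(i_s,j_s)$, the competing column set $C_s:=\{j\neq j_s:B_{i_s,j}\geq 1-\Delta_s/3\}$ has size $\asymp \Delta_s n$ by a binomial concentration. Applying a standard operator-norm bound (Tao, \emph{Topics in Random Matrix Theory}, Cor.~2.3.5) to the independent submatrix $B_{R_s,C_s}$ of rows $R_s\not\ni i_s$ shows that making $j_s$ the best response requires the leader to shift at least $\Omega(\Delta_s\sqrt{|C_s|/n})$ mass away from $e_{i_s}$. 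Each unit of mass shifted to a row $i'\neq i_s$ lowers $(x^\top A)_{j_s}$ by $A_{i_s,j_s}-A_{i',j_s}\geq \epsilon_n + \delta_s(l_s)$, where $\delta_s(l_s)\gtrsim l_s/n$ is the $l_s$-th largest value of $1-A_{i',j_s}$ over $i'\neq i_s$ (again by order statistics of uniforms). Multiplying the mass lost by the per-unit cost yields $1-V(x)\gtrsim \Delta_s \sqrt{|C_s|/n}\cdot l_s/n$. Optimizing in $l_s$ and substituting $\Delta_s\asymp s/|S|$, $|S|\asymp n^{1/2}\sqrt{\log n}$, $|C_s|\asymp \Delta_s n$ gives a uniform lower bound of order $\sqrt{\log n}/n^{3/2}$ across all $s\in[|S|]$. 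Taking a union bound over $s$ on the (exponentially small) failure probabilities of the order-statistic and operator-norm events, then taking expectations, yields the claim.

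The main obstacle is the quantitative step: precisely lower-bounding the mass the leader must redistribute to flip the best response to $j_s$, which requires the random-matrix tail bound on an independent submatrix whose dimensions themselves depend on $B$. The factor $\sqrt{\log n}$ emerges from matching the threshold $\epsilon_n$ against the scaling of $|S|$ and the order statistics of the competing $B$ entries, closing the gap with the upper bound of \cref{lemma:lower_bound}.
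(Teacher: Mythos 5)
There is a genuine gap, and it sits at the very first structural step. You take $\epsilon_n \asymp \sqrt{\log n}/n^{3/2}$ (correctly, since that is the scale needed to match \cref{lemma:lower_bound}) and then claim that "with constant probability $S$ contains no two entries in the same row or column." At this threshold the expected number of colliding pairs is $\approx 2n^3\epsilon_n^2 = \Theta(C^2\log n)$, which diverges for every fixed $C>0$; the no-collision event therefore has probability roughly $n^{-\Theta(C^2)}\to 0$, not a constant. Since your entire decomposition $\{V(x^*)>1-\epsilon_n/2\}\cap\Fcal\subseteq\bigcup_{(i,j)\in S}\Ncal_{(i,j)}$ is conditioned on this clean event, the argument collapses: you cannot convert a bound that holds only on a vanishing-probability event into the claimed expectation bound. (A first-moment collision bound does work at the smaller threshold $\epsilon_n\asymp 1/n^{3/2}$, but that threshold can only ever yield a lower bound without the $\sqrt{\log n}$ factor.) The paper's proof confronts exactly this obstruction: it replaces your clean event by the weaker events that at most $\log n$ columns contain two or more entries of $S$ and no column contains more than three, splits the columns into "good" ones $\Gcal$ (exactly one $S$-entry) and "bad" ones $\Bcal$, and disposes of the $O(\log n)$ bad columns by a separate, cruder argument.

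A second, quantitative concern: your core estimate follows the operator-norm route (competing column set $C_s$, Tao's Corollary 2.3.5 applied to $\tilde B_{R_s,C_s}$, then "optimizing in $l_s$"). Carried out carefully, that route loses polylogarithmic factors — the bottleneck is $s=1$, where $\Delta_1\asymp 1/\sqrt{n\log n}$ and the per-unit cost $\delta_1(1)=1-A_{i_1^{(1)},j_1}$ is only $\gtrsim 1/(n\log n)$ after accounting for the $O(1/\log n)$ failure probabilities you must sum over $s$ — and it delivers a bound of order $1/(n^{3/2}\log n)$ rather than $\sqrt{\log n}/n^{3/2}$. The paper's proof gains the missing $(\log n)^{3/2}$ by a different device: for small $s$ it exhibits a \emph{single} competing column $j\notin\Gcal\cup\Bcal$ with $B_{i_s,j}$ within $\Delta_s/4$ of $1$ and $B_{i_s^{(l')},j}\geq B_{i_s^{(l')},j_s}$ simultaneously for the top $l_n\asymp\log n$ rows (using that the product $\prod_{l'\le l_n}(1-B_{i_s^{(l')},j_s})$ concentrates via an Erlang variable), which forces the displaced mass onto rows where $1-A_{i,j_s}\gtrsim (\log n)/n$. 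You would need to supply an analogue of that mechanism; asserting that the optimization over $l_s$ closes the gap is precisely the step that fails in the straightforward operator-norm computation.
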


\begin{proof}
    We will use similar notations as in the proof of the lower bound for $\Ebb[V(x_{sparse})]$ in \cref{lemma:lower_bound}. We define
    \begin{equation*}
        S:=\set{(i,j): A_{i,j} \geq 1-\epsilon_n}\quad \text{where}\quad \epsilon_n:=\frac{c\sqrt{\log n}}{n\sqrt n},
    \end{equation*}
    where $c>0$ is a constant to be fixed.
    As in the proof of \cref{lemma:lower_bound}, the distribution of $S$ corresponds to adding each entry $(i,j)$ independently with probability $\epsilon_n$. Hence, Chernoff's bound implies that
    \begin{equation}\label{eq:bound_proba_E}
        \Pbb\sqb{\abs{|S| - n^2\epsilon_n}>\frac{1}{2}n^2\epsilon_n} \leq 2e^{-\frac{1}{12}n^2\epsilon_n}=2e^{-\frac{c}{12}\sqrt {n}}.
    \end{equation}
    We define the complementary event by $\Ecal:=\{\frac{c}{2}\sqrt {n\log n} \leq |S|\leq \frac{3c}{2}\sqrt {n\log n}\}$.
    Next, we introduce the event that there are at most $\log n$ columns that contain at least $2$ elements of $S$:
    \begin{equation*}
        \Fcal_1:=\set{\abs{\{j\in[n]: \exists i_1\neq i_2\in[n], (i_1,j),(i_2,j)\in S\} } \leq \log n}.
    \end{equation*}
    We have
    \begin{align}\label{eq:bound_event_F_C}
        \Pbb[\Fcal_1^c] \leq \frac{1}{\log n}\sum_{j=1}^n \sum_{i_1\neq i_2\in[n]} \1[(i_1,j),(i_2,j)\in S]\leq \frac{n^3\epsilon^2}{\log n} = c^2.
    \end{align}
    In the first inequality we used Markov's inequality.
    We next define the event in which $S$ does not contain three entries in the same column:
    \begin{equation*}
        \Fcal_2 := \set{\forall j\in[n]: \abs{\set{i\in[n]:(i,j)\in S}}\leq 3}.
    \end{equation*}
    We give a lower bound on the probability of this event as follows,
    \begin{align}\label{eq:bound_proba_F}
        \Pbb[\Fcal_2^c] \leq\Ebb\sqb{\sum_{i\in[n]} 
 \sum_{j_1,j_2,j_3\in[n]\text{ distinct}}\1[(i,j_1),(i,j_2),(i,j_3)\in S] }\leq n^4 \cdot \epsilon_n^3  \leq c^3\frac{\log^2 n}{\sqrt n}.
    \end{align}
    Last, we define the event $\Fcal:=\Fcal_1\cap\Fcal_2$. We suppose that this event is met in the rest of the proof. We define $\Gcal$ the set of good columns which contain exactly one element of $S$ and $\Bcal$ the set of bad columns containing at least $2$ elements of $S$. Under $\Fcal_1$ we have $|\Bcal|\leq \log n$ and all columns of $\Bcal$ contain exactly 2 elements of $S$. For any good column $j\in\Gcal$ we let $i(j)\in[n]$ be the row for which $(i(j),j)\in S$. In the proof, we will treat separately columns $\Gcal$ and $\Bcal$. More precisely, consider the optimal strategy for the leader $x^\star\in \Delta^n$. Up to resolving ties in favor of the leader, there exists a column $j\in[n]$ for which $x^\star A e_j\geq V(x^\star)$. Now note that if $j\notin \Gcal\cup\Bcal$, we have for any $x\in \Delta^n$:
    \begin{equation*}
        1-x Ae_j =\sum_{i\in[n]} x_i (1-A_{i,j}) \geq \epsilon_n,
    \end{equation*}
    since for all $i\in[n]$, $(i,j)\notin S$. In other terms, for any $\eta<\epsilon_n$,
    \begin{equation}\label{eq:decomposition_useful}
        \set{V(x^\star) > 1-\eta} \subseteq \bigcup_{j\in \Gcal\cup\Bcal} \set{\exists x\in\Delta^n, j=\argmax_{j'\in[n]} (x^\top B)_{j'},x^\top A e_j >1-\eta}.
    \end{equation}

    From now, we reason conditionally on $\Ical:=\set{S,\Ecal,\Fcal,\{A_{i,j},(i,j)\in S\}}$. In particular, unless mentioned otherwise, we always assume that $\Ecal$ and $\Fcal$ are satisfied. Under this conditioning, the other entries of $A$ and $B$ are still all independent: the entries $A_{i,j}$ for $(i,j)\notin S$ are i.i.d.\ uniform in $[0,1-\epsilon_n]$ while $B$ has all entries i.i.d.\ uniform in $[0,1]$. Note that conditionally on $\Ical$, the variables $B_{i(j),j}$ for $j\in \Gcal$ are i.i.d.\ uniform on $[0,1]$.
    Then, we order $\Gcal=\{j_s,s\in[|\Gcal|]\}$ by decreasing order of $ B_{i(j_s),j_s}$ for $j_s\in [|\Gcal|]$. For convenience, we write $i_s:=i(j_s)$.
    From the previous discussion, conditionally on $\Ical$, the variables $(B_{i_s,j_s})_{s\in[|\Gcal|]}$ are distributed as the order statistics of $|\Gcal|$ i.i.d.\ uniforms on $[0,1]$ (in reverse order). We next give bounds on the quantities
    \begin{equation*}
        \Delta_s:=1 - B_{i_s,j_s},\quad s\in[|\Gcal|].
    \end{equation*}
    From the previous discussion, for any $s\leq |\Gcal|/4$ and $\alpha\geq 10$, we have
    \begin{align*}
        \Pbb\sqb{\Delta_s \leq \frac{s}{\alpha|\Gcal|}\mid\Ical} &= \Pbb_{Y\sim \Binom(|\Gcal|,\frac{s}{\alpha|\Gcal|})}[Y\geq s] \\
        &\overset{(i)}{\leq} e^{-|\Gcal|D(\frac{s}{|\Gcal|}\parallel \frac{s}{\alpha|\Gcal|})} \overset{(ii)}{\leq} \frac{1}{(\alpha-1)^{s(1-1/\alpha)}} \leq (2/\alpha)^{s/2}.
    \end{align*}
    where in $(i)$ the last inequality, we used Chernoff's bound and in $(ii)$ we used the identity $D(p+\epsilon\parallel p) \geq \frac{\epsilon}{2}\ln\frac{\epsilon}{p}$ whenever $\epsilon\geq 8p$ and $p+\epsilon\leq 1/4$ (e.g. see Lemma 16 from \cite{blanchard2024tight}). When $s\geq |\Gcal|/4$, we can directly use
    \begin{equation*}
        \Pbb\sqb{\Delta_s \leq \frac{s}{2|\Gcal|}\mid\Ical}\leq e^{-|\Gcal|D(\frac{s}{|\Gcal|}\parallel \frac{s}{2|\Gcal|})} \leq e^{-c_1|\Gcal|}\leq e^{-c_1\paren{\frac{c}{2}\sqrt {n\log n} - 3\log n}}\leq e^{-c_2\sqrt n},
    \end{equation*}
    for some universal constants $c_1,c_2>0$. In the last inequality, we used the event $\Ecal$ and $\Fcal$, under which we have $|\Gcal|\geq |S| - 3|\Bcal|\geq \frac{c}{2}\sqrt {n\log n}-3\log n$. We then define the event 
    \begin{equation*}
        \Gcal(\alpha) := \bigcap_{s\in[S|}\set{\Delta_s\geq \frac{s}{2\alpha c\sqrt {n\log n}}}.
    \end{equation*}
    Since we have $|\Gcal|\leq |S|\leq 2c\sqrt {n\log n}$ on $\Ecal$, the previous tail bounds imply for any $s\in[|\Gcal|]$,
    \begin{equation}\label{eq:bound_proba_G_s}
        \Pbb[\Gcal(\alpha)^c\mid \Ical] \geq \sum_{s\geq 1}(2/\alpha)^{s/2} + n e^{-c_2\sqrt n} \leq \frac{c_3}{\sqrt \alpha} + e^{-c_3\sqrt n},
    \end{equation}
    for some universal constant $c_3>0$. We next add the variables $B_{i,j}$ for $(i,j)\in S$ to the conditioning $\Jcal:=\Ical\cup \set{B_{i,j},(i,j)\in S}$, which does not affect the distribution of $A_{i,j}$ and $B_{i,j}$ for $(i,j)\notin S$.

    We now fix $s\in[|\Gcal|]$. We denote by $i_s^{(l)}$ the index $i\in[n]\setminus\{i_s\}$ with the $l$-th largest value of $A_{i,j_s}$. We define for $l\in[n-1]$ the event
    \begin{equation*}
        \Hcal^1_s(\alpha;l):=\set{\exists j\in[n]\setminus(\Gcal\cup\Bcal): 1-B_{i_s,j} \leq \frac{s}{4\alpha c\sqrt{n\log n}} \text{ and } \forall l'\in[l]:B_{i_s^{(l')},j}\geq B_{i_s^{(l')},j_s}}.
    \end{equation*}
    We recall that $|\Gcal\cup\Bcal|\leq |S|\leq 2c\sqrt{n\log n}$ under $\Ecal$. In particular, for $n$ sufficiently large we have $|[n]\setminus (\Gcal\cup\Bcal)|\geq n/2$. Also, conditionally on $\Jcal$, all values of $B_{i,j_s} $ for $i\neq i_s$, and $B_{i,j}$ for $j\notin \Gcal\cup\Bcal$ are i.i.d.\ uniform on $[0,1]$. Hence, we can directly bound the probability of failure of $\Hcal_s^1(\alpha;l)$ as follows:
    \begin{align*}
        \Pbb[\Hcal^1_s(\alpha;l)^c \mid \Jcal] &= \Ebb_{B_{i_s^{(1)},j_s},\ldots,B_{i_s^{(l)},j_s}} \sqb{\paren{1-\frac{s(1-B_{i_s^{(1)},j_s})\ldots (1-B_{i_s^{(l)},j_s})}{4\alpha c\sqrt{n\log n}}}^{n-|\Gcal\cup\Bcal|}}\\
        &\leq \Ebb_{B_{i_s^{(1)},j_s},\ldots,B_{i_s^{(l)},j_s}} \sqb{\exp\paren{-\frac{s(1-B_{i_s^{(1)},j_s})\ldots (1-B_{i_s^{(l)},j_s})}{8\alpha c} \sqrt{\frac{n}{\log n}}}}.
    \end{align*}
    We recall that all the variables $B_{i_s^{(1)},j_s},\ldots,B_{i_s^{(l)},j_s}$ are i.i.d.\ uniform on $[0,1]$. 
    In particular, for $l=1$, we have
    \begin{equation*}
        \Pbb[\Hcal^1_s(\alpha;1)^c \mid \Jcal] \leq \frac{1}{n^2} + 16\alpha c \frac{(\log n)^{3/2}}{s\sqrt n},
    \end{equation*}
    where we distinguished whether $1-B_{i_s',j_s}>16\alpha c (\log n)^{3/2}/(s\sqrt n)$ or not. 
    Next, for $l\geq 2$, $-\log(1-B_{i_s^{(1)},j_s}),\ldots,-\log(1-B_{i_s^{(l)},j_s})$ are i.i.d.\ exponential $\Ecal(1)$ variables. Hence, $(1-B_{i_s^{(1)},j_s})\ldots (1-B_{i_s^{(l)},j_s})$ is distributed as $e^{-Z}$ were $Z\sim\textnormal{Erlang}(l,1)$. Then, letting $l_n:=\floor{\frac{1}{8}\log n}$ and considering whether $Z\leq 2l_n$ or not,
    \begin{align*}
        \Pbb[\Hcal^1_s(\alpha;l_n)^c \mid \Jcal] &\leq \exp\paren{-\frac{e^{-2l_n}\sqrt n}{8\alpha c\sqrt{\log n}}} + \Pbb[Z\geq 2l_n]\\
        &\overset{(i)}{\leq} \exp\paren{-\frac{n^{1/4}}{8\alpha c\sqrt{\log n}}} + 2^{l_n}e^{-2l_n} \leq \exp\paren{-\frac{n^{1/4}}{8\alpha c\sqrt{\log n}}} + \frac{c_4}{n^{c_4}},
    \end{align*}
    for some universal constant $c_4>0$.
    In $(i)$, we used $\Pbb[Z\geq x] = e^{-x} \sum_{n=0}^{l_n-1}\frac{1}{n!}x^n$ for $x\geq 0$. We then define the event
    \begin{equation*}
        \Hcal^1(\alpha) := \bigcap_{s\in[|\Gcal|], s\geq (\log n)^3} \Hcal_s^1(\alpha;1) \cap \bigcap_{s\in[|\Gcal|], s\leq (\log n)^3} \Hcal_s^1(\alpha;l_n).
    \end{equation*}
    From now, we suppose that $\alpha\leq n^{1/8}$. Then, taking the union bound, we obtained
    \begin{equation}\label{eq:bound_event_H}
        \Pbb[\Hcal^1(\alpha)^c\mid\Jcal] \leq \frac{c_5}{n^{c_5}},
    \end{equation}
    for some universal constant $c_5>0$.

    Next, we recall that conditionally on $\Jcal$ all these variables $A_{i,j_s}$ for $i\neq i_s$ are i.i.d.\ and stochastically dominated by the uniform distribution on $[0,1]$. We introduce the following event for $l<n-1$
    \begin{equation*}
        \Hcal_s^2(\alpha;l):=\set{1-A_{i_s^{(l+1)},j_s} \geq \frac{l+1}{\alpha ns^{2/3}} }.
    \end{equation*}
    If $l=1$, we can bound its probability of failure as follows
    \begin{equation*}
        \Pbb[\Hcal_s^2(\alpha;1)^c\mid\Jcal] \leq n^2 \paren{\frac{2}{\alpha n s^{2/3}}}^2 = \frac{4}{\alpha^2 s^{4/3}}.
    \end{equation*}
    In the first inequality, we used the fact that by construction, $A_{i_s^{(2)},j_s}$ is the second largest among $A_{i,j_s}$ for $i\neq i_s$, and hence is stochastically dominated by the second largest value among $n$ i.i.d.\ uniforms on $[0,1]$.
    Similarly, recalling that $l_n=\floor{\frac{1}{8}\log n}$, and $\alpha\geq 10$, we can also bound the probability of failure for $l=l_n$ as follows:
    \begin{align*}\label{eq:bound_proba_h2}
        \Pbb[\Hcal_s^2(\alpha;l_n)^c\mid\Jcal] &\leq \Pbb_{Y\sim \Binom(n-1,\frac{l_n+1}{\alpha n s^{2/3}})}[Y\geq l_n+1] \\
        &\overset{(i)}{\leq} e^{-(n-1)D(\frac{l_n+1}{n-1}\parallel \frac{l_n+1}{\alpha n})}  \leq \paren{\frac{2}{\alpha}}^{\frac{l_n+1}{4}} \leq \frac{c_6}{n^{c_6}},
    \end{align*}
    for some universal constant $c_6>0$. In summary, letting
    \begin{equation*}
        \Hcal^2(\alpha):= \bigcap_{s\in[|\Gcal|], s\geq (\log n)^3} \Hcal_s^2(\alpha;1) \cap \bigcap_{s\in[|\Gcal|], s\leq (\log n)^3} \Hcal_s^2(\alpha;l_n),
    \end{equation*}
    taking the union bound, we also obtained
    \begin{equation}\label{eq:bound_event_H2}
        \Pbb[\Hcal^2(\alpha)^c \mid\Jcal] \leq \frac{c_7}{\alpha^2} + \frac{c_7}{n^{c_7}},
    \end{equation}
    for some universal constant $c_7>0$. With all these events at hand, we are ready to treat each column $j\in\Gcal$ separately. In the following, we suppose that $\Ecal,\Fcal,\Gcal(\alpha),\Hcal^1(\alpha),\Hcal^2(\alpha)$ hold and that $10\leq\alpha \leq n^{1/8}$. We fix $s\in[|\Gcal|]$ and for convenience, we write $l_n(s):=l_n$ if $s\leq (\log n)^3$ and otherwise $l_n(s):=1$. 
    
    Let $x\in\Delta^n$ such that $j_s\in\argmax_{j'\in[n]} (x^\top B)_{j'}$. Let $j\in[n]\setminus(\Gcal\cup\Bcal)$ be the index of a column satisfying the constraints from $\Hcal^1_s(\alpha;l_n(s))$. Then,
    \begin{align*}
        0 \leq (x^\top B)_{j_s} - (x^\top B)_{j} &\overset{(i)}{\leq} - x_{i_s} \frac{s}{2\alpha c\sqrt{n\log n}} + \sum_{i\in[n]\setminus\{i_s,i_s^{(1)},\ldots,i_s^{(l_n(s))}\}} x_i(B_{i,j_s}-B_{i,j})\\
        &\leq - x_{i_s} \frac{s}{2\alpha c\sqrt{n\log n}} + \sum_{i\in[n]\setminus\{i_s,i_s^{(1)},\ldots,i_s^{(l_n(s))}\}} x_i,
    \end{align*}
    where in $(i)$ we used both $\Hcal_s^1(\alpha;l_n(s))$ and the lower bound on $\Delta_s$ from $\Gcal(\alpha)$. Next,
    \begin{align*}
        1-x^\top A e_{j_s} \geq (1- A_{i_s^{(l_n(s)+1)},j_s})\sum_{i\in[n]\setminus\{i_s,i_s^{(1)},\ldots,i_s^{(l_n(s))}\}} x_i  \overset{(i)}{\geq} \frac{s^{1/3}(l_n(s)+1)}{2\alpha^2  c n\sqrt{n\log n}} \cdot x_{i_s}.
    \end{align*}
    In $(i)$ we used the previous equation as well as $\Hcal^2(\alpha;l_n(s))$. We recall that if $s\leq (\log n)^3$, then $l_n(s)+1=l_n+1\geq \frac{1}{8}\log n$; while if $s> (\log n)^3$ we directly have $s^{1/3}(l_n(s)+1)\geq 2 \log n$. In all cases, we obtain
    \begin{equation*}
        1-x^\top A e_{j_s} \geq  \frac{ \sqrt{\log n}}{16 \alpha^2 c n\sqrt{n}} \cdot x_{i_s}
    \end{equation*}

    Also, since for any $i\neq i_s$ one has $(i,j_s)\notin S$, we always have $1-x^\top A e_{j_s} \geq \epsilon_n (1-x_{i_s})$.
    In summary, under $\Ecal,\Fcal,\Gcal(\alpha),\Hcal^1(\alpha),\Hcal^2(\alpha)$,
    \begin{equation}\label{eq:large_s_column_js}
        \forall j\in\Gcal,\quad j\in\argmax_{j'\in[n]} (x^\top B)_{j'} \Rightarrow 1-x^\top A e_{j} \geq \min\paren{\frac{c}{2} , \frac{1}{32\alpha^2 c}} \frac{\sqrt{\log n}}{n\sqrt n}.
    \end{equation}

    \paragraph{Edge case: $j\in\Bcal$.} It only remains to focus on columns in $\Bcal$. We recall that under $\Fcal$, we have $|\Bcal|\leq \log n$, and that compared to columns in $\Gcal$, these contain two elements in $S$ instead of one. Because there are only very few of these columns, we can adapt the previous proof but with looser parameters.  Conditional on $S,\Fcal$, the values of $B_{i,j}$ for all $i,j\in[n]$ are i.i.d.\ uniform on $[0,1]$. In particular, if we define
    \begin{equation*}
        \Hcal^3:=\set{\forall j\in\Bcal,\forall i\in[n],(i,j)\in S: 1-B_{i,j} \geq \frac{1}{\log^2 n}},
    \end{equation*}
    then by the union bound we directly have
    \begin{equation}\label{eq:bound_H3}
        \Pbb[(\Hcal^3)^c\mid S,\Fcal] \leq \frac{2|\Bcal|}{\log^2 n} \leq \frac{2}{\log n}.
    \end{equation}
    Next, we define
    \begin{equation*}
        \Hcal^4:=\set{\forall j\in\Bcal, \exists j'\in [n]\setminus(\Gcal\cup\Bcal): \forall i\in[n], (i,j)\in S:  B_{i,j'} \geq \frac{1+B_{i,j}}{2}}.
    \end{equation*}
    We reason conditionally on $\Ecal,\Fcal,S$, and $\{B_{i,j},(i,j)\in S\}$.
    By the union bound,
    \begin{align}
        \Pbb[(\Hcal^4)^c\mid S,\Ecal,\Fcal] &\leq 
        \Pbb[(\Hcal^3)^c\mid S,\Fcal] + \Pbb[(\Hcal^4)^c\mid S,\Ecal,\Fcal,\Hcal^3] \notag \\
        &\overset{(i)}{\leq} \frac{2}{\log n} + \sum_{j\in\Bcal} \paren{1-\frac{1}{4\log^4 n}}^{n-|\Gcal\cup\Bcal|} \notag \\
        &\overset{(ii)}{\leq} \frac{2}{\log n} + |\Bcal| e^{-\frac{n}{8\log^4 n}} \leq \frac{2}{\log n} + e^{-\frac{n}{8\log^4 n}} \log n. \label{eq:bound_H4}
    \end{align}
    In $(i)$ we used the event $\Hcal^3$ and in $(ii)$ we used the fact that $|\Gcal\cup\Bcal|\leq |S|\leq n/2$ for $n$ sufficiently large.
    Next, we define
    \begin{equation*}
        \Hcal^5:=\set{\forall j\in\Bcal, \forall i\in[n],(i,j)\notin S: 1-A_{i,j} \geq \frac{1}{n\log^2 n}}.
    \end{equation*}
    We recall that conditional on $S,\Fcal$, the values of $A_{i,j}$ for $(i,j)\notin S$ are i.i.d.\ stochastically dominated by a uniform on $[0,1]$. Hence, by the union bound
    \begin{equation}\label{eq:bound_H5}
        \Pbb[(\Hcal^5)^c\mid S,\Fcal] \leq \frac{(n-2)|\Bcal|}{n\log^2 n}\leq \frac{1}{\log n}.
    \end{equation}
    Now suppose that $\Ecal,\Fcal,\Hcal^4,\Hcal^5$ hold. Fix any $j\in\Bcal$ and $x\in\Delta^n$ such that $j\in\argmax_{j'\in[n]} (x^\top B)_{j'}$. Then, let $j'\in[n]\setminus (\Gcal\cup\Bcal)$ corresponding to the event $\Hcal^4$. For convenience, let $i_1\neq i_2$ be the row indices for which $(i_1,j),(i_2,j)\in S$. Then, similarly as in the main case, we have
    \begin{align*}
        0\leq (x^\top B)_{j} - (x^\top B)_{j'} &\overset{(i)}{\leq} - \frac{x_{i_1}+x_{i_2}}{2\log^2 n} + \sum_{i\in[n]\setminus\{i_1,i_2\}} x_i (B_{i,j}-B_{i,j'})\\
        &\leq - \frac{x_{i_1}+x_{i_2}}{2\log^2 n} + \sum_{i\in[n]\setminus\{i_1,i_2\}} x_i.
    \end{align*}
    In $(i)$ we used $\Hcal^3,\Hcal^4$ and the definition of $j'$. On the other hand, since $\Hcal^5$ holds,
    \begin{equation*}
        1-x^\top A e_j \geq \frac{1}{n\log^2 n}\sum_{i\in[n]\setminus\{i_1,i_2\}} x_i \geq \frac{x_{i_1}+x_{i_2}}{2n\log^4 n}.
    \end{equation*}
    As in the main case, we also have $1-x^\top A e_j \geq \epsilon_n(1-x_{i_1}+x_{i_2})$. In summary, under $\Ecal,\Fcal,\Hcal^3,\Hcal^4,\Hcal^5$, we obtained
    \begin{equation}\label{eq:useful_bound_bad_j}
        \forall j\in\Bcal,\quad j\in\argmax_{j'\in[n]} (x^\top B)_{j'} \Rightarrow 1-x^\top A e_{j} \geq \min\paren{\frac{c}{2} , \frac{\sqrt n}{4\log^{4.5} n}} \frac{\sqrt{\log n}}{n\sqrt n}.
    \end{equation}

    \paragraph{Combining bounds together.}
    Combining \cref{eq:large_s_column_js,eq:useful_bound_bad_j} together with the main decomposition from \cref{eq:decomposition_useful}, we showed that under $\Ecal,\Fcal,\Gcal(\alpha),\Hcal^1(\alpha),\Hcal^2(\alpha),\Hcal^3,\Hcal^4,\Hcal^5$, we have for $n$ sufficiently large
    \begin{equation*}
        1-V(x^\star) \geq \min\paren{\frac{c}{2} , \frac{1}{32\alpha^2 c}} \frac{\sqrt{\log n}}{n\sqrt n}.
    \end{equation*}
    On the other hand, combining \cref{eq:bound_proba_E,eq:bound_event_F_C,eq:bound_proba_F,eq:bound_proba_G_s,eq:bound_event_H,eq:bound_event_H2,eq:bound_H3,eq:bound_H4,eq:bound_H5}, we have
    \begin{equation*}
        \Pbb\sqb{\Ecal\cap \Fcal\cap \Gcal(\alpha)\cap \Hcal^1(\alpha)\cap \Hcal^2(\alpha)\cap \Hcal^3\cap \Hcal^4\cap \Hcal^5} \geq 1-c^2-\frac{c_8}{\sqrt\alpha} - \frac{c_8}{\log n},
    \end{equation*}
    for some universal constant $c_8$. As a result, we proved the desired expected bound
    \begin{equation*}
        \Ebb\sqb{1-V(x^\star)} \geq c_9 \frac{\sqrt{\log n}}{n\sqrt n},
    \end{equation*}
    for some universal constant $c_9>0$. As a remark, by adjusting the parameters $c$ and $\alpha$, we can also get bound with high probability.
\end{proof}

Together, \cref{lemma:lower_bound,lemma:upper_bound} prove \cref{thm:full_SSE}.

\subsection{Proof of \cref{thm:security}}
\label{subsec:proof_security}

We start by constructing a candidate solution for the defenders then lower bound its value.

\paragraph{Construction of a solution to the security game.}
We order the schedules by decreasing order of value for the attacker. Formally, we define
\begin{equation*}
    v_i := \max_{t\in S_{i}} u_a^u(t) \quad \text{and} \quad t_i:=\argmax_{t\in S_{i}} u_a^u(t) ,\quad i\in[k] 
\end{equation*}
Then, we order $[k]=\{i(1),\ldots,i(k)\}$ such that
$v_{i(1)} \geq v_{i(2)} \geq \ldots \geq v_{i(k)}$.
Next, we define
\begin{equation*}
    L_{\max}:= \max\set{l\in\{1,\ldots, k\} : \sum_{s=1}^l \frac{v_{i(s)} - v_{i(l)}}{v_i(s)} < R  }.
\end{equation*}
Intuitively, by adjusting their schedules, $L_{\max}$ corresponds to the maximum index $l$ of the schedule $S_{i(l)}$ that the defenders can force the attacker to focus on. The defenders then focus on the schedule with index
\begin{equation*}
    l^*:=\argmax_{l\in[L_{\max}]} u_d^u(t_{i(l)}).
\end{equation*}

We then compute the corresponding probabilities of coverage. We consider the following probabilities of coverage for each schedule:
\begin{equation*}
    p_i:=\begin{cases}
        \frac{v_i - v_{i(l^*)}+\delta}{v_i} , &\text{if } i=i(l),l<l^*\\
        0&\text{otherwise},
    \end{cases}
\end{equation*}
where $\delta>0$ is sufficiently small so that we still have $\sum_{i\in[k]} p_i \leq K$ and $p_i<1$ (provided $v_{i(l^*)}>0$). We recall that this is possible since $l^*\leq L_{\max}$. We then construct a scheduling strategy for the defenders such that each schedule $S_i$ is covered with probability $p_i$. 

For instance, we can proceed as follows. We sampled a uniform distribution $U$ on $[0,1]$. For any $r\in[R]$ let $i_r$ be the index such that
\begin{equation*}
    \sum_{i=1}^{i_r-1}p_i < U+r \leq \sum_{i=1}^{i_r}p_i.
\end{equation*}
For $r=R$, there may not always be such an index, in that case we can arbitrarily choose $i_R=1$.
The final schedule is then to assign defender $r$ to $S_{i(r)}$.

\comment{
Let $j_1$ such that
$\sum_{i=1}^{j_1-1} p_i<1 \leq \sum_{i=1}^{j_1} p_i$ then let $q_1:= \sum_{i=1}^{j_1} p_i-1$. Suppose that we defined $j_1\leq \ldots\leq j_{k'}$ and $q_1,\ldots,q_{i_{k'}}$ for $k'\in[k-1]$. If $k'+1<k$, we next define $j_k$ such that
\begin{equation*}
    p_{j_{k'}}-q_{k'} + \sum_{i=j_{k'}+1}^{j_{k'+1}-1} p_i<1 \leq p_{j_{k'}}-q_{k'} + \sum_{i=j_{k'}+1}^{j_{k'+1}} p_i
\end{equation*}
and let
\begin{equation*}
    q_{k'+1} := p_{j_{k'}}-q_{k'} + \sum_{i=j_{k'}+1}^{j_{k'+1}} p_i - 1.
\end{equation*}
if $k'+1=k$ we can simply pose $j_k=k$ and $q_k=p_k$.

The randomized scheduling strategy works as follows. We start by sampling an index $i_1$ in $[j_1]$ such that $i$ has probability $p_i$ if $i<j_1$ and $q_1$ if $i=j_1$. The first defender is assigned to the schedule $S_{i_1}$. Having sampled $i_1,\ldots,i_{k'}$ for $k'<k$, we then 

Next, we sample an index $i\in[j_1,j_2]$ such t

in which each of the $R$ defenders is assigned to schedule $S_i$ with probability $p_i$ independently from one another. This concludes the construction of the considered solution.
}

\paragraph{Analysis of the proposed solution.}
By construction, each schedule $S_i$ has probability exactly $\min(p_i,1)$ of being covered. Since the utilities $u_a^u(t)$ for $t\in[T]$ are sampled uniformly in $[0,1]$, on an event $\Ecal$ of probability $1$ all elements $u_a^u(t)$ for $t\in[T]$ are distinct. Then, on this event we have for any $l<l^*$ that $v_{i(l)}>v_{i(l^*)}$, and hence we could always choose $\delta>0$ such that $p_i\leq 1$.

Therefore, on $\Ecal$,
the expected value of target $t\in S_{i(l)}$ for $l\in[k]$ for the attacker is
\begin{equation*}
    v(t) = 
    \begin{cases}
        u_a^u(t) \leq v_{i(l)} &\text{if }l\geq l^*\\
        u_a^u(t) (1-p_i) \leq v_i (1-p_i) = v_{i(l^*)}-\delta. & \text{if }l<l^*
    \end{cases}
\end{equation*}
Further, under $\Ecal$, the inequalities are strict whenever $t\neq t_{i(l)}$ and we have $v_{i(l)}<v_{i(l^*)}$ whenever $l>l^*$. In summary, the best-response for the attacker is to attack the target $t_{i(l^*)}$.

Therefore, the value of the defender under $\Ecal$ is
\begin{equation*}
    V^* = u_d^u(t_{i(l^*)}) = \max_{l\in[L_{\max}]} u_d^u(t_{i(l)}),
\end{equation*}
where in the last equality we used the definition of $l^*$. Note that $L_{\max}$ is defined using only the utilities for the adversary $u_a^u(t)$ for $t\in[T]$. Hence, the previous derivation shows that conditionally on $L_{\max}$ and $\Ecal$, $V^*$ is distributed as the maximum of $L_{\max}$ i.i.d.\ uniform on $[-1,0]$ random variables, that is, as $X-1$ where $X\sim \text{Beta}(L_{\max},1)$. Therefore,
\begin{equation}\label{eq:lower_bound_value_security}
    \Ebb[V^*] = \Ebb[\Ebb[V^*\mid L_{\max},\Ecal]\1[\Ecal]] =  -\Ebb_{L_{\max}}\sqb{\frac{1}{L_{\max}+1} },
\end{equation}
where we used the fact that $\Pbb[\Ecal]=1$. It remains to lower bound $L_{\max}$. 
For any $l\in[k]$, we have
\begin{align*}
    \sum_{s=1}^l \frac{v_{i(s)} - v_{i(l)}}{v_i(s)} & \leq l (1-v_{i(l)}).
\end{align*}
Note that by construction, $v_1,\ldots,v_k$ are independent because they are the maximum of the utilities of the attacker on each schedule and these are disjoint. Further, $v_i\sim \text{Beta}(|S_i|,1)$. Hence, for any $i\in[k]$ and $z\in[0,1]$,
\begin{equation*}
    \Pbb[v_i\geq 1-z] = 1-(1-z)^{|S_i|} \geq 1-e^{-z|S_i|}\geq \frac{\min(z|S_i|,1)}{2}.
\end{equation*}
Now fix $l\in[k/4]$ and let $z(l)\geq 0$ such that
\begin{equation*}
    \frac{1}{2}\sum_{i\in[k]} \min(z(l)|S_i|,1) = 2l.
\end{equation*}
Then, Bernstein's inequality implies that
\begin{align*}
    \Pbb[v_{i(l)} < 1-z(l)] &= \Pbb\sqb{\sum_{i\in[k]} \1[v_i \geq 1-z(l)] <l }\\
    &\leq \Pbb\sqb{\sum_{i\in[k]} \1[v_i \geq 1-z(l)] < \frac{1}{2} \sum_{i\in[k]} \Pbb[v_i \geq 1-z(l)]}\\
    &\leq \exp\paren{-\frac{1}{10}\sum_{i\in[k]} \Pbb[v_i \geq 1-z(l)]} \leq e^{-l/5}.
\end{align*}

We recall that $\alpha = \frac{\max_{i\in[k]}|S_i| }{\max_{i\in[k]}|S_i|}$ and $\sum_{i\in[k]}|S_i|=T$, which implies $|S_i| \geq \frac{T}{k\alpha}$ for all $i\in[k]$.
Then, for $l< k/4$,
\begin{equation*}
    4l = \sum_{i\in[k]} \min(z(l)|S_i|,1) \geq k\cdot \min\paren{\frac{z(l)T}{\alpha k},1} = \frac{z(l)T}{\alpha},
\end{equation*}
where in the last inequality, we used the fact that if the minimum was $1$ we would have $4l\geq k$ which contradicts the hypothesis on $l$. In summary, for any $l<k/4$, with probability at least $1-e^{-l/5}$ we obtained
\begin{equation*}
    \sum_{s=1}^l \frac{v_{i(s)} - v_{i(l)}}{v_i(s)}  \leq l  z(l) \leq \frac{4\alpha l^2}{T}.
\end{equation*}
Let $\tilde l:=\min\paren{k/4,\sqrt{\frac{RT}{5\alpha}}}$ $l_{\max}:=\floor{\tilde l}$ and denote by $\Fcal$ the above event for $l=l_{\max}$. Under $\Ecal\cap\Fcal$ we have
\begin{equation*}
    \sum_{s=1}^{l_{\max}} \frac{v_{i(s)} - v_{i(l_{\max})}}{v_i(s)} <R,
\end{equation*}
and as a result $L_{\max} \geq l_{\max}$. Plugging this into \cref{eq:lower_bound_value_security} and recalling that $\Ecal$ has probability one gives
\begin{align*}
    \Ebb[V^*] \geq - \frac{1}{l_{\max}+1} -\Pbb[\Fcal^c] 
 &\geq  - \frac{1}{l_{\max}+1}-e^{-l_{\max}/5} \\
 &\gtrsim -\1[l_{\max}=0] - \frac{\1[l_{\max}\geq 1]}{l_{\max}} \gtrsim -\frac{1}{\tilde l}\gtrsim -\paren{\frac{1}{k} + \sqrt{\frac{\alpha}{RT}}},
\end{align*}
where the notation $\gtrsim$ only hides universal constants. This ends the proof.

\section{Algorithms and Programs}
\subsection{Schedule-Finding Algorithms}
Algorithm~\ref{alg:general-schedules} and the subroutine given by Algorithm~\ref{algo:get_full_path} are used to return valid general  schedules (multiple target coverages per schedule allowed) per defender resource.
\label{subsection:schedule_algos}
\begin{algorithm}
\caption{Find General Schedules}
\label{alg:general-schedules}
\begin{algorithmic}[1]
\Require Graph $G = (V, E)$, start node $h$, number of timesteps $\mathcal{T}$, defense time threshold $\delta$, path cost per edge traveled $c$
\Ensure Set of valid schedules with associated movement costs

\State $S_{\text{simple}} \gets$ \Call{GetSimpleSchedules}{$G, h, \mathcal{T}, \delta$}
\State $D \gets \emptyset$ \Comment{Initialize general schedule list}

\ForAll{$t \in S_{\text{simple}}$}
    \State Add $\{t\}$ to $D$ with corresponding round-trip cost
\EndFor

\Function{Backtrack}{$S, R$}
    \If{$S \neq \emptyset$}
        \State $(p, \text{cost}, \text{steps}) \gets$ \Call{GetFullPath}{$G, h, S, \delta$}
        \If{$\text{cost} \leq T$}
            \State Add $(S, c \cdot \text{steps})$ to $D$
        \EndIf
    \EndIf
    \For{$i = 1$ to $|R|$}
        \State \Call{Backtrack}{$S \cup \{R_i\}, R_{>i}$}
    \EndFor
\EndFunction

\State \Call{Backtrack}{$\emptyset, S_{\text{simple}}$}
\State \Return $D$
\end{algorithmic}
\end{algorithm}

\begin{algorithm}
\label{algo:get_full_path}
\caption{Get Full Path}
\label{algo:get_full_path}
\begin{algorithmic}[1]
\Function{GetFullPath}{$G, h, S, \delta$}
\State Initialize $\text{min\_cost} \gets \infty$, $\text{min\_steps} \gets \infty$, $\text{best\_path} \gets \emptyset$
\ForAll{permutations $\pi$ of $S$}
    \State $p \gets [h]$, $c \gets 0$, $s \gets 0$, $u \gets h$
    \ForAll{$v \in \pi$}
        \State $q \gets$ \Call{ShortestPath}{$G, u, v$}
        \If{$q$ does not exist} \textbf{continue to next permutation}
        \EndIf
        \State Append $q \setminus \{u\}$ to $p$ \Comment{Avoid duplicate nodes}
        \State $c \gets c + |q| - 1$, $s \gets s + |q| - 1$
        \State Append $v$ to $p$ $(\delta - 1)$ times \Comment{Dwell at $v$}
        \State $c \gets c + \delta - 1$
        \State $u \gets v$
    \EndFor
    \State $r \gets$ \Call{ShortestPath}{$G, u, h$}
    \If{$r$ does not exist} \textbf{continue to next permutation}
    \EndIf
    \State Append $r \setminus \{u\}$ to $p$
    \State $c \gets c + |r| - 1$, $s \gets s + |r| - 1$
    \If{$c < \text{min\_cost}$}
        \State $\text{best\_path} \gets p$, $\text{min\_cost} \gets c$, $\text{min\_steps} \gets s$
    \EndIf
\EndFor
\If{$\text{best\_path} = \emptyset$}
    \State \Return $(\text{None}, \infty, \infty)$
\Else
    \State \Return $(\text{best\_path}, \text{min\_cost}, \text{min\_steps})$
\EndIf
\EndFunction
\end{algorithmic}
\end{algorithm}

The simple schedule case is handled via a closed-form condition: for any target $t$ and home base node $h$, if the shortest path $p$ from $h$ to $t$, in a game with defense time threshold $\delta$, satisfies $\mathcal{T} \geq 2|p|+\delta-1$, then target $t$ is considered validly defendable as a singleton schedule, as this condition implies that the game length $\mathcal{T}$ is long enough for a defender resource to travel to it from its home base, wait the required number of steps to meet the defense time threshold, and return to home base.

\subsection{Best Response Oracles}
\label{app:brs}
\subsubsection{Best Response Oracles in Normal-Form Double Oracle}

As part of the double oracle algorithm for solving zero-sum security games, the normal-form defender best response subroutine \textsc{DefenderBRNF} solves a mixed-integer program to determine the optimal joint strategy for the defender, given a fixed distribution over attacker actions. This best response accounts for spatial movement constraints, home base assignments, and a defense time threshold $\delta$.

Let \( G = (V, E) \) be a directed graph over which defender movement is defined, and let \( \mathcal{T} \) denote the number of timesteps in the game. The defender team consists of \( D \) mobile resources, each with a valid home base set \( H_d \subseteq V \). Let \( T \subseteq V \) be the full set of targets, and let \( T_a \subseteq T \) denote the subset of targets currently selected by the attacker. Each \( t \in T_a \) is associated with a target value \( V_t \) and a probability of attack \( P_t \).

We define binary variables \( v_{i,\tau}^{(d)} \in \{0,1\} \) to indicate whether defender \( d \) is at node \( i \) at timestep \( \tau \in \{0, \ldots, \mathcal{T}\} \), and \( g_t \in \{0,1\} \) to indicate whether target \( t \in T_a \) is successfully interdicted. The \textsc{DefenderBRNF} optimization is given by:

\begin{align*}
\max_{v, g} \quad & \sum_{t \in T_a} P_t \cdot V_t \cdot g_t \\
\text{s.t.} \quad 
& \sum_{i \in H_d} v^{(d)}_{i,0} = 1,\quad \sum_{i \in H_d} v^{(d)}_{i,\mathcal{T}} = 1 \quad \forall d \in \{1, \ldots, D\} \tag{Start and end at home base} \\
& v^{(d)}_{i,0} = 0,\quad v^{(d)}_{i,\mathcal{T}} = 0,\quad \forall i \notin H_d,\ \forall d \tag{No start/end outside home base} \\
& v_{i,\tau}^{(d)} \leq \sum_{j \in \mathcal{N}(i)} v_{j,\tau-1}^{(d)}, \quad \forall i \in V,\ \forall \tau \in \{1,\ldots,\mathcal{T}\},\ \forall d \tag{Feasible movements} \\
& \sum_{i \in V} v_{i,\tau}^{(d)} = 1, \quad \forall \tau \in \{0, \ldots, \mathcal{T}\},\ \forall d \tag{Single location per timestep} \\
& \delta \cdot g_t \leq \sum_{\tau=0}^{\mathcal{T}-1} \sum_{d=1}^{D} v_{t,\tau}^{(d)}, \quad \forall t \in T_a \tag{Interdiction threshold} \\
& v_{i,\tau}^{(d)} \in \{0,1\}, \quad g_t \in \{0,1\}
\end{align*}

Here, \( \mathcal{N}(i) \) denotes the in-neighbors of node \( i \) (including self-loops for waiting). Constraints enforce defender path feasibility, ensure one location per timestep, and require sufficient visitation to interdict each target in \( T_a \).

For our Normal-Form Double Oracle algorithm, we define the \textsc{AttackerBRNF} as selecting the \(k\) targets with highest expected attacker value, where utility is discounted by the probability \(q_t\) that target \(t\) is interdicted under the current defender strategy \(x_d\). Formally, the attacker solves:
\[
\textsc{AttackerBRNF}(x_d) = \underset{S \subseteq T,\, |S|=k}{\arg\max} \sum_{t \in S} V_t \cdot (1 - q_t)
\]
where \(T\) is the set of all targets, \(V_t\) is the value for target \(t\), and \(q_t\) is the interdiction probability of \(t\) induced by defender strategy \(x_d\). The attacker selects the top-\(k\) targets with the highest discounted expected utility.

\subsubsection{Best Response Oracles in Schedule-Form Double Oracle}

In schedule-form security games, each defender resource selects a schedule—a subset of targets it can reach and cover—subject to mobility and defense time threshold constraints. Given a fixed mixed strategy of the attacker, the defender's best response selects one schedule per resource to maximize expected utility. Let \(\mathcal{R}\) denote the set of defender resources, \(T\) the set of targets, and \(\mathcal{S}_r \subseteq 2^T\) the set of feasible schedules for defender resource \(r \in \mathcal{R}\). Let \(w_t\) denote the expected utility of defending target \(t\), computed from the current attacker strategy.

We define binary variables \(x_{r,i} \in \{0,1\}\) indicating selection of the \(i^{\text{th}}\) schedule in \(\mathcal{S}_r\), and \(g_t \in \{0,1\}\) indicating whether target \(t\) is covered. The defender best response \textsc{DefenderBRSF} solves:

\begin{align*}
\min_{x, g} \quad & -\sum_{t \in T} w_t \cdot g_t \\
\text{s.t.} \quad 
& \sum_{i=1}^{|\mathcal{S}_r|} x_{r,i} = 1, \quad \forall r \in \mathcal{R} \tag{One schedule per resource} \\
& g_t \leq \sum_{r \in \mathcal{R}} \sum_{i: t \in \mathcal{S}_r[i]} x_{r,i}, \quad \forall t \in T \tag{Coverage implies schedule selection} \\
& x_{r,i} \in \{0,1\}, \quad g_t \in \{0,1\}
\end{align*}

To compute the attacker best response, we assume the attacker selects the single target \(t \in T\) with the highest expected utility under the defender's current mixed strategy. Let \(D_d(j)\) be the probability assigned to defender pure strategy \(j\), and let \(t \in S_j^\tau\) denote coverage of target \(t\) at timestep \(\tau\) under schedule \(j\). The attacker solves:

\[
\textsc{AttackerBRSF}(x_d) = \arg\max_{t \in T} \sum_j D_d(j) \cdot 
\begin{cases}
V^a_{\text{c},t} & \text{if } t \in S_j \\
V^a_{\text{uc},t} & \text{otherwise}
\end{cases}
\]

The attacker computes expected utility using the coverage status of each target across all defender schedules, receiving \(V^a_{\text{c},t}\) if \(t\) is ever covered, and \(V^a_{\text{uc},t}\) otherwise.

All experiments using best response oracles were computed with equilibrium gap tolerance \(\epsilon = 10^{-12}\).

\subsection{Time Constrained Depth First Search}
\label{subsection:tc_dfs}
To enumerate the full set of valid movement actions for a team of mobile units on a graph, we employ a time-constrained depth-first search (DFS). This recursive procedure explores all paths of a fixed length \(\mathcal{T}\), starting from allowed initial nodes and optionally constrained by required end nodes or a return-to-start condition. Waiting behavior is supported by including self-loops when allowed. For multiple units, individual path sets are generated and combined via Cartesian product, then reformatted into time-expanded action matrices for use in normal-form action space construction.

\begin{algorithm}[H]
\caption{GenerateMovingPlayerActions$(G, m, \mathcal{S}, \mathcal{T}, \texttt{wait}, \mathcal{E}, \texttt{return})$}
\label{alg:generate_moving_player_actions}
\begin{algorithmic}[1]
\Require Graph $G = (V, E)$, number of units $m$, start node sets $\mathcal{S} = (S_1, \dots, S_m)$, number of timesteps $\mathcal{T}$, waiting flag \texttt{wait}, end node sets $\mathcal{E} = (E_1, \dots, E_m)$ (optional), return-to-start flag \texttt{return}
\Ensure Set of valid movement actions for $m$ units over $\mathcal{T}$ timesteps

\If{$m = 0$}
    \State \Return $\emptyset$
\EndIf
\If{$\mathcal{S}$ or $\mathcal{E}$ are not length $m$}
    \State \textbf{raise error}
\EndIf
\If{$\mathcal{S}$ is empty}
    \State $\mathcal{S} \gets (V, \dots, V)$
\EndIf
\If{$\mathcal{E}$ is empty}
    \State $\mathcal{E} \gets (V, \dots, V)$
\EndIf
\For{$i = 1$ \textbf{to} $m$}
    \State $\mathcal{P}_i \gets$ \Call{GeneratePaths}{$G, S_i, E_i, \mathcal{T}, \texttt{wait}, \texttt{return}$}
\EndFor
\State $\mathcal{A} \gets$ all Cartesian products of $(\mathcal{P}_1, \dots, \mathcal{P}_m)$
\State Format each joint action in $\mathcal{A}$ into timestep-major form
\State \Return $\mathcal{A}$
\end{algorithmic}
\end{algorithm}

\vspace{1em}

\begin{algorithm}[H]
\caption{GeneratePaths$(G, S, E, \mathcal{T}, \texttt{wait}, \texttt{return})$}
\label{alg:generate_paths}
\begin{algorithmic}[1]
\Require Graph $G = (V, E)$, valid start nodes $S$, end nodes $E$, timesteps $\mathcal{T}$, waiting flag \texttt{wait}, return-to-start flag \texttt{return}
\Ensure All valid paths of length $\mathcal{T}$ from $S$ to $E$

\State Initialize $\texttt{all\_paths} \gets \emptyset$
\ForAll{$s \in S$}
    \State \Call{DFS}{$[s], s$}
\EndFor
\State \Return $\texttt{all\_paths}$

\Function{DFS}{$\texttt{path}, \texttt{origin}$}
    \If{$|\texttt{path}| = \mathcal{T}$}
        \If{$\texttt{path}[-1] \in E$ \textbf{or} \texttt{return} \textbf{and} $\texttt{path}[-1] = \texttt{origin}$}
            \State Add $\texttt{path}$ to $\texttt{all\_paths}$
        \EndIf
        \State \Return
    \EndIf
    \State $v \gets \texttt{path}[-1]$
    \State $\mathcal{N} \gets \texttt{neighbors}(v)$
    \If{\texttt{wait} or $\mathcal{N} = \emptyset$}
        \State \Call{DFS}{$\texttt{path} + [v], \texttt{origin}$}
    \EndIf
    \ForAll{$u \in \mathcal{N}$}
        \State \Call{DFS}{$\texttt{path} + [u], \texttt{origin}$}
    \EndFor
\EndFunction
\end{algorithmic}
\end{algorithm}

\section{Framework Details}
\label{app:framework}
This Appendix details the the the \frameworkname\ game class hierarchy in greater detail than the main text. For exact implementation details, please visit our public Git repository for this project at \url{https://github.com/CoffeeAndConvexity/GUARD}.
\subsection{Graph Game Parameters}
\label{subsection:graph_game_params}

The \textit{Graph Game} class serves as the foundational layer of the \frameworkname\ architecture and is responsible for configuring dynamic multi-agent game environments on graphs. It supports customizable parameters to model general-sum and zero-sum security games with both moving and stationary resources.

\paragraph{Core Parameters.}
The game is initialized with the following attributes:
\begin{itemize}
    \item A directed graph $G = (V, E)$ representing the environment.
    \item A time horizon $\mathcal{T}$ specifying the number of timesteps.
    \item Resource counts for each player: number of moving/stationary attackers and defenders.
    \item Allowed start and end node sets for moving players (e.g., patrol bases or entry points).
    \item Interdiction protocol, which determines how target interdiction occurs, including time threshold and capture radius parameters.
    \item A set of \texttt{Target} objects, each with attacker and defender value attributes.
    \item Additional game constraints: waiting permissions (\texttt{allow\_wait}) and forced return to origin for moving resources (\texttt{force\_return}).
\end{itemize}

\paragraph{Action Space Generation.}
The \texttt{generate\_actions(player)} method constructs a valid joint action space for either the attacker or defender, handling both moving and stationary resources. Moving resource actions are computed using a constrained depth-first search that respects home bases, end nodes, waiting rules, and return constraints (see Appendix~\ref{subsection:tc_dfs}). Stationary attacker actions select from subsets of the target nodes (or include \texttt{None}), while stationary defender actions are generated via Cartesian products of valid placements. Combined action paths are returned in timestep-major form as $\mathcal{T} \times N$ arrays.

\paragraph{Utility Evaluation.}
The \texttt{evaluate\_actions(defender\_action, attacker\_action)} method simulates interactions between defender and attacker paths using the game’s interdiction protocol. Attacker utilities are calculated by summing the values of targets reached without being interdicted. Defender values are represented as the negation of attacker scores in zero-sum settings or separately in general-sum settings. Interdiction for moving attackers is radius-based and computed at each timestep; interdiction for stationary attackers is governed by a defense time threshold requiring $\delta$ timesteps spent by defender resources at attacker-selected targets.

\paragraph{Utility Matrix Construction.}
The \texttt{generate\_utility\_matrix(general\_sum, defender\_step\_cost)} method constructs the full game matrix by evaluating all defender-attacker action pairs. In zero-sum mode, the matrix $U \in \mathbb{R}^{|\mathcal{A}_d| \times |\mathcal{A}_a|}$ is normalized by its maximum absolute value. In general-sum mode, two separate matrices are returned for attacker and defender utilities, with optional penalties applied for defender movement steps.

\paragraph{General-Sum Evaluation.}
The \texttt{evaluate\_actions\_general} method extends utility computation by incorporating defender path costs. This enables richer modeling of path efficiency and asymmetric incentives.

\subsection{Security Game Parameters}
\label{subsection:security_game_params}

The \textit{Security Game} class extends the general \textit{Graph Game} layer to implement a canonical Stackelberg Security Game abstraction. It enforces a fixed role structure with \textbf{stationary attackers} (who each select a single target) and \textbf{moving defenders} (who patrol over a graph from designated home bases). This class is designed for both zero-sum and general-sum formulations and supports both normal-form and schedule-form representations.

\paragraph{Core Assumptions.}
The class initializes with:
\begin{itemize}
    \item Only \textbf{stationary attackers}: no attacker movement paths are generated.
    \item Only \textbf{moving defenders}: all defender units are assigned to start and end nodes (typically enforcing home base return).
    \item An \texttt{InterdictionProtocol} object, which governs coverage rules based on defense time thresholds and graph-based distances.
\end{itemize}

\paragraph{Schedule Form Capabilities.}
In schedule-form mode, each defender is assigned a set of feasible \textit{schedules} (i.e., subsets of targets reachable within the time horizon $\mathcal{T}$ while satisfying interdiction dwell time requirements and optional return constraints). These schedules are generated via a recursive backtracking procedure (see Algorithm~\ref{alg:general-schedules}) and then deduplicated.

The class supports:
\begin{itemize}
    \item Defender schedule enumeration: \texttt{find\_valid\_schedules(start\_node)}.
    \item Cost-aware joint schedule assembly: \texttt{generate\_defender\_actions\_with\_costs}.
    \item Utility matrix generation: \texttt{generate\_schedule\_game\_matrix}, with general-sum support and normalization.
    \item Target utility assignment using attacker and defender multipliers, optionally randomized for experimental baselines.
\end{itemize}

\paragraph{Target Utility Matrix.}
Each target is assigned a utility tuple reflecting covered/uncovered values for each team. These are stored as a $4 \times |T|$ matrix:
\begin{itemize}
    \item Row 0: Defender utility if uncovered
    \item Row 1: Defender utility if covered
    \item Row 2: Attacker utility if covered
    \item Row 3: Attacker utility if uncovered
\end{itemize}
This matrix can be optionally scaled or randomized to reflect asymmetric preferences or experimental perturbations.

\paragraph{Output.}
The final dictionary returned from \texttt{schedule\_form(...)} includes defender schedules, actions, and (if enabled) attacker and defender utility matrices. These outputs are formatted for compatibility with double oracle solvers, matrix-based Stackelberg optimization, or heuristic evaluations.

This abstraction isolates core Stackelberg security-game functionality while remaining modular enough to support domain-specific modeling in higher layers, including target value scaling, defender mobility constraints, and payoff asymmetries.

\subsection{Domain-Specific Game Parameters}
\label{subsection:dsg_params}

The \textit{GreenSecurityGame} and \textit{InfraSecurityGame} classes extend the \textit{SecurityGame} layer to support input data processing and target construction. Both game types ultimately instantiate and return a configured \texttt{SecurityGame} object, but differ in how graphs and targets are derived.

\vspace{0.5em}
\paragraph{Common Parameters.}
Both GSG and ISG classes accept the following arguments during generation:
\begin{itemize}
    \item \texttt{num\_attackers}: Number of stationary attackers (e.g., poachers or saboteurs).
    \item \texttt{num\_defenders}: Number of mobile defender units.
    \item \texttt{home\_base\_assignments}: List of coordinate locations (lat, lon) from which each defender unit may begin/end patrol.
    \item \texttt{num\_timesteps}: Total number of time steps in the game.
    \item \texttt{interdiction\_protocol}: Interdiction rules (e.g., defense time threshold).
    \item \texttt{defense\_time\_threshold}: Number of defender steps at a target required for successful interdiction.
    \item \texttt{generate\_utility\_matrix}: Boolean flag to compute the full normal-form utility matrix.
    \item \texttt{generate\_actions}: Whether to enumerate full defender and attacker action sets.
    \item \texttt{force\_return}: If true, all defender paths must return to their starting location.
    \item \texttt{schedule\_form}: If true, schedule-form representation is used.
    \item \texttt{general\_sum}: Enables general-sum game formulation with asymmetric target values.
    \item \texttt{alpha}: Controls strength of escape-proximity influence on attacker utility.
    \item \texttt{random\_target\_values}: Overwrites target values with random values if true (zero-sum only).
    \item \texttt{randomize\_target\_utility\_matrix}: Randomizes utility entries in the target matrix (for experiments).
\end{itemize}

\vspace{0.5em}
\paragraph{Green Security Games.}
As detailed in the main text, GSGs define their graph using a custom spatial grid over a bounding box. Each cell contains animal tracking data and is scored using either \texttt{density} (raw frequency of observations) or \texttt{centroid} (K-means clustering with cluster weights). Optional escape-line proximity adjustments are applied to model attacker incentives for exit routes. Targets are created using the cell center, population density, and animal value scaling factors.

Additional parameters specific to GSGs include:
\begin{itemize}
    \item \texttt{scoring\_method}: \texttt{``centroid''} or \texttt{``density''}, for scoring animal activity.
    \item \texttt{num\_clusters}: Number of centroid clusters (for centroid mode).
    \item \texttt{num\_rows}, \texttt{num\_columns}: Grid resolution.
    \item \texttt{escape\_line\_points}: Two-point line representing an attacker escape boundary for computing distance-to-escape metrics.
    \item \texttt{attacker\_animal\_value}: Scaling factor for attacker utilities based on species value (functions to normalize payoffs).
    \item \texttt{defender\_animal\_value}: Scaling factor for defender valuation (e.g., tourism, biodiversity - functions to normalize payoffs).
\end{itemize}

\vspace{0.5em}
\paragraph{Infrastructure Security Games.}
As mentioned in the main text, ISGs use OpenStreetMap street networks and power infrastructure feature data to construct real-world graphs. Nodes are street intersections, and targets are mapped from power infrastructure features to the closest node. Target values are computed using population estimates (from Census block data) and infrastructure type multipliers.

Additional parameters specific to ISGs include:
\begin{itemize}
    \item \texttt{infra\_df}: DataFrame of infrastructure features with coordinates and types.
    \item \texttt{block\_gdf}: Census block GeoDataFrame with geometry and population counts.
    \item \texttt{infra\_weights}: Dictionary of weights per infrastructure type (e.g., medical clinic, power\_tower).
    \item \texttt{mode}: Either \texttt{``block''} or \texttt{``radius''} to determine population assignment method.
    \item \texttt{escape\_point}: Coordinate (x, y) used to assign escape proximity-based scaling.
    \item \texttt{population\_scaler}: Exponent applied to log-population terms in score computation.
    \item \texttt{attacker\_feature\_value}, \texttt{defender\_feature\_value}: Scaling weights for general-sum target utilities, serves to normalize payoffs.
\end{itemize}

\vspace{0.5em}
\paragraph{General-Sum and Schedule-Form Options.}
If general-sum mode is enabled, the system allows asymmetric target coverage scaling factors:
\begin{itemize}
    \item \texttt{attacker\_penalty\_factor}: Scales down (positive) gained attacker value when a target is covered by a defender schedule.
    \item \texttt{defender\_penalty\_factor}: Scales down (negative) incurred defender value when a target is covered by a defender schedule.
    \item \texttt{defender\_step\_cost}: Optional cost incurred by defenders per movement step.
\end{itemize}

If schedule-form mode is enabled:
\begin{itemize}
    \item \texttt{simple}: If \texttt{True}, only single-target schedules are considered.
    \item The system generates valid schedules per defender and computes either full utility matrices or reduced representations as needed.
\end{itemize}

\paragraph{Output.}
Each domain-specific game returns a fully parameterized \texttt{SecurityGame} instance, with optional attributes:
\texttt{defender\_actions}, \texttt{attacker\_actions}, utility matrices (\texttt{utility\_matrix}, \texttt{attacker\_utility\_matrix}, \texttt{defender\_utility\_matrix}), and schedule-form dictionaries, depending on input configuration.

\subsection{Overpass Turbo Query for OSM NYC Civil Infrastructure}
\label{subsection:nyc_infra_query}
\begin{verbatim}
[out:json][timeout:25];
(
  // Healthcare
  node["amenity"="hospital"](40.4774, -74.2591, 40.9176, -73.7004);
  way["amenity"="hospital"](40.4774, -74.2591, 40.9176, -73.7004);
  relation["amenity"="hospital"](40.4774, -74.2591, 40.9176, -73.7004);

  node["amenity"="clinic"](40.4774, -74.2591, 40.9176, -73.7004);
  way["amenity"="clinic"](40.4774, -74.2591, 40.9176, -73.7004);
  relation["amenity"="clinic"](40.4774, -74.2591, 40.9176, -73.7004);

  // Education
  node["amenity"="school"](40.4774, -74.2591, 40.9176, -73.7004);
  way["amenity"="school"](40.4774, -74.2591, 40.9176, -73.7004);
  relation["amenity"="school"](40.4774, -74.2591, 40.9176, -73.7004);

  node["amenity"="university"](40.4774, -74.2591, 40.9176, -73.7004);
  way["amenity"="university"](40.4774, -74.2591, 40.9176, -73.7004);
  relation["amenity"="university"](40.4774, -74.2591, 40.9176, -73.7004);

  // Water & Sanitation
  node["man_made"="water_works"](40.4774, -74.2591, 40.9176, -73.7004);
  way["man_made"="water_works"](40.4774, -74.2591, 40.9176, -73.7004);
  relation["man_made"="water_works"](40.4774, -74.2591, 40.9176, -73.7004);

  node["man_made"="wastewater_plant"](40.4774, -74.2591, 40.9176, -73.7004);
  way["man_made"="wastewater_plant"](40.4774, -74.2591, 40.9176, -73.7004);
  relation["man_made"="wastewater_plant"](40.4774, -74.2591, 40.9176, -73.7004);

  // Energy Utilities
  node["power"](40.4774, -74.2591, 40.9176, -73.7004);
  way["power"](40.4774, -74.2591, 40.9176, -73.7004);
  relation["power"](40.4774, -74.2591, 40.9176, -73.7004);

  // Government and Emergency Services
  node["amenity"="fire_station"](40.4774, -74.2591, 40.9176, -73.7004);
  way["amenity"="fire_station"](40.4774, -74.2591, 40.9176, -73.7004);
  relation["amenity"="fire_station"](40.4774, -74.2591, 40.9176, -73.7004);

  node["amenity"="police"](40.4774, -74.2591, 40.9176, -73.7004);
  way["amenity"="police"](40.4774, -74.2591, 40.9176, -73.7004);
  relation["amenity"="police"](40.4774, -74.2591, 40.9176, -73.7004);

  node["amenity"="courthouse"](40.4774, -74.2591, 40.9176, -73.7004);
  way["amenity"="courthouse"](40.4774, -74.2591, 40.9176, -73.7004);
  relation["amenity"="courthouse"](40.4774, -74.2591, 40.9176, -73.7004);

  // Critical Infrastructure
  node["man_made"="bunker_silo"](40.4774, -74.2591, 40.9176, -73.7004);
  way["man_made"="bunker_silo"](40.4774, -74.2591, 40.9176, -73.7004);
  relation["man_made"="bunker_silo"](40.4774, -74.2591, 40.9176, -73.7004);

  // Communications
  node["man_made"="communications_tower"](40.4774, -74.2591, 40.9176, -73.7004);
  way["man_made"="communications_tower"](40.4774, -74.2591, 40.9176, -73.7004);
  relation["man_made"="communications_tower"](40.4774, -74.2591, 40.9176, -73.7004);
);
out body;
>;
out skel qt;
\end{verbatim}
\subsection{Data Licenses and Attribution}
\label{subsection:dla}
All data used for building games with \frameworkname\ (movebank, OSM, Census) do not include personally identifiable information or offensive content. All elephant study data is publicly available from the Movebank website~\cite{movebank_platform} and with creative commons (CC)~\cite{lobeke_elephants} and CC BY-NC 4.0~\cite{getz2025etosha} licenses permitting use without direct consent with correct attribution and non-commercial applications. 

\subsection{Limitations}
\label{subsection:limitations}
The \frameworkname\ framework exhibits several scalability limitations when applied to dense environments or large parameter configurations. In particular, for path-based NFG formulations over highly connected graphs, generating instances with a large number of timesteps (e.g., 11 or more) while requesting explicit computation of action sets and utility matrices can lead to prohibitively slow game construction. At this scale, the number of valid defender paths can exceed $10^6$, rendering the corresponding utility matrices too large for most computation resources. Introducing multiple defender resources further exacerbates this issue, as the defender’s action space grows exponentially with the number of defenders due to the combinatorial nature of path assignments across resources, as illustrated in Figure~\ref{fig:framework}.

While SFG formulations are less sensitive to the number of timesteps in terms of action space growth (actions are defined over defender schedules rather than full paths) the general schedule construction algorithm described in Section~\ref{subsection:schedule_algos} also becomes computationally burdensome under large timesteps or when the number of targets is high. This is due to the need to enumerate all feasible combinations of targets and compute valid patrol tours within the time horizon. Then, once generated, large utility matrices in both NFG and SFG formats can lead to memory limitations or significant slowdowns when passed to game-theoretic solvers that rely on linear programming. Notably, we encountered such issues when attempting to run sparsity experiments on the larger-scale Chinatown ISG instances; resolving these would likely require machines with greater memory capacity or improved parallelization support, especially when using support-bounded MIP solvers.

Additionally, the current definitions of targets and actions can result in degenerate game structures. For instance, if a game instance includes targets that are unreachable by defenders due to resource constraints, home base restrictions, or short time horizons, attackers will trivially exploit those undefended targets in best-response strategies. Such cases are feasible in real-world domains and highlight the need for enhanced constraint modeling.

Finally, extending the framework to support richer real-world constraints—such as assigning defenders to restricted subgraphs or modeling role-specific access—would further improve the realism and applicability of \frameworkname\ to practical security domains.

\subsection{Potential Negative Societal Impacts}
\label{subsection:negative}
While our framework is designed to aid defenders in securing vulnerable assets using realistic data-driven security game instances, it could in principle be misused to simulate or optimize attacker behavior. For instance, adversaries could leverage the framework’s capabilities to test and refine evasion strategies against known defensive patrol patterns in infrastructure or poaching contexts. Additionally, access to detailed domain-specific game instances derived from real-world data could reveal sensitive information and security details for real-world environments.

To mitigate such risks, we introduce several precautions. First, researchers and practitioners using the framework should refrain from publicly releasing sensitive configuration files, particularly those derived from real defense operations or infrastructure schematics (all data we utilize is publicly available and does not compromise any specific assets). Second, future development can incorporate access controls for datasets and models flagged as high-risk. Finally, if usage/development becomes widespread, collaboration with domain experts can ensure responsible deployment and use of the \frameworkname\ framework, particularly in settings involving law enforcement, conservation, or urban infrastructure.

\subsection{Pre-Defined Game Instance Details}
The following are tables outlining parameter specifications of the pre-generated zero sum game instances currently available via import in the \frameworkname\ library or at \url{https://github.com/CoffeeAndConvexity/GUARD} in the /predefined\_games directory as .pkl files.
\label{subsection:predefined_games}

\textbf{Lobéké National Park GSG Instances - Preprocessing shown in Appendix~\ref{subsection:preprocessing}}
\begin{table}[H]
\centering
\begin{tabular}{lcccccccccc}
\toprule
\textbf{Instance} & \textbf{Type} & \textbf{C} & \textbf{Grid} & \textbf{$
\mathcal{T}$} & \textbf{A} & \textbf{D} & \textbf{$\delta$} & \textbf{FR} & \textbf{S} & \textbf{Size} \\
\midrule
lobeke\_nfg\_gsg\_1.pkl & NFG & 12 & $7 \times 7$ & 7 & 1 & 1 & 1 & F & 7 & 9{,}075 $\times$ 12 \\
lobeke\_nfg\_gsg\_2.pkl & NFG & 10 & $7 \times 7$ & 8 & 2 & 1 & 1 & F & 8 & 41{,}479 $\times$ 55 \\
lobeke\_sfg\_gsg\_1.pkl & SFG & 10 & $7 \times 7$ & 7 & 1 & 3 & 2 & T & 9 & 24{,}389 $\times$ 10 \\
lobeke\_sfg\_gsg\_2.pkl & SFG & 12 & $8 \times 8$ & 9 & 1 & 2 & 2 & T & 9 & 1{,}849 $\times$ 12 \\
\bottomrule
\end{tabular}
\caption{Pre-generated GSG instances for Lobéké National Park. C = Num. Elephant Cluster Targets, Grid = Game gridworld dimensions
$\mathcal{T}$ = Timesteps, A = Attackers, D = Defenders, $\delta$ = Defense Time Threshold, FR = Force Return, S = Nash Support, Size = Utility matrix dimensions ($|A_d| \times |A_a|$), T = True, F = False.}
\end{table}

\textbf{Etosha National Park GSG Instances}

The Etosha elephant dataset contains 1,275,235 observations of 8 unique elephants from 2008 to 2012~\cite{getz2025etosha}.

The Etosha National Park bounding box used was: 
\begin{itemize}
    \item $\texttt{lat\_min}=-19.41637$
    \item $\texttt{lat\_max}=-19.06224$
    \item $\texttt{lon\_min}=16.22564$
    \item $\texttt{lon\_max}=16.83427$
\end{itemize}

Home bases for the Etosha National Park GSG were:
\begin{itemize}
    \item Halali camp and tourist area at (-19.03683, 16.47170)
    \item Collection of safari/conservation attraction sites = (-19.31668, 16.87791)
    \item Park outpost on Ondongab road = (-19.20540, 16.19422)
\end{itemize}

Preprocessing for the Etosha dataset (nearly identical to the Lobéké preprocessing in Appendix~\ref{subsection:preprocessing}, and unused in our experiments) can be found in the \frameworkname\ Git repo.

\begin{table}[H]
\centering
\begin{tabular}{lcccccccccc}
\toprule
\textbf{Instance} & \textbf{Type} & \textbf{C} & \textbf{Grid} & \textbf{$
\mathcal{T}$} & \textbf{A} & \textbf{D} & \textbf{$\delta$} & \textbf{FR} & \textbf{S} & \textbf{Size} \\
\midrule
etosha\_nfg\_gsg\_1.pkl & NFG & 8 & $6 \times 6$ & 9 & 1 & 1 & 2 & F & 5 & 32,367 $\times$ 9 \\
etosha\_nfg\_gsg\_2.pkl & NFG & 8 & $6 \times 6$ & 9 & 2 & 1 & 1 & T & 6 & 23,323 $\times$ 36 \\
etosha\_sfg\_gsg\_1.pkl & SFG & 7 & $6 \times 6$ & 8 & 1 & 2 & 2 & T & 7 & 169 $\times$ 7 \\
etosha\_sfg\_gsg\_2.pkl & SFG & 7 & $8 \times 8$ & 10 & 1 & 2 & 1 & T & 6 & 121 $\times$ 7 \\
\bottomrule
\end{tabular}
\caption{Pre-generated GSG instances for Etosha National Park.}
\end{table}

\textbf{Chinatown ISG Instances - Preprocessing shown in Appendix~\ref{subsection:preprocessing}}
\begin{table}[H]
\centering
\begin{tabular}{lcccccccccc}
\toprule
\textbf{Instance} & \textbf{Type} & \textbf{Region} & \textbf{$
\mathcal{T}$} & \textbf{A} & \textbf{D} & \textbf{$\delta$} & \textbf{FR} & \textbf{S} & \textbf{Size} \\
\midrule
chinatown\_nfg\_isg\_1.pkl & NFG & Full & 6 & 2 & 1 & 2 & T & 5 & 3{,}852 $\times$ 276 \\
ne\_chinatown\_sfg\_isg\_1.pkl & SFG & NE & 7 & 1 & 1 & 1 & T & 8 & 50 $\times$ 13 \\
ne\_chinatown\_sfg\_isg\_2.pkl & SFG & NE & 8 & 1 & 1 & 1 & T & 8 & 76 $\times$ 13 \\
nw\_chinatown\_sfg\_isg\_1.pkl & SFG & NW & 8 & 1 & 1 & 1 & T & 4 & 23 $\times$ 8 \\
\bottomrule
\end{tabular}
\caption{Pre-generated ISG instances for Chinatown area of NYC.}
\end{table}
\section{Additional Experiment Information and Results}
\label{app:exps}
In this section we provide extra details on how experiments were carried out including code snippets for preprocessing and game generation, additional parameter settings, and metric values we set to generate the games for the experiments. We also include some extra experiment results that were not reported in the main text. \textbf{Note:} The number of timesteps reported in this appendix and throughout the paper differs from the parameter \texttt{num\_timesteps} in the code, as this variable actually refers to the number of game states, which includes the zeroth state, so $\texttt{num\_timesteps}=\mathcal{T}$ in a code snippet actually refers to a $\mathcal{T}-1$ timestep game.
\subsubsection*{Compute Resources}
\label{subsection:compute}
Experiments were conducted on a Windows 64-bit machine with Intel(R) Core(TM) i9-14900KF, 3.20 GHz, with access to 64GB of RAM. All three classes of experiments maxed out this machine's RAM in the larger parameter settings, so all experiments required the total 64GB. We used Gurobi 10.0.3 (Gurobi Optimization, LLC 2023) for (MI)LPs. 
\subsection{Data Preprocessing and Sample Game Generation}
\label{subsection:preprocessing}

This section provides reference code used to preprocess data and instantiate sample Green and Infrastructure Security Game objects. Each domain requires cleaning real-world input data and converting it into the format required by the domain-specific security game constructor. The following code blocks also contain coordinate bounding boxes used to filter geospatial data to the correct location for experiments.

\paragraph{Green Security Game.}
GSG instances are constructed from elephant GPS collar data collected in Lobéké National Park. The raw data includes multiple collar devices across different elephants and time periods. The following Python code concatenates and filters the raw datasets, extracting only relevant fields:

\begin{lstlisting}[language=Python, caption={Preprocessing raw elephant collar datasets into GSG input.}]
import pandas as pd

# Assume 'datasets' is a list of 6 individual CSVs from Movebank
datasets = [
    pd.read_csv("collar_14118.csv"),
    pd.read_csv("collar_14120.csv"),
    pd.read_csv("collar_39839.csv"),
    pd.read_csv("collar_46179.csv"),
    pd.read_csv("collar_39840.csv"),
    pd.read_csv("collar_47574.csv"),
]

# Columns of interest for location and metadata
columns_to_keep = [
    "event-id", "timestamp", "location-long", "location-lat",
    "individual-local-identifier"
]

# Clean and unify all datasets
cleaned = []
for df in datasets:
    df = df[columns_to_keep].copy()
    df.columns = ["id", "timestamp", "lon", "lat", "animal_id"]
    cleaned.append(df.dropna())

# Concatenate and write to final CSV
lobeke_df = pd.concat(cleaned, ignore_index=True)
# Save optional: lobeke_df.to_csv("lobeke.csv", index=False)
\end{lstlisting}

After preprocessing, the final dataframe \texttt{lobeke\_df} is passed into the game generator. Below is the code for instantiating a GSG instance with centroid-based scoring:

\begin{lstlisting}[language=Python, caption={Instantiating a sample Green Security Game.}]
from security_game.green_security_game import GreenSecurityGame

# Define Lobeke National Park bounding box
coordinate_rectangle = [2.0530, 2.2837, 15.8790, 16.2038]

# Define defender bases and escape line
boulou_camp = (2.2, 15.9)
kabo_djembe = (2.0532, 16.0857)
bomassa = (2.2037, 16.1870)
inner_post = (2.2, 15.98)
sangha_river = [(2.2837, 16.1628), (2.053, 16.0662)]

schedule_form_kwargs = {
    "schedule_form": True,
    "simple": True,
    "attacker_penalty_factor": 5,
    "defender_penalty_factor": 5,
}
general_sum_kwargs = {
    "general_sum": True,
    "attacker_animal_value": 2350,
    "defender_animal_value": 22966,
    "defender_step_cost": 0,
}

# Create and generate game instance
gsg = GreenSecurityGame(
    lobeke_df, coordinate_rectangle, "centroid",
    num_clusters=10, num_rows=7, num_columns=7,
    escape_line_points=sangha_river
)

gsg.generate(
    num_attackers=1,
    num_defenders=2,
    home_base_assignments=[(kabo_djembe, bomassa, inner_post) for _ in range(2)],
    num_timesteps=8,
    defense_time_threshold=1,
    force_return=True,
    generate_utility_matrix=False,
    **schedule_form_kwargs,
    **general_sum_kwargs
)
\end{lstlisting}

\paragraph{Infrastructure Security Game.}
The ISG is built from a geospatial dataset of power, healthcare, water, and communication infrastructure in Manhattan’s Chinatown. The following code preprocesses and flattens infrastructure entries into point representations with standardized types:

\begin{lstlisting}[language=Python, caption={Preprocessing OSM-derived infrastructure data.}]
import geopandas as gpd
import pandas as pd

gdf = gpd.read_file("chinatown_infra.geojson")

# Standardize and extract infrastructure types
infra_columns = ["id", "name", "power", "man_made", "amenity", "generator:method", "generator:source", "geometry"]
gdf = gdf[[col for col in infra_columns if col in gdf.columns]].copy()

# Construct unified 'type' column from available tags
gdf["type"] = gdf["power"].combine_first(gdf["amenity"]).combine_first(gdf["man_made"])

# Flatten nodes and ways into point data
df_nodes = gdf[gdf["id"].str.contains("node")].copy()
df_nodes["x"] = df_nodes.geometry.x
df_nodes["y"] = df_nodes.geometry.y
df_nodes = df_nodes.drop(columns=["geometry"])

df_ways = gdf[gdf["id"].str.contains("way")].copy()
df_ways = df_ways.to_crs("EPSG:32618")
df_ways["centroid"] = df_ways.geometry.centroid
df_ways = df_ways.set_geometry("centroid").to_crs("EPSG:4326")
df_ways["x"] = df_ways.geometry.x
df_ways["y"] = df_ways.geometry.y
df_ways = df_ways.drop(columns=["geometry", "centroid"])

# Combine into one unified DataFrame
df_combined = pd.concat([df_nodes, df_ways], ignore_index=True)[["id", "name", "type", "x", "y"]]
\end{lstlisting}

\begin{lstlisting}[language=Python, caption={Instantiating a sample Infrastructure Security Game.}]
from security_game.infra_security_game import InfraSecurityGame

ny_blocks_gdf = gpd.read_file("tl_2020_36_tabblock20.shp")

# User-specified set of relative feature importances
INFRA_WEIGHTS = {
    "plant": 1.5, "solar_generator": 0.95, "hospital": 1.5,
    "school": 1.25, "water_works": 1.45, "fire_station": 1.3,
    "communications_tower": 1.25, "pole": 0.85, "tower": 1.1,
    # truncated for brevity
}

bbox_downtown_large = (40.7215, 40.710, -73.9935, -74.010)
fifth_precinct = (40.7163, -73.9974)
booking_station = (40.7162, -74.0010)
police_plaza = (40.7124, -74.0017)
troop_nyc = (40.7166, -74.0064)
first_precinct = (40.7204, -74.0070)

schedule_form_kwargs = {
    "schedule_form": True,
    "simple": True,
    "attacker_penalty_factor": 3,
    "defender_penalty_factor": 3
}
general_sum_kwargs = {
    "general_sum": True,
    "attacker_feature_value": 1,
    "defender_feature_value": 100,
    "defender_step_cost": 0,
    "alpha": 0.5
}

# Create and generate ISG instance
isg = InfraSecurityGame(df_combined, ny_blocks_gdf, INFRA_WEIGHTS, bbox=bbox_downtown_large)
isg.generate(
    num_attackers=1,
    num_defenders=3,
    home_base_assignments=[(fifth_precinct, booking_station, troop_nyc, first_precinct, police_plaza) for _ in range(3)],
    num_timesteps=8,
    defense_time_threshold=1,
    force_return=True,
    generate_utility_matrix=False,
    generate_actions=False,
    **schedule_form_kwargs,
    **general_sum_kwargs
)
\end{lstlisting}

\subsection{Sparsity Experiment Parameters}
\label{subsection:sparsity_params}
\paragraph{General Setup.}
All experiments were conducted on Green Security Game instances with the following fixed parameters unless otherwise specified:
\begin{itemize}
    \item \textbf{Number of clusters}: 10
    \item \textbf{Grid dimensions}: $7 \times 7$
    \item \textbf{General-sum disabled}: All games were run as zero-sum games
    \item \textbf{Defense time threshold}: 1 for NFG, 2 for SFG
    \item \textbf{Attacker schedule form coverage scaler penalty factor (SFG only)}: 5
    \item \textbf{Defender schedule form coverage scaler penalty factor (SFG only)}: 5
    \item \textbf{Home base options}: Kabo Djembé \texttt{(2.0532, 16.0857)}, Bomassa \texttt{(2.2037, 16.1871)}, Inner Post \texttt{(2.2000, 15.9800)}
\end{itemize}

\paragraph{NFG Sparsity Experiments.}
\begin{itemize}
    \item \textbf{NFG (7 Timestep)}: 1 attackers, 1 defender, force return disabled, $9{,}075$ defender actions.
    \item \textbf{NFG (8 Timestep)}: 1 attacker, 1 defender, force return disabled, $41{,}479$ defender actions.
\end{itemize}

\paragraph{SFG Sparsity Experiments.}
\begin{itemize}
    \item \textbf{SFG (7 Timestep)}: 1 attacker, 2 defenders, force return enabled, $841$ defender actions.
    \item \textbf{SFG (8 Timestep)}: 1 attacker, 2 defenders, force return enabled, $1{,}024$ defender actions.
    \item \textbf{SFG (9 Timestep)}: 1 attacker, 2 defenders, force return enabled, $2{,}401$ defender actions.
\end{itemize}

\paragraph{Runtimes.}

The total runtime for each sparsity experiment is calculated as the sum of the Nash equilibrium solver time and the cumulative runtime of all MIP-based defender best responses during the double oracle iterations. These values are reported in seconds:

\begin{itemize}
    \item \textbf{NFG, 7 Timesteps}: Total runtime = 13.60s \\
    \textit{(Nash: 1.03s, MIP: 12.56s)}
    
    \item \textbf{NFG, 8 Timesteps}: Total runtime = 741.13s \\
    \textit{(Nash: 4.06s, MIP: 737.06s)}

    \item \textbf{SFG, 7 Timesteps}: Total runtime = 2.14s \\
    \textit{(Nash: 0.10s, MIP: 2.04s)}
    
    \item \textbf{SFG, 8 Timesteps}: Total runtime = 2.74s \\
    \textit{(Nash: 0.18s, MIP: 2.56s)}
    
    \item \textbf{SFG, 9 Timesteps}: Total runtime = 8.49s \\
    \textit{(Nash: 0.29s, MIP: 8.21s)}
\end{itemize}

\subsection{Iterative Experiment Parameters}
\label{subsection:iterative_params}
\paragraph{Experiment Parameters.}
All experiments were conducted on zero-sum Security Game instances using the following fixed parameters unless otherwise specified:
\begin{itemize}
\item \textbf{General-sum disabled}: All games were run as zero-sum games.
\item \textbf{Attacker coverage penalty (SFG only)}: 5 (GSG), 3 (ISG).
\item \textbf{Defender coverage penalty (SFG only)}: 5 (GSG), 3 (ISG).
\end{itemize}

\paragraph{GSG NFG Experiment.}
\begin{itemize}
\item \textbf{Game Type}: Normal-form
\item \textbf{Number of clusters}: 10
\item \textbf{Grid dimensions}: $7 \times 7$
\item \textbf{Number of attackers}: 1
\item \textbf{Number of defenders}: 1
\item \textbf{Timesteps}: 7
\item \textbf{Defense time threshold}: 1
\item \textbf{Force return}: Disabled
\item \textbf{Home base options}: Kabo Djembé \texttt{(2.0532, 16.0857)}, Bomassa \texttt{(2.2037, 16.1871)}, Inner Post \texttt{(2.2000, 15.9800)}
\end{itemize}

\paragraph{GSG SFG Experiment.}
\begin{itemize}
\item \textbf{Game Type}: Schedule-form (general schedules)
\item \textbf{Number of clusters}: 10
\item \textbf{Grid dimensions}: $7 \times 7$
\item \textbf{Number of attackers}: 1
\item \textbf{Number of defenders}: 2
\item \textbf{Timesteps}: 7
\item \textbf{Defense time threshold}: 1
\item \textbf{Force return}: Enabled
\item \textbf{Home base options}: Same as GSG NFG
\end{itemize}

\paragraph{ISG NFG Experiment.}
\begin{itemize}
\item \textbf{Game Type}: Normal-form
\item \textbf{Number of attackers}: 1
\item \textbf{Number of defenders}: 1
\item \textbf{Timesteps}: 7
\item \textbf{Defense time threshold}: 1
\item \textbf{Force return}: Enabled
\item \textbf{Home base options}: First Precinct \texttt{(40.7204, -74.0070)}, Fifth Precinct \texttt{(40.7163, -73.9974)}, Police Plaza \texttt{(40.7124, -74.0017)}, Troop NYC \texttt{(40.7166, -74.0064)}, Booking Station \texttt{(40.7162, -74.0010)}
\end{itemize}

\paragraph{ISG SFG Experiment.}
\begin{itemize}
\item \textbf{Game Type}: Schedule-form (general schedules)
\item \textbf{Number of attackers}: 1
\item \textbf{Number of defenders}: 2
\item \textbf{Timesteps}: 7
\item \textbf{Defense time threshold}: 1
\item \textbf{Force return}: Enabled
\item \textbf{Home base options}: Same as ISG NFG
\end{itemize}

\paragraph{Regret Matching Algorithm Settings.}
Each variant of regret matching was parameterized as follows:
\begin{itemize}
\item \textbf{RM} (vanilla regret matching): \texttt{runtime=120}, \texttt{interval=5}, \texttt{iterations=10000}, \texttt{averaging=0}, \texttt{alternations=False}, \texttt{plus=False}, \texttt{predictive=False}
\item \textbf{RM+} (regret matching + with linear averaging and alternations): \texttt{runtime=120}, \texttt{interval=5}, \texttt{iterations=10000}, \texttt{averaging=1}, \texttt{alternations=True}, \texttt{plus=True}, \texttt{predictive=False}
\item \textbf{PRM+} (predictive regret matching + with quadratic averaging and alternations): \texttt{runtime=120}, \texttt{interval=5}, \texttt{iterations=10000}, \texttt{averaging=2}, \texttt{alternations=True}, \texttt{plus=True}, \texttt{predictive=True}
\end{itemize}
\subsection{Stackelberg Experiment Parameters}
\label{subsection:sse_params}
\paragraph{Stackelberg Setup.}
All Stackelberg Security Game experiments were run as general-sum schedule-form games. The following common settings were used unless otherwise noted:
\begin{itemize}
\item \textbf{Schedule form}: Enabled
\item \textbf{General-sum}: Enabled
\item \textbf{Timesteps}: 7
\item \textbf{Defense time threshold}: 1
\item \textbf{Force return}: Enabled
\item \textbf{Escape point logic}: Used to scale attacker values based on proximity (via \texttt{alpha} parameter)
\end{itemize}

\paragraph{GSG SSE Experiments.}
All GSG Stackelberg experiments used the same map: 10 centroid-scored clusters on a $7\times7$ grid, with the Sangha River (\texttt{(2.2837, 16.1628)} to \texttt{(2.053, 16.0662)}) as the escape line. Defenders began at the same set of three home bases used in all GSG experiments.
\begin{itemize}
\item \textbf{Simple Schedules}: Enabled
\item \textbf{Number of attackers}: 1
\item \textbf{Number of defenders}: 2
\item \textbf{Attacker coverage scaler}: 5
\item \textbf{Defender coverage scaler}: 5
\item \textbf{Attacker target value}: 2350
\item \textbf{Defender target value}: 22966
\item \textbf{Defender step cost}: 0
\item \textbf{Escape proximity scaling factor}: $\alpha = 1$
\end{itemize}

\textbf{General SSE GSG Setting:}
Same as above, but:
\begin{itemize}
\item \textbf{Simple Schedules}: Disabled
\item \textbf{Defender step cost}: 1.17 (realistic labor, fuel, and equipment cost per km)
\end{itemize}

\paragraph{ISG SSE Experiments.}
All ISG Stackelberg experiments used the same Chinatown-area power grid graph with a fixed escape point at Brooklyn Bridge (\texttt{40.7124, -74.0049}). Defenders were allowed to start at any of the same 5 precinct locations used in all ISG experiments.

\begin{itemize}
\item \textbf{Simple Schedules}: Enabled
\item \textbf{Number of attackers}: 1
\item \textbf{Number of defenders}: 3
\item \textbf{Attacker coverage scaler}: 3
\item \textbf{Defender coverage scaler}: 3
\item \textbf{Attacker target value}: 1
\item \textbf{Defender target value}: 100
\item \textbf{Escape proximity scaling factor}: $\alpha = 0.5$
\end{itemize}

\textbf{General SSE ISG Setting:}
Same as above, but:
\begin{itemize}
\item \textbf{Defender step cost}: 1 (realistic patrol cost in USD per Manhattan block)
\end{itemize}

\paragraph{Step Cost Estimation.}
We computed defender step costs based on realistic patrol expenses in each setting \cite{tnrc2020corruption,cna2024fuel,jeep2024mpg,nccrimlaw_fuel_efficiency,naidoo2019evaluating}:

\begin{itemize}
\item \textbf{GSG Step Cost}: Estimated at \$1.17 per kilometer. This combines ranger labor (~\$0.65/km, based on \$345/month salaries, team size 5, and 15 km/h patrol speed) with equipment and supply costs (totaling ~\$0.29/km) and fuel costs (~\$0.17/km). Resulting in roughly \$1.17/km.
\item \textbf{ISG Step Cost}: Estimated at \$1.00 per Manhattan block. Based on NYPD officer pay (\$33/hour), typical patrol speed (5 mph = 0.01 hr/block), and average patrol team size of 2, labor alone costs about \$0.70/block. Adding estimated fuel (\$0.02/block) and equipment costs (\$0.15/block), and considering that some patrols have more than 2 officers, the total was roughly \$1 per block. 

\end{itemize}

\paragraph{Elephant Values in USD.}
Defender elephant value of \$22,966 reflects the one year eco-tourism value of an elephant. The attacker value is computed as an elephant's 2 tusks $\times$ 100 pound tusk average $\times$ \$26/kg on the Cameroonian black-market is roughly \$2,350 ~\cite{seaworld_elephants, duffy2022tourism}. These values serve to normalize the payoffs. Because they are constant scalers on target values, they do not tangibly impact the equilibria. However, the other general sum factors like path costs and escape proximities do have material impact on the equilibria as they decouple target values nonlinearly.

\subsection{Schedule Randomization}
\label{subsection:schedule_randomization}
Schedules were randomized by assigning each defender the same number of schedules as in the real instance, with each schedule containing as many randomly selected targets equal to the ceil of the average real schedule length.

\subsection{Additional Experiment Results}
Below are additional iterative and Stackelberg experiment results from our realistic security game instances, illustrating the strategic complexity and large relative support sizes characteristic of settings with enhanced realism through the \frameworkname\ framework. All home base, scaling factors, and general sum (for Stackelberg) parameter settings are equal to those used in the experiments in the main text and detailed in Appendices~\ref{subsection:iterative_params} and~\ref{subsection:sse_params}.
\begin{table}[H]
\centering
\begin{tabular}{lcccccccccccc}
\toprule
\textbf{Form} & \textbf{C} & \textbf{Grid} & $\mathcal{T}$ & \textbf{A} & \textbf{D} & $\delta$ & \textbf{FR} & \textbf{S} & \textbf{Time (s)} & \textbf{DO Size} & \textbf{Iters} \\
\midrule
NFG & 12 & $10 \times 10$ & 10 & 1 & 3 & 1 & F & 12 & 98.25 & $45 \times 12$ & 46 \\
NFG & 12 & $9 \times 9$ & 10 & 3 & 3 & 1 & F & 12 & 40.79 & $37 \times 41$ & 41 \\
NFG & 12 & $8 \times 8$ & 10 & 2 & 3 & 1 & T & 11 & 72.71 & $32 \times 31$ & 33 \\
NFG & 11 & $8 \times 8$ & 9 & 2 & 3 & 2 & T & 11 & 32.93 & $27 \times 27$ & 28 \\
NFG & 10 & $7 \times 7$ & 8 & 2 & 3 & 2 & F & 10 & 23.07 & $28 \times 27$ & 29 \\
SFG & 12 & $8 \times 8$ & 9 & 1 & 2 & 2 & T & 9 & 0.11 & $31 \times 10$ & 35 \\
SFG & 10 & $10 \times 10$ & 7 & 1 & 3 & 2 & T & 9 & 0.17 & $23 \times 9$ & 24 \\
\bottomrule
\end{tabular}
\caption{\textbf{Results from GSG iterative double oracle experiments}. Columns: C = Clusters, $\mathcal{T}$ = Timesteps, A = Attackers, D = Defenders, $\delta$ = Defense Time Threshold, FR = Force Return, S = Double Oracle Support, DO Size = Final action space dimensions of the DO subgame ($|A_d| \times |A_a|$), Iters = Iterations to Converge.}
\end{table}

\vspace{1em}

\begin{table}[H]
\centering
\begin{tabular}{lcccccccccc}
\toprule
\textbf{Form} & $\mathcal{T}$ & \textbf{A} & \textbf{D} & $\delta$ & \textbf{FR} & \textbf{S} & \textbf{Time (s)} & \textbf{DO Size} & \textbf{Iters} \\
\midrule
NFG & 9 & 3 & 3 & 2 & F & 16 & 671.05 & $27 \times 33$ & 33 \\
NFG & 10 & 2 & 3 & 2 & F & 15 & 1661.28 & $31 \times 32$ & 32 \\
NFG & 9 & 3 & 3 & 1 & F & 15 & 2303.54 & $41 \times 43$ & 43 \\
NFG & 10 & 1 & 3 & 2 & F & 14 & 1494.30 & $47 \times 17$ & 47 \\
NFG & 8 & 2 & 3 & 1 & F & 14 & 162.24 & $27 \times 28$ & 28 \\
SFG & 9 & 3 & 2 & 1 & T & 12 & 0.28 & $37 \times 13$ & 36 \\
SFG & 6 & 3 & 1 & 1 & T & 10 & 0.16 & $28 \times 12$ & 28 \\
\bottomrule
\end{tabular}
\caption{\textbf{Results from ISG iterative double oracle experiments.}}
\end{table}

\begin{table}[H]
\centering
\begin{tabular}{lcccccccccc}
\toprule
\textbf{Form} & \textbf{C} & \textbf{Grid} & $\mathcal{T}$ & \textbf{A} & \textbf{D} & $\delta$ & \textbf{FR} & \textbf{S} & \textbf{Time (s)} & \textbf{Def. Utility} \\
\midrule
SFG $\to$ NFG & 10 & $7 \times 7$ & 9 & 1 & 3 & 2 & T & 9 & 23.62 & -0.133 \\
SFG $\to$ NFG & 10 & $8 \times 8$ & 8 & 1 & 3 & 1 & T & 9 & 5.08 & -0.150 \\
SFG $\to$ NFG & 10 & $10 \times 10$ & 7 & 1 & 2 & 1 & T & 9 & 0.19 & -0.230 \\
SFG $\to$ NFG & 10 & $7 \times 7$ & 9 & 1 & 1 & 1 & T & 8 & 0.03 & -0.285 \\
SFG $\to$ NFG & 8 & $8 \times 8$ & 9 & 1 & 2 & 2 & T & 8 & 0.06 & -0.204 \\
\bottomrule
\end{tabular}
\caption{\textbf{SSE experiment results for GSGs (general schedules, general sum, multiple LP SSE solver).} SFG $\to$ NFG indicates normal form matrix expansion. Columns: C = Clusters, $\mathcal{T}$ = Timesteps, A = Attackers, D = Defenders, $\delta$ = Defense Time Threshold, FR = Force Return, S = Support Size, Def. Utility = Defender max utility at equilibrium.}
\end{table}

\begin{table}[H]
\centering
\begin{tabular}{lcccccccc}
\toprule
\textbf{Form} & $\mathcal{T}$ & \textbf{A} & \textbf{D} & $\delta$ & \textbf{FR} & \textbf{S} & \textbf{Time (s)} & \textbf{Def. Utility} \\
\midrule
SFG $\to$ NFG & 7 & 1 & 3 & 1 & T & 14 & 838.88 & -0.408 \\
SFG $\to$ NFG & 9 & 1 & 3 & 2 & T & 13 & 762.70 & -0.408 \\
SFG $\to$ NFG & 8 & 1 & 3 & 1 & T & 13 & 2856.51 & -0.404 \\
SFG $\to$ NFG & 7 & 1 & 3 & 2 & T & 12 & 113.75 & -0.422 \\
SFG $\to$ NFG & 6 & 1 & 3 & 1 & T & 10 & 224.72 & -0.421 \\
\bottomrule
\end{tabular}
\caption{\textbf{SSE experiment results for ISGs (general schedules, general sum, multiple LP SSE solver).}}
\end{table}

\subsection{Infra Weights (for ISGs)}
\begin{lstlisting}[language=Python, caption={Infrastructure Security Game feature base weights, manually assigned based on qualitative assessments of each feature's relative criticality. While the default values serve as reasonable approximations, the framework is designed to allow users to adjust the weights they use to reflect application/domain-specific valuations.
}]
INFRA_WEIGHTS = {
    # Power Infrastructure
    "plant": 1.5,
    "generator": 1.35,
    "solar_generator": 0.95,
    "substation": 1.45,
    "transformer": 1.25,
    "tower": 1.1,
    "pole": 0.85,
    "line": 1.0,
    "minor_line": 0.9,
    "cable": 0.95,
    "switchgear": 1.2,
    "busbar": 0.8,
    "bay": 0.85,
    "converter": 1.05,
    "insulator": 0.75,
    "portal": 0.75,
    "connection": 0.7,
    "compensator": 1.0,
    "rectifier": 0.95,
    "inverter": 0.95,
    "storage": 0.9,

    # Healthcare
    "hospital": 1.5,
    "clinic": 1.35,

    # Education
    "school": 1.25,
    "university": 1.4,

    # Water & Sanitation
    "water_works": 1.45,
    "wastewater_plant": 1.4,

    # Government & Emergency Services
    "fire_station": 1.3,
    "police": 1.4,
    "courthouse": 1.2,

    # Critical Infrastructure
    "bunker_silo": 1.0,

    # Communications
    "communications_tower": 1.25,
}

\end{lstlisting}

\end{document}